    \title{Parameterized Complexity of Temporal Connected Components: Treewidth and \paraPaths-Path Graphs}
\titlerunning{Parameterized Complexity of Temporal Connected Components} 
\author{Argyrios Deligkas}
    {Royal Holloway, University of London, Egham, United Kingdom \and \url{https://sites.google.com/view/deligkas/home}}
    {argyrios.deligkas@rhul.ac.uk}
    {https://orcid.org/0000-0002-6513-6748}
    {EPSRC Grant EP/X039862/1 ``NAfANE: New Approaches for Approximate Nash Equilibria''}
    \author{Michelle D\"oring}
    {Hasso Plattner Institute, University of Potsdam, Potsdam, Germany \and \url{https://www.notion.so/Michelle-D-ring-1dd43d6e8b7c800eadbdd32d73e21b72?pvs=25}}
    {michelle.doering@hpi.de}
    {https://orcid.org/0000-0001-7737-3903}
    {German Federal Ministry for Education and Research (BMBF) through the project ``KI Servicezentrum Berlin Brandenburg'' (01IS22092)}
    \author{Eduard Eiben}
    {Royal Holloway, University of London, Egham, United Kingdom \and \url{https://pure.royalholloway.ac.uk/en/persons/eduard-eiben}}
    {eduard.eiben@rhul.ac.uk}
    {https://orcid.org/0000-0003-2628-3435}
    {}
    \author{Tiger-Lily Goldsmith}
    {Royal Holloway, University of London, Egham, United Kingdom}
    {tigerlily.goldsmith@gmail.com}
    {https://orcid.org/0000-0003-0458-6267}
    {}
    \author{George Skretas}
    {Hasso Plattner Institute, University of Potsdam, Potsdam, Germany}
    {georgios.skretas@hpi.de}
    {https://orcid.org/0000-0003-2514-8004}
    {}
    \author{Georg Tennigkeit}
    {Hasso Plattner Institute, University of Potsdam, Potsdam, Germany}
    {georg.tennigkeit@hpi.de}
    {https://orcid.org/0000-0003-0734-0684}
    {HPI Research School on Data Science and Engineering}
\authorrunning{A. Deligkas, M. Döring, E. Eiben, T. Goldsmith, G. Skretas and G. Tennigkeit} 
\keywords{temporal graphs, treewidth, exact edge-cover, temporal path number, path graph, train system, parameterized complexity, temporal connected components} 
\newtheorem{result}{Result}
\definecolor{darkgray}{gray}{0.25}
    \newenvironment{construction*}{%
        \par\vspace{0.25\baselineskip}
        \pushQED{\qed}%
        \noindent\textcolor{darkgray}{$\triangleright$}\;%
        \noindent\textbf{\textcolor{darkgray}{Construction~\theconstruction.}}\ %
    }{%
      \popQED\par\vspace{0.25\baselineskip}
    }
\theoremstyle{remark}
\newtheorem*{notation}{Notation}
\declaretheoremstyle[
  headfont=\bfseries,
  headformat=$\lozenge$\ \NAME\NOTE, 
  headpunct={.},
  notefont=\bfseries,
  bodyfont=\normalfont,
  spaceabove=6pt, spacebelow=6pt
]{cleancons}
\theoremstyle{cleancons}
\declaretheorem[
  numbered=no,
  name=Construction,
  qed={$\lozenge$}
]{leconstruction}
\newcommand{\ie}{i.\,e.,\xspace}
\newcommand{\hcal}{\ensuremath{\mathcal{H}}\xspace}
\newcommand{\pcal}{\ensuremath{\mathcal{P}}\xspace}
\newcommand{\gcal}{\ensuremath{\mathcal{G}}\xspace}
\newcommand{\ecal}{\ensuremath{\mathcal{E}}\xspace}
\newcommand{\tuple}[1]{\ensuremath{\langle {#1} \rangle}\xspace}
\newcommand{\smallparagraph}[1]{\vspace{0.3em}\noindent\emph{#1}\quad}
\newcommand{\bigparagraph}[1]{\vspace{0.4em}\noindent\textbf{#1}}
\newcommand{\tdegree}{\ensuremath{\Delta^t}}
\newcommand{\lifetime}{\ensuremath{\Lambda}\xspace}
\newcommand{\leftreach}{\ensuremath{\curvearrowleft}\xspace}
\newcommand{\rightreach}{\ensuremath{\curvearrowright}\xspace}
\newcommand{\compatible}{\ensuremath{\leftrightsquigarrow}\xspace}
\newcommand{\kpathgraph}[1]{\ensuremath{#1}-path graph\xspace}
\newcommand{\kpathgraphs}[1]{\ensuremath{#1}-path graphs\xspace}
\newcommand{\kPathGraphs}[1]{\ensuremath{#1}-Path Graphs\xspace}
\newcommand{\openTCC}{\textsc{oTCC}\xspace}
\newcommand{\closedTCC}{\textsc{cTCC}\xspace}
\newcommand{\otcc}{open tcc\xspace}
\newcommand{\ctcc}{closed tcc\xspace}
\newcommand{\paraCliqueSize}{\ensuremath{s}\xspace}
\newcommand{\paraCompSize}{\ensuremath{s}\xspace}
\newcommand{\paraPaths}{\ensuremath{k}\xspace}
\newcommand{\tpn}{\ensuremath{\mathsf{tpn}}\xspace}
\DeclareMathOperator{\adj}{adj}
\DeclareMathOperator{\inc}{inc}
\DeclareMathOperator{\source}{source}
\DeclareMathOperator{\target}{target}
\DeclareMathOperator{\timeTedge}{time}
\DeclareMathOperator{\edgeTedge}{edge}
\DeclareMathOperator{\pres}{pres}
\DeclareMathOperator{\psuc}{pos\_suc}
\DeclareMathOperator{\pnsuc}{pos\_nsuc}
\DeclareMathOperator{\tadj}{adj_t}
\DeclareMathOperator{\pathPred}{path}
\DeclareMathOperator{\pathPredX}{path_X}
\newcommand{\NP}{\ensuremath{\mathtt{NP}}\xspace}
\newcommand{\coNP}{\ensuremath{\mathtt{coNP}}\xspace}
\newcommand{\paraNP}{\ensuremath{\mathtt{paraNP}}\xspace}
\newcommand{\Wone}{\ensuremath{\mathtt{W}[1]}\xspace}
\newcommand{\FPT}{\ensuremath{\mathtt{FPT}}\xspace}
\newcommand{\XP}{\ensuremath{\mathtt{XP}}\xspace}
\newcommand{\poly}{\ensuremath{\mathtt{poly}}\xspace}
\newcommand{\bigoh}{\mathcal{O}}
    \newcommand{\tw}{\ensuremath{\mathsf{tw}}\xspace}
    \newcommand{\vcdimension}{\ensuremath{\mathsf{VC}}-dimension\xspace}
\newcommand{\clique}{\textsc{Clique}\xspace}
\newcommand{\MCclique}{\textsc{Multi-Colored Clique}\xspace}
\newcommand{\problemtitle}[1]{\gdef\@problemtitle{#1}}
\newcommand{\probleminput}[1]{\gdef\@probleminput{#1}}
\newcommand{\problemquestion}[1]{\gdef\@problemquestion{#1}}
\newif\iflong
\newif\ifshort
\newcommand{\michelle}[1]{{\color{BurntOrange}[Michelle: #1]}}
\newenvironment{tightcenter}
 {\parskip=0pt\par\nopagebreak\centering}
 {\par\noindent\ignorespacesafterend}
\newlength{\RoundedBoxWidth}
\newsavebox{\GrayRoundedBox}
\newenvironment{GrayBox}[1]%
   {\setlength{\RoundedBoxWidth}{\textwidth-10.5ex}
    \def\boxheading{#1}
    \begin{lrbox}{\GrayRoundedBox}
       \begin{minipage}{\RoundedBoxWidth}%
   }{%
       \end{minipage}
    \end{lrbox}%
    \begin{tightcenter}%
    \begin{tikzpicture}%
       \node(Text)[draw=black!90,fill=white,rounded corners,%
             inner sep=2ex,text width=\RoundedBoxWidth]%
             {\usebox{\GrayRoundedBox}};
        \coordinate(x) at (current bounding box.north west);
        \node [draw=white,rectangle,inner sep=3pt,anchor=north west,fill=white] 
        at ($(x)+(6pt,.75em)$) {\boxheading};
    \end{tikzpicture}
    \end{tightcenter}\vspace{0pt}%
    \ignorespacesafterend
}    
\newenvironment{problem}[2][]{\noindent\ignorespaces%
                                \FrameSep=8pt%
                                \parindent=0pt%
                \vspace*{-.5em}
                \ifthenelse{\isempty{#1}}{%
                  \begin{GrayBox}{\textsc{#2}}%
                }{%
                }
                \newcommand\Prob{{Problem:}}%
                \newcommand\Input{{Input:}}%
                          
                \begin{tabular*}{\textwidth}{@{\hspace{.1em}} >{\itshape} p{1.2cm} p{0.85\textwidth} @{}}%
            }{
                \end{tabular*}%
                \end{GrayBox}%
                \vspace*{-.5em}
                \ignorespacesafterend
            }
\begin{document}

\maketitle

\begin{abstract}
We study the parameterized complexity of maximum temporal connected components (tccs) in temporal graphs, i.e., graphs that deterministically change over time. 
In a tcc, any pair of vertices must be able to reach each other via a time-respecting path.
We consider both problems of maximum {\em open} tccs (\openTCC), which allow temporal paths through vertices outside the component, and {\em closed} tccs (\closedTCC) which require at least one temporal path entirely within the component for every pair.
We focus on the structural parameter of treewidth, \tw, and the recently introduced temporal parameter of temporal path number, \tpn, which is the minimum number of paths needed to fully describe a temporal graph.
We prove that these parameters on their own are not sufficient for fixed parameter tractability: both \openTCC and \closedTCC are \NP-hard even when $\tw=9$, and \closedTCC is \NP-hard when $\tpn=6$.
In contrast, we prove that \openTCC is in \XP when parameterized by \tpn.
On the positive side, we show that both problem become fixed parameter tractable under various combinations of structural and temporal parameters that include, \tw plus \tpn, \tw plus the lifetime of the graph, and \tw plus the maximum temporal degree.
\end{abstract}

\ifshort
\vspace{2em}
\noindent Due to space limitations, we deferred full proofs and many helpful illustrations of the statements marked with~$(\star)$ to the full version.
\fi

\newpage
\section{Introduction}
Connected components, subsets of mutually reachable vertices, are among the most fundamental concepts in graph theory. Computing them in static graphs is straightforward: In undirected graphs they partition the vertex set into disjoint {\em connected components}, and in directed graphs into their natural analogue, the {\em strongly connected components}. In both cases, a component of maximum size can be found in linear time.

In temporal graphs, where edges are available only at specific points in time, the situation changes dramatically.
Temporal reachability, \ie vertices reaching another via {\em time-respecting} paths, is no longer transitive which complicates the structure and computation of {\em temporal connected components}, henceforth tcc.
In contrast to static connected components, tccs are not necessarily disjoint, and a temporal graph can therefore contain exponentially many tccs. 

As a consequence, computing a maximum tcc is \NP-hard \cite{bhadra_ComplexityConnected_2003}, and a straightforward parameterized reduction from \clique further implies \Wone-hardness when parameterized by the size of the component or the lifetime of the graph \cite{casteigts_FindingStructure_2018,costa_ComputingLarge_2023}.
The only other parameter that has been considered for this problem is the size of a {\em transitivity modulator}~\cite{casteigts_DistanceTransitivity_2024}, which measures how far a temporal graph is from having fully transitive reachabilities. Bounding this parameter can make certain variants of the problem tractable (see \Cref{sec:related work}), as fully transitive reachabilities make that problem equivalent to computing strongly connected components in a directed static graph. This highlights the power - but also restrictiveness - of the parameter, and it is unclear how large the family of temporal graphs with bounded transitivity modulator actually is.

The goal of this paper is to further extend the boundaries between parameterized hardness and tractability for tcc under the well-studied structural parameter of {\em treewidth} (\tw) of the underlying graph, and the recently-introduced {\em temporal-structure} parameter of {\em temporal path number} (\tpn). 

\subsection{Our Contribution}
We study the parameterized complexity of maximum tccs and identify which combinations of parameters make the problem fixed parameter tractable. 
The nature of paths in temporal graphs allows for two different notions of tccs, which are referred to as {\em open} and {\em closed} in the literature.
An open tcc requires pairwise reachability between every pair of vertices while allowing vertices {\em outside} the tcc to be on the corresponding paths.
A closed tcc additionally constrains this by demanding at least one path {\em inside} the tcc for each pair.
Formally, we study the problems of \textsc{open ({\normalfont resp.} closed) Temporal Connected Component}, which we denote \openTCC (resp. \closedTCC).

 \begin{problem}[]{{ \openTCC ({\normalfont resp.} \closedTCC)}}
        \Input &A temporal graph $\gcal$ and $\paraCompSize\in\mathbb{N}$.\\
        \Prob & {Does there exists a subset of vertices $X$ of $\gcal$ of size \paraCompSize such that $X$ is maximal and for every $u,v\in X$ there exists a temporal path from $u$ to $v$ (resp. using only vertices in $X$) in \gcal?}
    \end{problem}
    
We would like to note that although \openTCC and \closedTCC seem really close to each other, none of them is formally a generalization of the other and their solutions are incomparable. 
Hence, results for one problem do not automatically imply results for the other~\cite{balev_TemporallyConnected_2023}. 
We study the two problems both on directed and undirected, strict and non-strict temporal graphs; interestingly the complexity of the problems behaves in the same way with respect to these two dimensions. 
\Cref{fig:overview} provides an overview of our results.
    \definecolor{myblue}{rgb}{0,0.478,0.62}
    \definecolor{myorange}{rgb}{0.9,0.45,0}
    \begin{figure}[t]
        \centering
        \begin{overpic}[width=\linewidth]{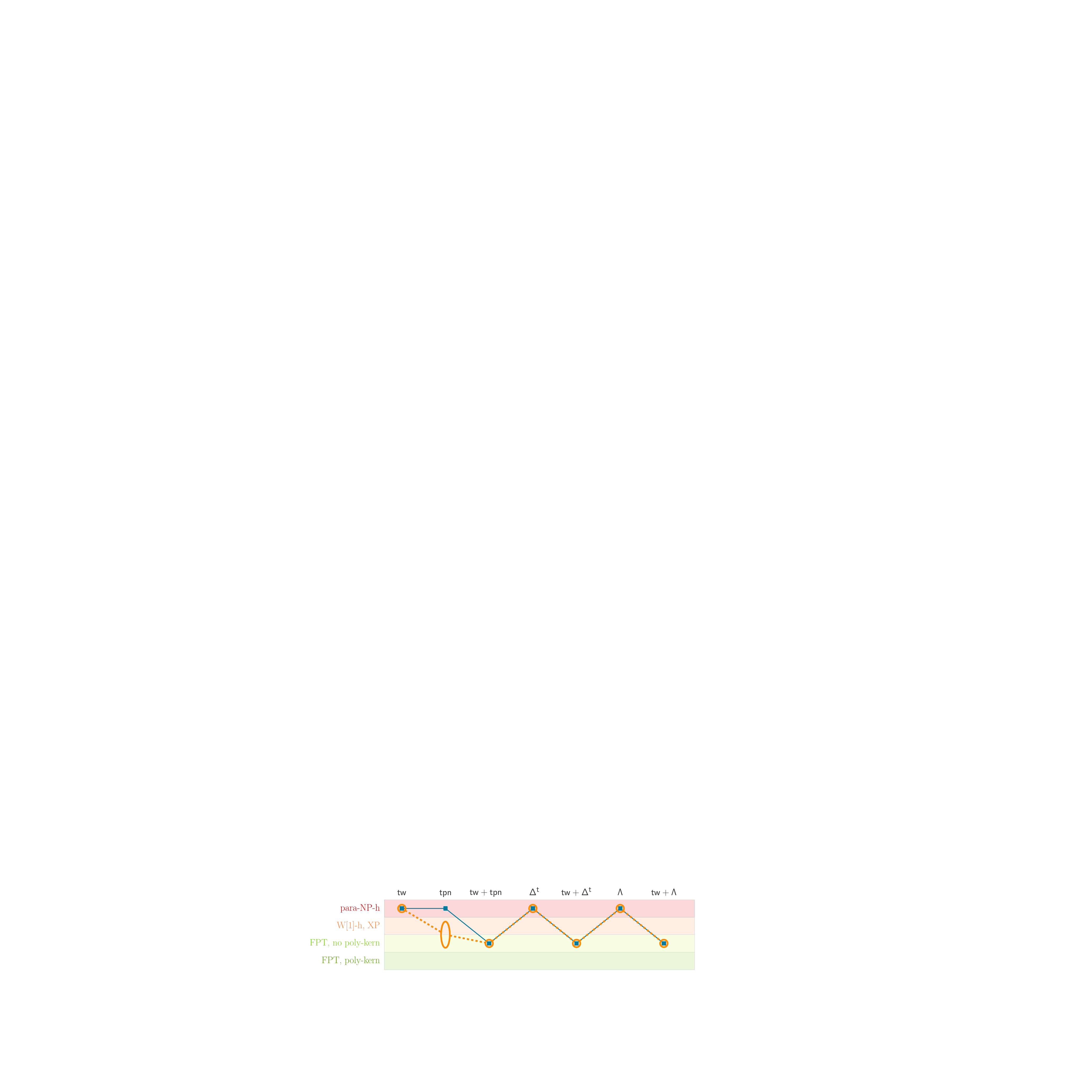}
            \put(25,16.25){\scriptsize Thm. \ref{thm: tw is paraNP}}
            \put(36,5){\scriptsize Thm. \ref{thm:otcc is XP}}
            \put(36,16.25){\scriptsize Thm. \ref{thm: ctcc is paranp-hard}}
            \put(47.25,5){\scriptsize Cor. \ref{cor: FPT tw + tpn}}
            \put(59.25,16.25){\scriptsize Cor. \ref{cor:open-closed-tempdegree}}
            \put(70.25,5){\scriptsize Thm. \ref{thm: fpt tw plus degree}}
            \put(82,16.25){\scriptsize \cite{costa_ComputingLarge_2023}}
            \put(92.75,5){\scriptsize Thm. \ref{thm: fpt tw plus lifetime}}
        \end{overpic}
        \caption{Overview of our results. ``para-\NP-h'', ``\Wone-h'', ``\XP{}'', and ``\FPT{}'' abbreviate \mbox{para-\NP-hard}, \Wone-hard, exponential-time algorithm, and fixed-parameter tractable, respectively.
        Complexities for \closedTCC are indicated by the \textcolor{myblue}{blue} square, and for \openTCC by the \textcolor{myorange}{orange} circle; the orange oval in the \tpn column indicates that we provide an \XP algorithm while it remains open whether the problem is \FPT or \Wone-hard.
        The numbers right of the indicator reference the corresponding statement in the paper and for \lifetime the literature reference.
        All results hold on strict and non-strict, directed and undirected temporal graphs.
        }
      \label{fig:overview}
    \end{figure}

We begin our investigation by considering the treewidth parameter of the underlying graph, denoted \tw, which is among the most studied parameters on static graphs that has yielded positive results for several problems. 
Unfortunately, this is not the case for either \openTCC or \closedTCC; both problems are \NP-hard even for {\em constant} \tw.
This is proven via a reduction from the \MCclique problem.
The core idea is to enforce non-transitivity between triplets of vertices by replacing each original vertex with an in- and an out-vertex which are connected only before and after all other edges appear. 
Then, in order to create a graph with constant \tw, we construct a set of eight separator-vertices such that every other vertex has to use a separator to reach the remaining vertices of the graph.
To ensure the necessary reachabilities between non-separator vertices, one has to carefully arrange the temporal edges at each separator.
\begin{result}
    \openTCC and \closedTCC are \NP-hard even on graphs with $\tw=9$.
\end{result}
Since this result indicates that parameters of the static underlying graph are not sufficient to guarantee tractability, we ask whether a temporal-structure parameter is.
Taking motivation from real-world networks, we study \textit{\kpathgraphs{k}}, where the temporal graph is the union of $k$ temporal paths; consider for example a railway network, where every train defines a temporal path.
We focus on the temporal path number parameter, denoted \tpn, which is the minimum number of temporal paths that is required to define $\gcal$~\cite{deligkas_HowMany_2025}. 
Observe that along temporal paths, reachability is transitive, and whenever two paths cross in a vertex their reachability also interacts.
Hence, someone could hope that this inherent structural property of \kpathgraph{k} would lead to tractability. We will see this is partially true, since the complexity of \openTCC and \closedTCC parameterized by \tpn differs.

Unfortunately, for \closedTCC we show that this partial-transitivity does not help; the problem is \NP-hard even on \kpathgraphs{6}.
The key difficulty comes from the restriction to closed components: By carefully inserting auxiliary vertices that cannot belong to any non-trivial component, we can deliberatively break the transitivity along a temporal path. These ``gaps'' allows us to simulate the complexity of arbitrary \clique instances with only six paths.
\begin{result}
    \closedTCC is \NP-hard even on graphs with $\tpn=6$.
\end{result}
However, for \openTCC we get more positive results: We provide an \XP algorithm that solves the problem for \kpathgraphs{k} in $\mathcal{O}(n^{2k+1})$ time.
The algorithm uses a simple branching technique, whose analysis crucially relies on a key structural property of \kpathgraph{k}. In such graphs, the number of maximal \otcc is polynomially bounded in $n$, with the exponent depending only on $k$. This follows from a VC-dimension argument, where we show that the family of maximal \otcc in a \kpathgraph{k} forms a set system of VC-dimension at most $2k+1$.
This bound relies on the partial transitivity of \kpathgraphs{k}: Whenever two temporal paths meet at a vertex, all vertices before the crossing on one path can reach all vertices after the crossing on the other, preventing the complex patterns required for large VC-dimension.
By the Sauer–Shelah–Perles lemma \cite{sauer_DensityFamilies_1972,shelah_CombinatorialProblem_1972}, this implies that the total number of distinct maximal components is in $\mathcal{O}(n^{2k+1})$.
\begin{result}
    \openTCC is \XP parameterized by \tpn and can be solved in time $\mathcal{O}(n^{2\tpn+1})$.
\end{result}
Since neither \tw nor \tpn  on their own help with closed connected components, we ask which combinations of parameters lead to fixed parameter tractability.
We show that combinations of \tw, which is a static parameter, with a variety of temporal parameters, like \tpn, the {\em lifetime} of the temporal graph, \lifetime, and the maximum {\em temporal} degree, \tdegree, yield fixed parameter algorithms.
Our last result is an \FPT algorithm parameterized by \tpn on {\em monotone} \kpathgraphs{k}, which is motivated again from transportation networks. A collection of temporal paths on vertices $V$ is monotone if there exists a linear ordering of $V$ such that each path either respects or reverses this order.
In order to derive these results, we provide MSO formulas for each of these scenarios.
\begin{result}
    \openTCC and \closedTCC are \FPT when parameterized by $\tw+\tdegree$, $\tw+\lifetime$, $\tw+ \tpn$.
    On monotone path graphs, both problems are \FPT by \tpn alone.
\end{result}

\subsection{Related work} \label{sec:related work}
\iflong
    The study of connected components in temporal graphs goes back at least to Bhadra and Ferreira \cite{bhadra_ComputingMulticast_2002,bhadra_ComplexityConnected_2003}, who introduced open and closed tccs as the natural extension of connected components via temporal paths. They proved that on directed, non-strict temporal graphs computing a maximum open or closed tcc is \NP-complete via a reduction from \clique.
    Jarry and Lotker \cite{jarry_ConnectivityEvolving_2004} later showed that both variants remain \NP-hard even on grids, while they are polynomial-time solvable on trees (all undirected (non-)strict).
    Subsequent empirical and metric works—apparently unaware of these earlier papers—reintroduced the tcc notions on (un)directed strict graphs, reproved computational hardness, and explored the size of tccs in human contact and social network data \cite{tang_CharacterisingTemporal_2010,nicosia_ComponentsTimeVarying_2012,nicosia_GraphMetrics_2013}.
    
    Casteigts \cite{casteigts_FindingStructure_2018} established \Wone-hardness for \openTCC/\closedTCC by solution size on strict temporal graphs and later, together with Corsini and Sarkar \cite{casteigts_SimpleStrict_2024}, refined the Bhadra–Ferreira reduction with the semaphore construction to show \NP-hardness even in simple (non-)strict graphs.
    Costa et al.\ \cite{costa_ComputingLarge_2023} studied \openTCC/\closedTCC under parameterization by lifetime, component size, and their combination. They proved all cases \Wone-hard except size+lifetime  on undirected non-strict graphs, which is \FPT.     
    The only parameter known to yield more general \FPT results is \textit{distance to transitivity}, measuring the number of modifications required to make the reachability graph transitive. This yields \FPT algorithms for \openTCC on (un)directed (non-)strict graphs, while \closedTCC remains \NP-hard already for distance 1~\cite{casteigts_DistanceTransitivity_2024}.
    
    \smallparagraph{Random graphs.}
    Becker et al.\ \cite{becker_GiantComponents_2023} analyzed the occurence of large open and closed tccs in random temporal graphs using the Erdős–Rényi model, and showed a sharp threshold in relation to the edge probability in simple and proper graphs. Atamanchuk et al.\ \cite{atamanchuk_SizeTemporal_2025} refined these results.
    
    \smallparagraph{Path-based variants of tccs.}
        Beyond open and closed tccs, $\Delta$-components require a temporal path within every window of length $\Delta$ (closed/strict in \cite{huyghues-despointes_Forteconnexite_2016}, open/non-strict in \cite{calzado_ConnectivityModel_2015}).
        Costa et al.\ \cite{costa_ComputingLarge_2023} introduced \textit{unilateral} variants of open and closed connected components, where for each pair of vertices only one is required to reach the other, and studied their parameterized complexity. 
        Balev et al.\ \cite{balev_TemporallyConnected_2023} studied connected components from a source- and sink-based perspective, where single, multiple, or all vertices must reach single, multiple, or all other vertices. They provide structural results, such as bounds on the number of tccs, and a detailed analysis of exponential-time algorithms.
    
    \smallparagraph{Snapshot-based variants of tccs.}
    There exist several extensions of connectivity of a set $X$ which do not take the path-based approach.
        \textit{$T$-interval connectivity} requires a common connected spanning subgraph on $X$ across every length-$T$ window; it admits an optimal $\mathcal{O}(\lifetime)$ online algorithm \cite{casteigts_EfficientlyTesting_2015}.
        \textit{Persistent components} also must be connected in every snapshot of a time interval, though the spanning subgraphs can differ \cite{vernet_studyconnectivity_2023}.
        \textit{Window-CC}'s form a connected component in the static graph formed by taking the union of the snapshots over a time window \cite{xie_QueryingConnected_2023} and can be computed efficiently as they are static components.
        Another notion of static-temporal components contains temporal vertices, which form connected components in the static expansion of the graph and can also be computed in polynomial time \cite{rannou_StronglyConnected_2021}.
        
        For a concise overview of the different notions of temporal connected components and the related literature, we refer to \cite{doring_TemporalConnected_2025}.

    \smallparagraph{Checking (maximal) temporal connectivity.}
        Confirming whether a set of vertices is temporally connected is fairly easy. One can use a temporal variant of Dijkstra's algorithm \cite{halpern_ShortestPath_1974,berman_Vulnerabilityscheduled_1996,xuan_Computingshortest_2002}, or stream the edges in chronological order while recording reachability from a fixed source \cite{wu_Pathproblems_2014}.
        Variants with practical restrictions, such as forbidding or bounding waiting time, have also been studied \cite{halpern_ShortestPath_1974,bentert_Efficientcomputation_2020}.
    %
        Checking a temporally connected set $X$ for maximality (if $X$ is a tcc) depends on the considered connectivity notion. For open tccs, it suffices to test for each vertex outside $X$ whether adding it makes $X$ temporally disconnected, while for closed tccs, deciding whether $X$ forms a closed tcc is \NP-complete \cite{costa_ComputingLarge_2023}. 

\else
    The study of connected components in temporal graphs goes back at least to Bhadra and Ferreira \cite{bhadra_ComputingMulticast_2002,bhadra_ComplexityConnected_2003}, who introduced open and closed tccs as the natural extension of connected components.
    They proved that on directed, non-strict temporal graphs \openTCC/\closedTCC are \NP-complete via a reduction from \clique.
    Jarry and Lotker \cite{jarry_ConnectivityEvolving_2004} later showed that both variants remain \NP-hard on undirected (non-)strict grids, while they are polynomial-time solvable on trees.
    Subsequent empirical and metric works—apparently unaware of these earlier papers—reintroduced the tcc notions on (un)directed strict graphs, reproved computational hardness, and explored the size of tccs in human contact and social network data \cite{tang_CharacterisingTemporal_2010,nicosia_ComponentsTimeVarying_2012,nicosia_GraphMetrics_2013}.
    
    Casteigts \cite{casteigts_FindingStructure_2018} established \Wone-hardness for \openTCC/\closedTCC by solution size on strict temporal graphs and, together with Corsini and Sarkar \cite{casteigts_SimpleStrict_2024}, refined the Bhadra–Ferreira reduction to show \NP-hardness in simple (non-)strict graphs.
    Costa et al.\ \cite{costa_ComputingLarge_2023} studied \openTCC/\closedTCC under parameterization by lifetime, component size, and their combination. They proved all cases \Wone-hard except size plus lifetime  on undirected non-strict graphs, which is \FPT.     
    The only parameter known to yield more general \FPT results is \textit{distance to transitivity}, measuring the number of modifications required to make the reachability graph transitive. This yields an \FPT algorithm for \openTCC on (un)directed (non-)strict graphs, while \closedTCC remains \NP-hard already for distance one~\cite{casteigts_DistanceTransitivity_2024}.
    Becker et al.\ \cite{becker_GiantComponents_2023} and, subsequently, Atamanchuk et al.\ \cite{atamanchuk_SizeTemporal_2025} analyzed the occurrence of large open and closed tccs in random temporal graphs.

    Temporal connectivity of a vertex set can be checked easily, e.g., via a temporal variant of Dijkstra's algorithm \cite{halpern_ShortestPath_1974,berman_Vulnerabilityscheduled_1996,xuan_Computingshortest_2002} or by streaming the edges in chronological order \cite{wu_Pathproblems_2014}. \michelle{consider maximal check}
    
    \smallparagraph{Variants of tccs.}
        Several path-based variations of temporal connected components exist. 
        \textit{$\Delta$-components} require a temporal path within every window of length $\Delta$ (closed/strict in \cite{huyghues-despointes_Forteconnexite_2016}, open/non-strict in \cite{calzado_ConnectivityModel_2015}). \textit{Unilateral} variants have been studied with respect to parameterized complexity \cite{costa_ComputingLarge_2023}. Variants with different numbers of sources/sinks were have been analyzed in terms of their structure, such as bounds on their number, and exponential-time algorithms \cite{balev_TemporallyConnected_2023}.
        In contrast to these path-based notions, there exist several extensions of connectivity of a set $X\subseteq V$ that rely on connectivity in snapshots or in the static expansion \cite{casteigts_EfficientlyTesting_2015,vernet_studyconnectivity_2023,xie_QueryingConnected_2023,rannou_StronglyConnected_2021}.
        For a concise overview of the different notions of temporal connected components and the related literature, refer to \cite{doring_TemporalConnected_2025}.
\fi

\smallparagraph{Parameterized complexity on temporal graphs.}
A variety of structural parameters have been considered for different problems in temporal graphs, including the lifetime, the number of edges per time step, the temporal/static degree, the size of a timed feedback edge set \cite{haag_FeedbackEdge_2022}, the temporal core \cite{zschoche_ComplexityFinding_2020}, vertex- and time-interval-membership width \cite{enright_FamiliesTractable_2025,hand_MakingLife_2022}, and the treewidth of the underlying static graph. Most recently, the temporal path number—the minimum size of an exact edge cover—was introduced as a natural parameter for temporal graphs, motivated by train systems \cite{deligkas_HowMany_2025}.
Parameterized results for bounded treewidth combined with either lifetime or temporal degree have been obtained via the MSO approach for several problems \cite{enright_DeletingEdges_2021,haag_FeedbackEdge_2022,deligkas_Beinginfluencer_2024}. A survey of temporal treewidth variants and their use for parameterized complexity can be found in \cite{fluschnik_TimeGoes_2020}.

\section{Preliminaries}
        A \textit{temporal graph} $\gcal=(V,E,\lambda)$ consists of a static graph $G=(V,E)$, called the \textit{footprint}, along with a labeling function $\lambda$.
        The temporal graph is called \emph{(un)directed} if the footprint is (un)directed.
        A pair $(e, t)$, where $e \in E$ and $t \in \lambda(e)$, is a \textit{temporal edge} with \textit{label} $t$. We denote the set of all temporal edges by $\ecal$.
        The \textit{temporal degree} of a vertex $v\in V$ is defined as $\delta^t(v) = \lvert \{(e,t) \colon v\in e,\; t\in\lambda(e)\}\rvert$ and $\tdegree=\max_{v\in V} \delta^t(v)$ denotes the \textit{maximum temporal degree} of~\gcal. The \textit{static degree} of $v$ is defined as $\delta(v) = \lvert \{e\in E \colon v\in e\}\rvert$. 
        The range of $\lambda$ is referred to as the \textit{lifetime} \lifetime. The static graph $G_t = (V, E_t)$, where $E_t = \{e \in E \colon t \in \lambda(e)\}$, is called the \textit{snapshot} at time $t$.
        
        A \textit{temporal path} is a sequence of temporal edges $\tuple{(e_i,t_i)}$ where $\tuple{e_i}$ forms a path in the footprint and the time labels $\tuple{t_i}$ are non-decreasing. If the time labels are strictly increasing, the path is called \textit{strict}; otherwise, it is called \textit{non-strict}. 
        If there exists a temporal path from $u$ to $v$, we say \emph{$u$ reaches $v$} denoted $u\rightreach v$. If both $u\rightreach v$ and $u\leftreach v$, we say $u$ and $v$ are \textit{compatible}, denoted $u\compatible v$. A graph where all reachability is considered exclusively using (non-)strict paths is called a \emph{(non-)strict temporal graph}. A temporal labeling $\lambda$ is called \emph{proper} if incident edges have distinct labels. In that case, there is no distinction between strict and non-strict.
        
        We consider temporal graphs which are constructed as the union of $\paraPaths$ temporal paths.
        \begin{definition}[\kPathGraphs{k}]
            A temporal graph $\gcal=(V,E, \lambda)$ is a \emph{\kpathgraph{k}}, if there exists a collection $\pcal=\{P_1,\dots,P_k\}$ of $k$ paths such that for every temporal edge $e\in \ecal$ there is exactly one path $P_i\in \pcal$ with $e\in P_i$.
            We may denote such a graph as $\gcal=\bigcup\pcal=\bigcup_{i\in[\paraPaths]}P_i$.
            The \emph{temporal path number} \tpn of a temporal graph $\hcal$ is the minimum number of paths needed to define \hcal as a path-graph.
        \end{definition}
        %
        A collection of temporal paths on vertices $V$ is \emph{monotone} if there exists a linear ordering~$\prec$ of $V$ such that each path either respects or reverses this order. Formally, if a path visits the vertices $(v_1, \dots, v_\ell)$ then either $v_1 \prec v_2 \prec \dots \prec v_\ell$ or $v_1 \succ v_2 \succ \dots \succ v_\ell$.
        We refer to a \kpathgraph{\paraPaths} in which all paths are 
        monotone as a \textit{
        monotone \kpathgraph{\paraPaths}}.
        
    \bigparagraph{Temporal Connected Components.}
        A set of vertices $X \subseteq V$ is \emph{temporally connected} 
            if $u\compatible v$ for all $u,v\in X$.
        A \emph{temporal connected component} is a maximal such set. Following \cite{bhadra_ComplexityConnected_2003}, we distinguish between two notions of temporal connected components: \textit{closed} connected components, where all paths must remain within $X$, and \textit{open} connected components, where paths may also use vertices outside of $X$. We formalize these notions as follows.  
        \begin{definition}[open/closed Temporal Connected Component]
            A subset of vertices $X\subseteq V$ is an \emph{open temporal connected component (\otcc)} if $X$ is maximal and temporally connected. 
            If additionally for every $u,v\in X$ there exists a temporal path from $u$ to $v$ using only vertices in $X$, then  $X$ is a \emph{closed temporal connected component (\ctcc)}.
        \end{definition}
    \bigparagraph{Parameterized complexity.}
    We refer to the standard books for a basic overview of parameterized complexity theory~\cite{cygan_ParameterizedAlgorithms_2015,downey_FundamentalsParameterized_2013,Fomin_Lokshtanov_Saurabh_Zehavi_2019}. At a high level, parameterized complexity studies the complexity of a problem with respect to its input size $n$  and the size of a parameter $k$.
    A problem is \emph{fixed-parameter tractable} (\FPT) by $k$ if it can be solved in time $f(k) \cdot \poly(n)$, where $f$ is a computable function.
    Showing that a problem is $\Wone$-hard parameterized by $k$ rules out the existence of such an \FPT algorithm under the assumption $\Wone \neq \FPT$.
    A less favorable, but still positive, outcome is an algorithm with an exponential running time $\bigoh(n^{f(k)})$ for some computable function~$f$; problems admitting such algorithms belong to the class~$\XP$.
    A problem is $\paraNP$-hard if it remains \NP-hard even when the parameter $k$ is constant. Thus, $\paraNP$-hardness excludes both \FPT and \XP algorithms under standard complexity assumptions.
    \iflong
    Another notion central to the parameterized algorithms and complexity is that of a \emph{kernelization} algorithm. 
    \begin{definition}[kernelization~\cite{Fomin_Lokshtanov_Saurabh_Zehavi_2019}]
    Let $L$ be a parameterized problem over a finite alphabet~$\Sigma$. A \emph{kernelization algorithm}, or in short, a \emph{kernelization}, for $L$ is an algorithm with the following property. For any given $(x, k) \in \Sigma^* \times \mathbb{N}$, it outputs in time polynomial in $|(x, k)|$ a string $x_0 \in \Sigma^*$ and an integer $k_0 \in \mathbb{N}$ such that $$((x, k) \in L \Leftrightarrow (x_0 , k_0) \in L)\text{ and }|x_0|, k_0 \le h(k),$$ where $h$ is an arbitrary computable function. If $K$ is a kernelization for $L$, then for every instance $(x, k)$ of $L$, the result of running $K$ on the input $(x, k)$ is called the \emph{kernel} of $(x, k)$ (under $K$). The function $h$ is referred to as the \emph{size} of the kernel. If $h$ is a polynomial function, then we say that the kernel is \emph{polynomial}.
    \end{definition}
    \else
    
    \fi
    We consider multiple parameters: the temporal path number~\tpn, the lifetime~$\lifetime$, 
    the maximum temporal degree~\tdegree, and the \textit{treewidth}~\tw.
    Treewidth measures how close a graph is to being a tree: Treewidth~1 corresponds to forests and larger values indicate increasing structural complexity.
    Here we use \tw to denote the treewidth of the \emph{undirected footprint}; for directed footprints we take the treewidth of the underlying undirected graph.
    \iflong
    \begin{definition}[Tree Decomposition, Treewidth]
        Let $G=(V,E)$ be an undirected static graph.  
        A \emph{tree decomposition} of $G$ is a pair 
        $T = (T,\{B_u \colon u \in V(T)\})$ consisting of a tree $T$ and a family of bags $B_u \subseteq V$ such that
        \vspace{-0.5em}
        \begin{enumerate}[label=(\roman*)]
            \item $\bigcup_{u \in V(T)} B_u = V$,
            \item for every $e \in E$ there exists $u \in V(T)$ with $e \subseteq B_u$, and
            \item for every $v \in V$, the set $\{u \in V(T) \colon v \in B_u\}$ induces a connected subtree of~$T$.
        \end{enumerate}
        \vspace{-0.4em}
        The \emph{width} of $T$ is defined as $width(T) := \max_{u \in V(T)} \lvert B_u\rvert - 1$.
        The \emph{treewidth} of $G$ is 
        $
            \tw(G) := \min\{width(T) \colon T \text{ is a tree decomposition of } G\}.
        $
    \end{definition}
    \else
    \begin{definition}
        Let $G=(V,E)$ be an undirected static graph.  
        A \emph{tree decomposition} of $G$ is a pair 
        $T = (T,\{B_u \colon u \in V(T)\})$ consisting of a tree $T$ and a family of bags $B_u \subseteq V$ such: that (i) $\bigcup_{u \in V(T)} B_u = V$, (ii) for every $e \in E$ there exists $u \in V(T)$ with $e \subseteq B_u$, and (iii) for every $v \in V$, the set $\{u \in V(T) \colon v \in B_u\}$ induces a connected subtree of~$T$.
        The \emph{width} of $T$ is defined as $width(T) := \max_{u \in V(T)} \lvert B_u\rvert - 1$.
        The \emph{treewidth} of $G$ is $\tw(G) := \min\{width(T) \colon T \text{ is a tree decomposition of } G\}$.
    \end{definition}
    \fi

\section{Bounded Treewidth Graphs}
\label{subsec: tw hardness}
    \newcommand{\vvertices}{$V$-vertices\xspace}
    \newcommand{\vvertex}{$V$-vertex\xspace}
    \newcommand{\evertices}{$E$-vertices\xspace}
    \newcommand{\evertex}{$E$-vertex\xspace}
    \newcommand{\svertices}{$S$-vertices\xspace}
    \newcommand{\svertex}{$S$-vertex\xspace}
    \newcommand{\rvertices}{$R$-vertices\xspace}
    \newcommand{\rvertex}{$R$-vertex\xspace}
    \iflong
    We begin our study by considering \openTCC and \closedTCC on temporal graphs with bounded treewidth \tw. The main result of this section is the following.
    \fi
    \iflong
    \begin{restatable}{theorem}{twparaNP}
    \label{thm: tw is paraNP}
        \openTCC and \closedTCC on (un)directed, (non-)strict temporal graphs are \NP-hard even on graphs with $\tw=9$.
    \end{restatable}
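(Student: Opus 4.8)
The plan is a polynomial-time reduction from \MCclique. Given a \MCclique instance with graph $H$, color classes $V_1,\dots,V_k$ and the $\binom k2$ color pairs, I would build a temporal graph $\gcal$ on the following footprint: eight \emph{separator vertices} $s_1,\dots,s_8$, made pairwise adjacent; for every $v\in V(H)$ an \emph{in-vertex} $v^{\mathrm{in}}$ and an \emph{out-vertex} $v^{\mathrm{out}}$ joined by an edge $\{v^{\mathrm{in}},v^{\mathrm{out}}\}$; for every $e\in E(H)$ an \emph{edge-vertex} $w_e$; and edges joining each of the $v^{\mathrm{in}},v^{\mathrm{out}},w_e$ to all of $s_1,\dots,s_8$. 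This footprint has treewidth $9$: a tree decomposition with a central bag $\{s_1,\dots,s_8\}$, one pendant bag $\{s_1,\dots,s_8,v^{\mathrm{in}},v^{\mathrm{out}}\}$ per vertex $v$, and one pendant bag $\{s_1,\dots,s_8,w_e\}$ per edge $e$ has width $9$, while $\{s_1,\dots,s_8,v^{\mathrm{in}},v^{\mathrm{out}}\}$ already spans a $K_{10}$, forcing $\tw\ge 9$. The target size is $\paraCompSize=8+2k+\binom k2$: the separators, both halves of one selected vertex per color class, and one selected edge-vertex per color pair.

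For the labels I would split time into three consecutive blocks, \emph{early}, \emph{middle}, \emph{late}. Each edge $\{v^{\mathrm{in}},v^{\mathrm{out}}\}$ gets one label in the early block and one in the late block, so $v^{\mathrm{in}}$ and $v^{\mathrm{out}}$ are always mutually reachable within their own gadget, but a gadget can be \emph{entered} only (early) through $v^{\mathrm{in}}$ and \emph{left} only (late) through $v^{\mathrm{out}}$; this is the non-transitivity device, ruling out any temporal path that threads from one gadget through a second gadget into a third. All remaining edges are incident to a separator, and I would schedule their labels using the eight separators as eight independently timed ``ports'' so that (i)~$w_{\{u,v\}}$ is compatible with $u^{\mathrm{in}},u^{\mathrm{out}},v^{\mathrm{in}},v^{\mathrm{out}}$ and with every other selected edge-vertex and gadget, and (ii)~a gadget of color $i$ is compatible with an edge-vertex of color pair $\{i,j\}$ only when that edge-vertex belongs to an edge of $H$ incident to the gadget's vertex. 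Since all inter-gadget routing is forced through the separators and the port schedule lines up only for these intended connections, no other compatibilities arise.

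For correctness, a multicolored clique $\{v_i\in V_i\}$ yields $X=\{s_1,\dots,s_8\}\cup\{v_i^{\mathrm{in}},v_i^{\mathrm{out}}\}_{i}\cup\{w_{\{v_i,v_j\}}\}_{i<j}$, which has size \paraCompSize, is temporally connected using only vertices of $X$ (hence even a \ctcc), and is maximal because every outside vertex fails a port alignment with some member of $X$. Conversely, a temporally connected $Y$ with $|Y|=\paraCompSize$ can, by a pigeonhole argument on how many mutually compatible gadgets and edge-vertices of the various colors it may hold, contain at most one gadget per color class and at most one edge-vertex per color pair; equality forces exactly one of each, and constraint~(ii) forces the chosen edge-vertices to be incident to the chosen gadget vertices, so the chosen vertices form a multicolored clique. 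Using a proper labeling makes the strict and non-strict variants coincide, and symmetrizing the port schedule handles the undirected variants, so the \NP-hardness holds for all four variants and at $\tw=9$.

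The main obstacle is the port schedule on the separators: it must simultaneously (a)~realize \emph{every} required path — and, for \closedTCC, realize it using only vertices of the candidate set $X$ — in the ``yes'' case, (b)~create \emph{no} shortcut path that would let a non-clique set of size \paraCompSize be temporally connected in the ``no'' case, and (c)~keep the clique-set maximal. Squeezing all of this out of only eight separators, simultaneously for both reachability directions and for the strict/non-strict and directed/undirected cases, is where the technical work concentrates; the treewidth bound itself is immediate from the footprint.
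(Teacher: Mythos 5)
Your skeleton matches the paper's: a reduction from \MCclique with eight separator vertices forming a clique, in/out non-transitivity pairs so that no temporal path can thread through a selected gadget, all inter-gadget traffic routed through the separators, and the treewidth-$9$ bound read off from bags consisting of the eight separators plus one small component (your lower-bound observation via the $K_{10}$ is fine). However, there is a genuine gap, and you name it yourself: the ``port schedule'' at the separators is never constructed, and that schedule is essentially the entire proof. What a shared separator can realize is a threshold-type pattern — $x$ reaches $y$ through $s_\ell$ iff $x$'s departure label at $s_\ell$ is at most $y$'s arrival label — so combining two separators yields interval-like compatibility patterns, not arbitrary bipartite incidence relations. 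This is exactly why the paper does \emph{not} use a single edge-vertex $w_e$ per edge: it introduces two directed edge-vertices $(a,b)\in E_{ij}$ and $(b,a)\in E_{ji}$, orders each block lexicographically by its \emph{first} endpoint so that the edges incident to a fixed $a\in V_i$ form a contiguous block (making incidence realizable at one separator pair), and then ties the two directions together by an ``identity'' compatibility at another separator pair; selecting both $(a,b)$ and $(b,a)$ is what certifies $ab\in E_H$. With your single $w_{\{u,v\}}$, requirement (ii) demands incidence with respect to \emph{both} endpoints simultaneously against the same set of edge-vertices, which is not contiguous under any one ordering; whether this, together with full compatibility across different color pairs, within-set incompatibility (needed for your pigeonhole step and for maximality), and the absence of multi-separator shortcut paths, can all be packed into eight separators is precisely the open technical question your proposal defers.

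A second, smaller omission: you place non-transitivity gadgets only on the vertices of $H$, not on the edge-vertices $w_e$, yet $w_e$ is adjacent to all eight separators in your footprint, so without either a gadget or an explicit ``all outgoing labels precede all incoming labels'' discipline at $w_e$, temporal paths can transit through edge-vertices and create unintended compatibilities; the paper avoids this by putting the gadget on \emph{every} restriction vertex and proving a lemma that no temporal path connects two such vertices via a third. So while your architecture and correctness outline are the right shape, the proof as written asserts the existence of the labeling with properties (i)–(iii) rather than exhibiting it, and the specific simplification you chose (one undirected edge-vertex, no gadget on it) removes the very devices the paper uses to make such a labeling constructible and verifiable.
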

    \else
    \begin{restatable}[$\star$]{theorem}{twparaNP}
    \label{thm: tw is paraNP}
        \openTCC and \closedTCC on (un)directed, (non-)strict temporal graphs are \NP-hard even on graphs with $\tw=9$.
    \end{restatable}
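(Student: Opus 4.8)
\textbf{The plan} is to reduce from \MCclique. Let $H$ be an instance with color classes $V_1,\dots,V_k$; deleting intra-class edges we may assume $H$ has only edges between distinct classes, so $\omega(H)\le k$ with equality iff $H$ has a multicolored $k$-clique. I would build a temporal graph $\gcal$ as follows: (i) a fixed set $S$ of eight \emph{separator vertices}, pairwise footprint-adjacent and made $\compatible$ with every vertex of $\gcal$; (ii) for each $v\in V_H$ a \emph{vertex gadget} $\{v_{\mathrm{in}},v_{\mathrm{out}}\}$ joined by a footprint edge that carries a label only at the very first and the very last time step and is otherwise adjacent only to $S$; (iii) for each edge $uv\in E_H$ an \emph{edge vertex} $e_{uv}$ adjacent only to $S$; and (iv) a few auxiliary $R$-vertices used by the routing, also adjacent only to $S$. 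Every temporal edge between two different gadgets, or between a gadget/edge/$R$-vertex and $S$, gets a label strictly between the first and last step (the ``middle phase''). Since the footprint minus $S$ is then a forest (disjoint $v_{\mathrm{in}}v_{\mathrm{out}}$ edges plus isolated edge- and $R$-vertices), taking a width-$1$ tree decomposition of it and inserting $S$ into every bag gives a decomposition of width $8+1=9$; conversely $S$ together with one vertex gadget yields a $K_{10}$ (topological) minor, so $\tw(\gcal)=9$. Set the target size to $s:=2k+8$.

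\textbf{The key mechanism} is the in/out split, which destroys transitivity exactly where we need it. A temporal path that is already inside the middle phase and reaches some $v_{\mathrm{in}}$ is stuck: the only $v_{\mathrm{in}}$–$v_{\mathrm{out}}$ label is the first-step one, already in the past, so it cannot continue through the gadget; symmetrically no path re-enters a gadget through $v_{\mathrm{out}}$ before the last step. Consequently, during the middle phase $v_{\mathrm{out}}$ can reach exactly those $w_{\mathrm{in}}$ with $vw\in E_H$. This is forced by organizing the middle phase as $\mathcal{O}(k^2)$ sequential sub-phases, one per ordered pair of color classes, each routed through the eight separators (reused across sub-phases) together with the relevant edge vertex: labels are chosen so that in the $(i,j)$-sub-phase a signal travels $v_{\mathrm{out}}\to S\to e_{vw}\to S\to w_{\mathrm{in}}$ iff $vw\in E_H$. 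The first and last steps then let each selected $v_{\mathrm{in}}$ reach its own $v_{\mathrm{out}}$ and back. Because every inter-gadget path passes through $S$, and $S$ lies in every candidate component, the components we build are \emph{closed}, not merely open, so the bound applies to \closedTCC and \openTCC alike. Finally, using a \emph{proper} labeling throughout makes strict and non-strict temporal paths coincide, and simulating each directed edge by subdividing it with one private vertex (absorbed into a bag $S\cup\{\cdot,\cdot\}$ of size $\le 10$) yields the undirected variants without changing $\tw=9$.

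\textbf{Correctness and the main obstacle.} Assuming the schedule has no leaks, the $\compatible$-relation of $\gcal$ restricted to non-separator vertices is precisely $H$ with each vertex doubled, with $S$ adjacent to everything; hence the maximal temporally connected sets of $\gcal$ are exactly $S$ together with the doubling of a maximal clique of $H$, and such a set has size exactly $s=2k+8$ iff the underlying clique has size $k$, i.e.\ iff it is a multicolored $k$-clique (using $\omega(H)\le k$). This yields both directions of the reduction and, automatically, rules out spurious maximal components of size $s$. The main obstacle is realizing the ``no leaks'' property: one must assign labels to the many separator-incident temporal edges so that, \emph{simultaneously}, (a) $v_{\mathrm{out}}\compatible w_{\mathrm{in}}$ holds for every pair iff $vw\in E_H$, using only the eight shared separators as conduits; (b) no composite path through several edge vertices or several sub-phases creates a false reachability; and (c) the labeling stays proper and direction-simulable and the footprint outside $S$ stays a forest. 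Verifying (b), which hinges on the in/out dead-end property, is the technical heart of the proof; the rest is bookkeeping over the $\mathcal{O}(k^2)$ sub-phases and the four graph variants.
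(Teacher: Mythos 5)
Your skeleton matches the paper's (a reduction from \MCclique, a set of eight separator vertices through which all inter-gadget traffic is routed, in/out non-transitivity gadgets whose internal edge carries only a very early and a very late label, helper subdivision vertices for the undirected variant, and the same treewidth accounting), but the step you defer as ``bookkeeping'' is in fact where your plan breaks. You want the compatibility relation between vertex gadgets to be \emph{exactly} the adjacency of $H$ (so that a component of size $2k+8$ is $S$ plus a doubled multicolored clique), with the per-edge vertices serving only as conduits that never enter the component. This cannot be arranged in your architecture: since every gadget vertex and every edge vertex is footprint-adjacent only to $S$, any temporal path between two distinct gadgets must visit $S$, and by splitting such a path at its first separator one gets that $v_{\mathrm{out}}\rightreach w_{\mathrm{in}}$ holds iff there is some $s\in S$ with $\mathrm{arr}(v_{\mathrm{out}},s)\le \mathrm{dep}(s,w_{\mathrm{in}})$, where $\mathrm{arr}$ is the earliest arrival time at $s$ and $\mathrm{dep}$ the latest departure time from $s$ that still reaches the target (your sub-phases and edge vertices only change these numbers, not the form of the criterion). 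Hence each gadget's reachability interface is summarized by $16$ reals, and gadget-compatibility between two color classes is a fixed Boolean combination (an AND of two ORs of eight threshold comparisons) of these coordinates. The family of bipartite relations expressible this way has VC dimension bounded by an absolute constant, whereas the bipartite graphs between color classes of a \MCclique instance are arbitrary and have unbounded VC dimension; so ``compatible iff adjacent in $H$'' is unrealizable once the classes are large, no matter how cleverly the $\mathcal{O}(k^2)$ sub-phases are scheduled. Your ``no leaks'' requirement is therefore not a technical verification you omitted, but a property your construction cannot have.

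The paper avoids exactly this trap by not encoding adjacency into $V$--$V$ compatibility at all: there, distinct color classes are made \emph{fully} compatible, and the adjacency of $H$ enters only through which edge vertices exist. The edge vertices are first-class members of the component: the target size is $k+2\binom{k}{2}+8$ (counted in gadgets), and since no two vertices inside any one set are compatible, a component of that size must pick one vertex from every $V_i$ and from every $E_{ij}$ and $E_{ji}$. The only nontrivial relations that must then be realized through the separator hub are of equality/incidence type, $E_{ij}\xleftrightarrow{id}E_{ji}$ and $V_i\xleftrightarrow{inc}E_{ij}$, and these are precisely intersections of two one-sided threshold relations obtained from opposite lexicographic orders at a pair of separators, which an eight-vertex hub can realize. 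To repair your write-up you would have to include the chosen edge vertices in the component and adopt this counting; with target $2k+8$ the reduction cannot be made correct. (Two smaller points: your classification of maximal components overlooks sets such as $S\cup\{e_{uv}\}$, which is harmless only because of the size bound; and in the undirected simulation the helper vertices must themselves be made universally compatible, as the paper does with extra very-early/very-late edges to $S$, or maximality and the size count break.)
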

    \fi
    \ifshort
    Our proof is by reduction from $k$-\MCclique. Due to space constraints, we provide only an intuition here and refer to the full version for the full construction, many helpful figures, and the details of the proof.
        
    Our construction enforces compatibility patterns that (i) encode the \MCclique instance $(H=(V_H,E_H),(V_1,\dots,V_k))$ and (ii) ensure that the maximum temporal connected component of the resulting graph \gcal is simultaneously open and closed. We therefore refer it simply as the \emph{maximum tcc}.
    There are two key ideas to the construction.
    \else
    Our proof is by reduction from $k$-\MCclique.  
    In the construction we will enforce specific compatibility patterns that (i) encode the structure of the \MCclique instance and (ii) ensure that the largest temporal connected component is both open and closed.  
    We first describe the construction for directed, strict temporal graphs. We give an intuition in \Cref{subsec: tw paraNP intuition}, then describe the directed construction formally in \Cref{subsec: tw paraNP construction}, and prove its correctness in \Cref{subsec: tw paraNP proof}.
    Afterwards, we prove that this construction can be extended to undirected and non-strict temporal graphs.
    
    \subsection{Intuition} \label{subsec: tw paraNP intuition}
        We will construct a temporal graph \gcal in which the maximum closed tcc is also a maximum open tcc. Therefore in the remainder of this section, we will refer to the maximum temporal connected component of \gcal by \textit{maximum tcc}, omitting ``open or closed''.
        Given a \MCclique instance $(H=(V_H,E_H), (V_1,\dots,V_k))$, the maximum tcc will mirror a multicolored $k$-clique in $H$. There are two key ideas to the construction.
        \fi
        
        \smallskip
        \bigparagraph{1. Encode adjacency relation of $H$ as compatibilities in \gcal.}
        We introduce two kinds of vertex sets: \textit{\vvertex} sets $V_1,\dots,V_k$ representing the vertices of the color classes, and \textit{\evertex} sets $E_{ij},E_{ji}$ representing the ordered edges between color class pairs. We will refer to \vvertices and \evertices as the \textit{restriction-vertices} (\rvertices).
        The temporal edges are arranged so that:
        \begin{itemize}
        \item \textit{Unique choice:} Within a vertex set no two vertices are compatible.
        \item \emph{Incidence:} $V_i\xleftrightarrow{inc}E_{ij}$ captures adjacency in $H$ so that a \vvertex $a\in V_i$ is compatible exactly with the \evertices $(a,b)\in E_{ij}$ that are incident to $a$.
        \item \emph{Identity:} $E_{ij}\xleftrightarrow{id}E_{ji}$ ties the two directions of the same undirected edge so that an \evertex $(a,b)\in E_{ij}$ is compatible exactly with its inverse $(b,a)\in E_{ji}$.
        \item \emph{Full compatibility elsewhere:} All remaining pairs of vertex sets $(V_i,V_j)$ for $i\neq j$, $(V_i,E_{xy})$ for $x\neq i$, and $(E_{ij},E_{i'j'})$ with $\{i,j\}\neq\{i',j'\}$ are fully compatible,
            \ie for pair $(X,Y)$ and every $x\in X,y\in Y$ holds $x\compatible y$ (denoted $X\compatible Y$).
        \end{itemize}
        Refer to \Cref{fig:tw paraNP big picture} for an illustration of these compatibility relations. 
        
        They enforce that any tcc contains at most one \rvertex of each set. Moreover, if both $a\in V_i$ and $b\in V_j$ are in a tcc, then the only admissible \evertices from $E_{ij}\cup E_{ji}$ are $(a,b)$ and $(b,a)$. Since these can be included together if and only if $a$ and $b$ are adjacent in~$H$, every maximum tcc in \gcal corresponds exactly to a multicolored $k$-clique in $H$.
        \ifshort   
            \begin{figure}[t]
                \centering
                \includegraphics[width=\linewidth]{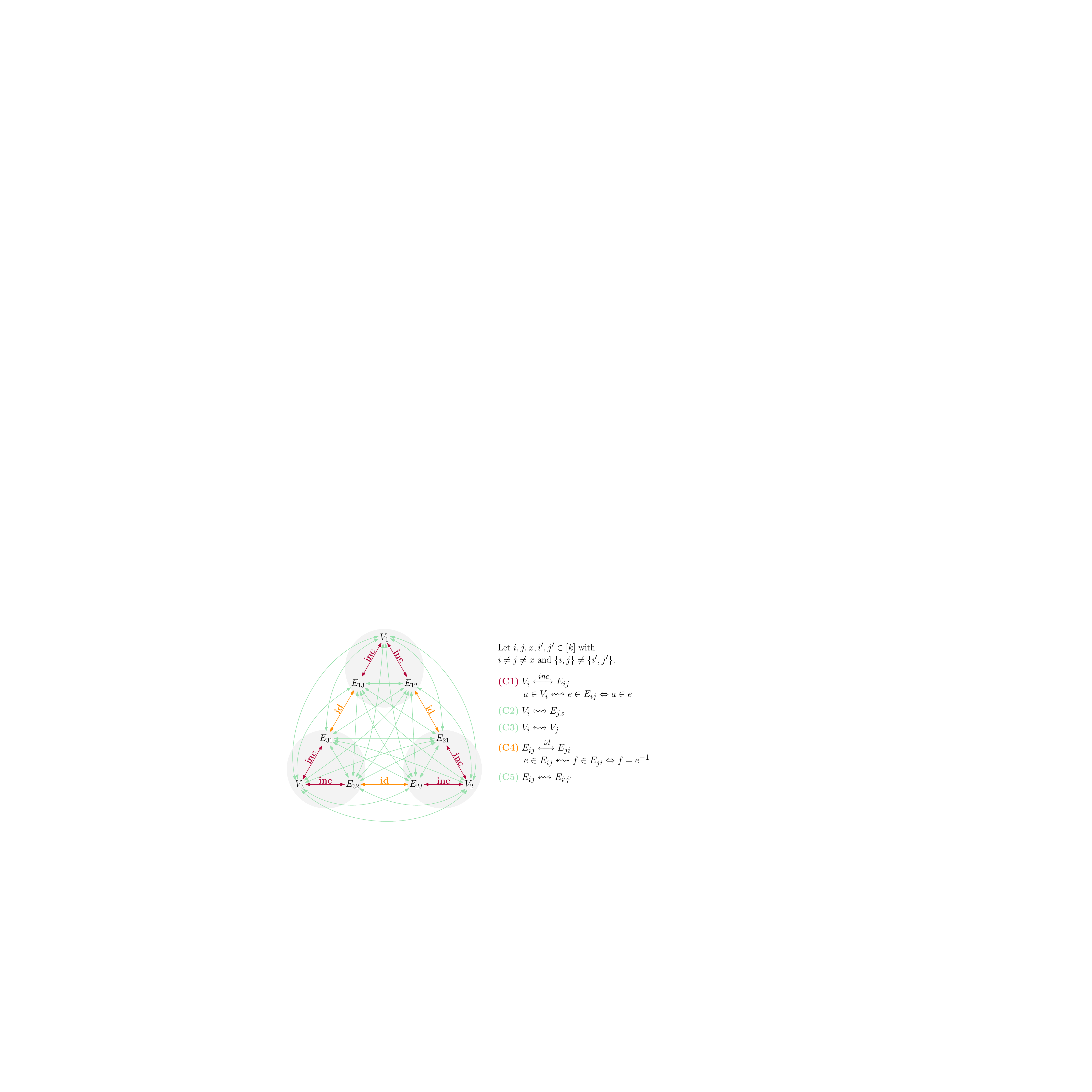}
                \caption{Illustration (left) and mathematical description (right) of the compatibilities between the \vvertex and \evertex sets. The red ``inc'' edges represent the \textit{incidence} compatibility between a \vvertex set and its \evertex sets, while the orange ``id'' edges, represent the \textit{identity} compatibility between inverse \evertex sets.
                Each of these compatibilities is realized via a separator-vertex and there are consequently no direct edges between these sets.} 
                \label{fig:tw paraNP big picture}
            \end{figure}
        \fi
        
        \smallskip
        \bigparagraph{2. Keep treewidth small via a constant-size separator and avoid transitivity over \rvertices.}
        All interactions between \rvertices are routed through a constant-size set $S=\{s_1,\dots,s_8\}$ of \textit{separator-vertices} (\svertices).
        
        Locally, the temporal edges at~$s_1,s_2$ implement the identity compatibilities, the edges at~$s_3,s_4$ the incidence compatibilities, and the edges at~$s_5,\dots,s_8$ the full compatibilities.
        Globally, the temporal edges at each $s_\ell$ are arranged to realize the intended (inc/id/full) compatibilities while preventing unwanted compatibilities via longer temporal paths.
        \ifshort
            See \Cref{fig:short-version--twpNP-s1s2} for an intuition on how we achieve the identity compatibility. Illustrations and mathematical descriptions of all constructions can be found in the full version.
        \fi
        
        To enable this delicate construction, we replace each \rvertex $x$ with a small \textit{non-transitivity gadget} $\{x^{in},x^{out}\}$, connected by two bidirected temporal edges labeled with a very early and a very late time label. This gadget preserves the compatibilities of $x$ while blocking temporal paths from passing through $x$ to connect other \rvertices: Any temporal path visiting the gadgets of two distinct \rvertices $z$ and then $x$ must visit a separator in between and, after reaching the gadget of $x$, cannot continue anywhere.
        Crucially, any tcc (maximal by definition) must contain either both $x^{in}$ and $x^{out}$, or neither.
        
        Removing the constant-size set $S$ deletes all connections between vertex sets. What remains are disjoint components of size two (the non-transitivity gadgets), hence the footprint has constant treewidth. 
        
        \smallskip
        \bigparagraph{3. Adjustment for undirected temporal graphs.}
        This construction can be simulated as an undirected temporal graph by replacing each directed edge between an \rvertex and an \svertex with a short temporal path through a helper vertex. This preserves the directionality of the construction while keeping the treewidth of the footprint bounded by a constant.
    \iflong
    \begin{figure}[t]
        \centering
        \includegraphics[width=\linewidth]{Figures/tw-paraNP-big_picture-andedges.pdf}
        \caption{Illustration (left) and mathematical description (right) of the compatibilities between the \vvertex and \evertex sets. The red ``inc'' edges represent the \textit{incidence} compatibility between a \vvertex set and its \evertex sets, while the orange ``id'' edges, represent the \textit{identity} compatibility between inverse \evertex sets.
        Each of these compatibilities is realized via a separator-vertex and there are consequently no direct edges between these sets.} 
        \label{fig:tw paraNP big picture}
    \end{figure}
    \fi
    \ifshort
        \begin{figure}[t]
            \centering
            \includegraphics[width=0.9\linewidth]{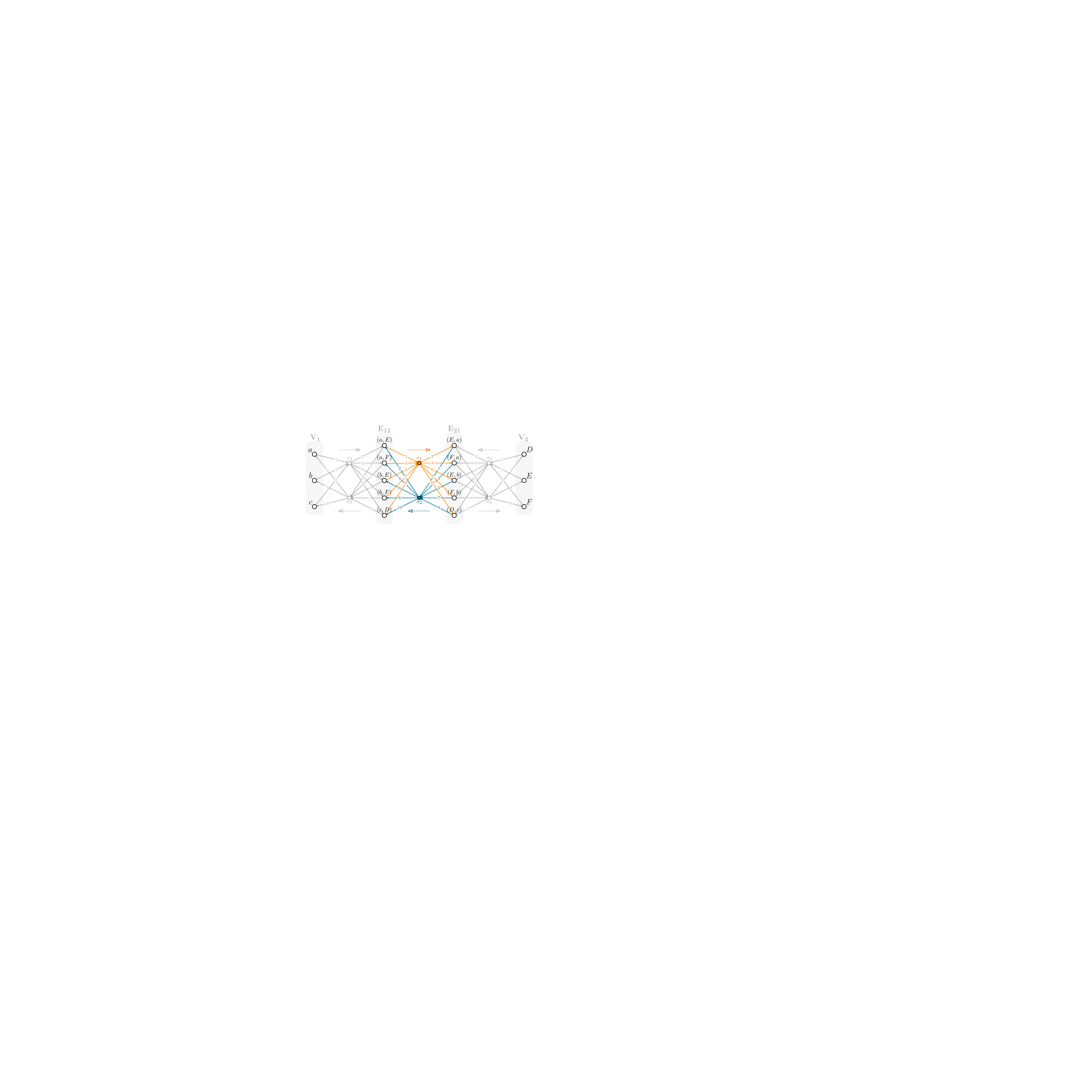}
            \caption{
                Illustration of how our construction achieves the identity compatibility $E_{ij}\xleftrightarrow{id}E_{ji}$ between $E_{12}$ and $E_{21}$ via the \svertices $s_1$ and $s_2$. Observe that e.g., the \evertex $(b,E)\in E_{12}$ can reach $(E,b)\in E_{21}$ and all below via $s_1$, and is reached by $(E,b)$ and all above via $s_2$.  
            }
            \label{fig:short-version--twpNP-s1s2}
        \end{figure}
        \fi
    
    \newcommand{\early}{\ensuremath{\bm{\alpha}}\xspace}
    \newcommand{\late}{\ensuremath{\bm{\omega}}\xspace}
    \newcommand{\mearly}{\ensuremath{\mathbf{t_0}}\xspace}
    \newcommand{\mlate}{\ensuremath{\bm{t_\infty}}\xspace}
    \newcommand{\ord}{\ensuremath{<}\xspace}
    \iflong
    \subsection{Construction} \label{subsec: tw paraNP construction}
    \label{construction: paraNP tw}
        Let $(H=(V_H,E_H), (V_1,\dots,V_k))$ be an instance of \MCclique, where $V_H = V_1 \dot\cup \cdots \dot\cup V_k$. The task is to decide whether there exists a set $V'\subseteq V_H$ with $\lvert V'\cap V_i\rvert=1$ for each $i\in[k]$, such that $uv\in E_H$ for all distinct $u,v\in V'$. 
        \vspace{0.3em}
        \begin{notation}
            Let $i,j\in[k], i\neq j$.
            The sets $E_{ij}:=\{(a,b)\colon ab\in E_H\text{ and }a\in V_i, b\in V_j\}$, and $E_{ji}:=\{(b,a)\colon ab\in E_H\text{ and }a\in V_i, b\in V_j\}$ collect the directed representations of the edges between $V_i$ and $V_j$.
            Thus, $E_{ji}$ consists exactly of the \textit{inverses} of the edges in $E_{ij}$, and for $e=(a,b)$ we write $e^{-1}:=(b,a)$.  
    
            Assume now $i<j$, and let $\ord_i$ and $\ord_j$ denote fixed orderings on $V_i$ and $V_j$, respectively.
            The \textit{rank} of a vertex $a\in V_i$ is $\pi_i(a):= \lvert \{ x\in V_i \colon x\ord_i a\}\rvert+1$.
            The \emph{lexicographic ordering} $\ord_{ij}$ on $E_{ij}$ and $E_{ji}$ compares edges by their first endpoint in $V_i$ and, in case of a tie, by their second endpoint in $V_j$: for $(a,b),(c,d)\in E_{ij}$,
            \[
                (a,b) \ord_{ij} (c,d)
                :\iff \quad
                \big( a \ord_i c \big) 
                \;\text{or}\;
                \big( a = c \ \wedge\  b \ord_j d \big) 
                \quad \iff: (b,a) \ord_{ij} (d,c).
            \]
            Thus, the \vvertex from the smaller-indexed color class is always compared first.  
            Analogously, every edge $(a,b)\in E_{ij}$ and its inverse $(b,a)\in E_{ji}$ receive the same \emph{rank} in~$\ord_{ij}$:
            \[
                \pi_{ij}(a,b) := \bigl\lvert \{(c,d)\in E_{ij} : (c,d)\ord_{ij} (a,b)\}\bigr\rvert 
                \quad=\quad \pi_{ij}(b,a).
            \]
            This ensures that each pair of opposite edges is aligned under a single index in the ordering.
            \qed
        \end{notation}\vspace{0.5em}
    
        \noindent 
        Given a  \MCclique instance $(H,(V_1,\dots,V_k)$, we construct a directed temporal graph $\gcal=(R \cup S, E_G, \lambda)$.  
        The vertex set of $\gcal$ is partitioned into two types: the set of \emph{restriction-vertices} (\rvertices) $R = \{V_i : i \in [k]\} \cup \{E_{ij}, E_{ji} : 1 \leq i < j \leq k\},$
        and the set of \emph{separator-vertices} (\svertices)  
        $S = \{s_1,s_2,s_3,s_4,s_5,s_6,s_7,s_8\}.$ The \rvertices correspond to the vertices $V_H$ of $H$ and, for each edge $ab \in E_H$, one distinct vertex for each direction $(a,b)$ and $(b,a)$.  
        In the final construction, each such \rvertex will be replaced by a gadget enforcing non-transitivity; these gadgets will be introduced later.
        The \svertices $S$ are used to connect the \rvertices and together form a separator of $G$ which guarantees the constant treewidth of \gcal.
        We use $\early$ and $\late$ to denote an arbitrarily small, resp. large, time label.

        We describe the connections realizing special compatibilities (incidence, identity) between pairs of \rvertex sets (illustrated in \Cref{fig:tw pNP special s3s4,fig:tw pNP special s1s2}), and how these connections are arranged at the \svertices (illustrated in \Cref{fig:tw paraNP s1 s2,fig:tw paraNP s3 s4,fig:tw pNP s5 s6,fig:tw pNP s7 s8}).
    
        \newcommand{\edgesidONE}[2]{\ensuremath{\ecal[E_{#1#2}\xrightarrow{id}{E_{#2#1}}]}\xspace}
        \newcommand{\edgesidTWO}[2]{\ensuremath{\ecal[E_{#1#2}\xleftarrow{id}E_{#2#1}]}\xspace}
        \newcommand{\edgesincTHREE}[2]{\ensuremath{\ecal[V_{#1}\xrightarrow{inc}E_{#1#2}]}\xspace}
        \newcommand{\edgesincFOUR}[2]{\ensuremath{\ecal[V_{#1}\xleftarrow{inc}E_{#1#2}}]\xspace}

        \vspace{-2ex}\paragraph*{Local construction for identity compatibility via $s_1,s_2$: $E_{ij}\xleftrightarrow{id} E_{ji}$ for a single pair.}\vspace{-1ex}
        Let $i,j\in[k]$ with $i<j$. 
            We connect $E_{ij}$ and $E_{ji}$ through the \svertices $s_1$ and $s_2$.
            For every \evertex $e \in E_{ij}$, add a two-step path $(e,s_1,e^{-1})$ from $e$ to its inverse $e^{-1}\in E_{ji}$ and a path $(e^{-1},s_2,e)$ in the reverse direction.  
            The starting times of these paths are determined by the lexicographic order $\ord_{ij}$ on $E_{ij}$:
            The left-to-right path $(e,s_1,e^{-1})$ starts at time $2\cdot \pi_{ij}(e)-1$ and ends at $2\cdot \pi_{ij}(e)$. The right-to-left path $(e^{-1},s_1,e)$ also starts at time $2\cdot \pi_{ij}(e)-1$ and ends at $2\cdot \pi_{ij}(e)$.
            Formally, we add the temporal edges
            \setlength{\abovedisplayskip}{3pt}
            \setlength{\belowdisplayskip}{3pt}
            \begin{align} 
                \edgesidONE{i}{j} &:=\big\{(e,s_1,2\cdot \pi_{ij}(e)-1 ),\; (s_1,e^{-1},2\cdot \pi_{ij}(e)) : e \in E_{ij}\big\} \label{eq: tw pNP id ONE}\\
                \edgesidTWO{i}{j} &:=\big\{(f^{-1},s_2,2\cdot \pi_{ij}(f)),\; (s_2,f,2\cdot \pi_{ij}(f)+1) : f \in E_{ij}\big\}. \label{eq: tw pNP id TWO}
            \end{align}
            Refer to \Cref{fig:tw pNP special s3s4} for an illustration.
            \ifshort
            \begin{figure}[h]
                \centering
                \includegraphics[width=0.9\linewidth]{Figures/tw-paraNP-construction-withs3s4-onlys1s2.pdf}
                \caption{
                    Identity compatibility between $E_{12}$ and $E_{21}$ via $s_1$ and $s_2$.  
                    For example, the edge $(b,E)$ is the third in $<_{12}$, so the edges $((b,E),s_1)$ and $((E,b),s_2)$ are labeled with time~5, while the edges $(s_1,(E,b))$ and $(s_2,(b,E))$ are labeled with time~6.  
                }
                \label{fig:tw pNP special s3s4}
            \end{figure}
            \fi
            \iflong
            \begin{figure}[h]
                \centering
                \includegraphics[width=\linewidth]{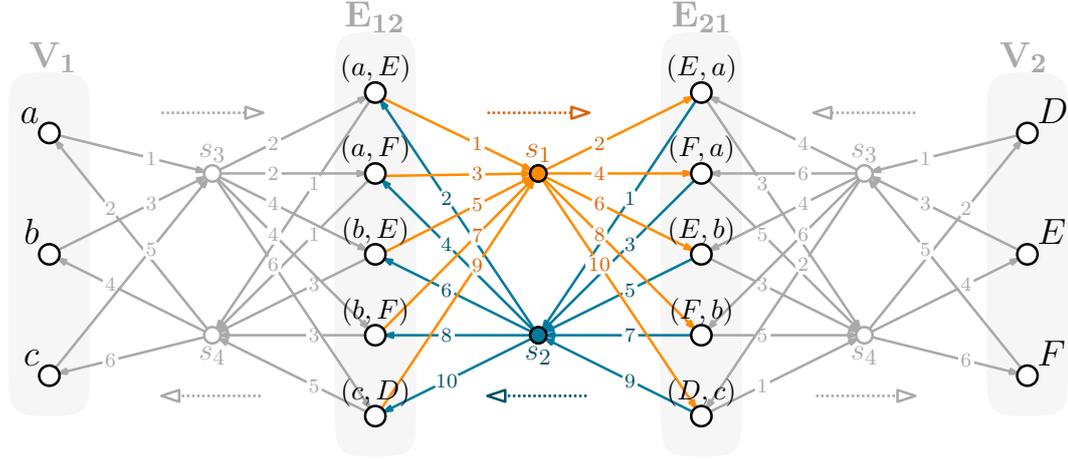}
                \caption{
                    Identity compatibility between $E_{12}$ and $E_{21}$ via $s_1$ and $s_2$.  
                    For example, the edge $(b,E)$ is the third in $<_{12}$, so the edges $((b,E),s_1)$ and $((E,b),s_2)$ are labeled with time~5, while the edges $(s_1,(E,b))$ and $(s_2,(b,E))$ are labeled with time~6.  
                }
                \label{fig:tw pNP special s3s4}
            \end{figure}
            \begin{figure}[h]
                \centering
                \includegraphics[width=0.9\linewidth]{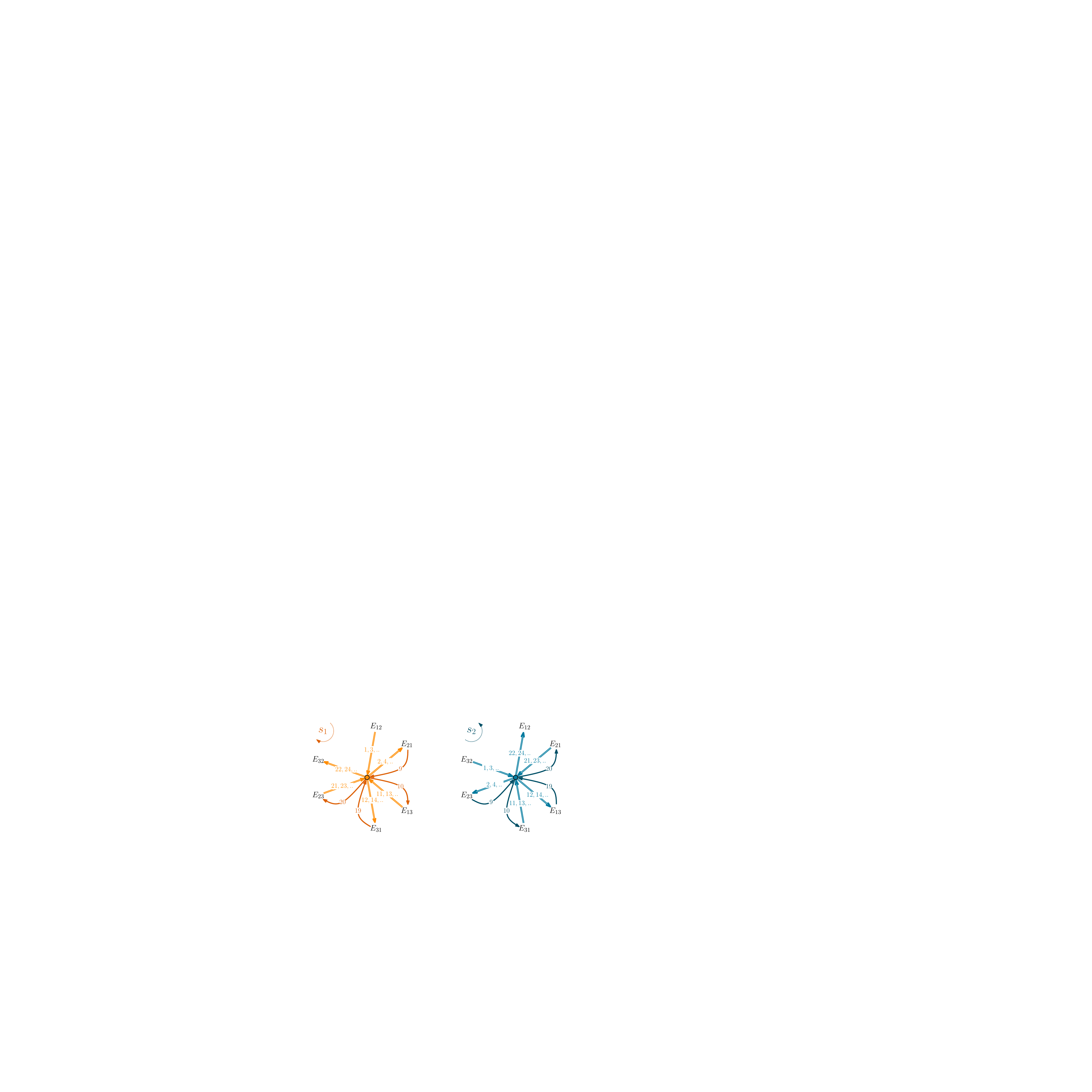}
                \caption{
                    Global arrangement at $s_1$ (left) and $s_2$ (right).  
                    At $s_1$, the blocks $\edgesidONE{i}{j}$ (\Cref{fig:tw pNP special s1s2}) are arranged in lexicographic order of the index pairs~$(i,j)$, with connecting temporal edges inserted between consecutive blocks: e.\,g., $(e_{21},s_1,9)$ for each $e_{21}\in E_{21}$ and $(s_1,e_{13},10)$ for each $e_{13}\in E_{13}$.
                    At $s_2$, the arrangement is reversed.
                }
                \label{fig:tw paraNP s1 s2}
            \end{figure}
            \fi
    
        \vspace{-2ex}
        
        \paragraph*{Global arrangement at $s_1,s_2$: $E_{ij}\xleftrightarrow{id} E_{ji}$ and  $E_{ij}\leftrightsquigarrow E_{ab}$ for $\{i,j\}\neq\{a,b\}$.} \vspace{-1ex}
            The \svertices $s_1$ and $s_2$ are shared across all pairs $(E_{ij},E_{ji})$. 
            For each pair, their identity construction (\Cref{eq: tw pNP id ONE,eq: tw pNP id TWO}) is implemented by arranging the corresponding edges around $s_1$ in lexicographic order of the index pairs~$(i,j)$.
            Let $\{i,j\}\neq\{a,b\}$ with $i<j$, $a<b$ and $a=i+1$.
            
            Incident to $s_1$, all edges of $\edgesidONE{i}{j}$ appear strictly before all edges of $\edgesidONE{a}{b}$, where the labels of $\edgesidONE{a}{b}$ are shifted accordingly.
            Additionally, to ensure full compatibility between different edge-gadget groups, we reserve two time labels between consecutive blocks: If $t^+_{ij}$ is the last label used by \edgesidONE{i}{j} and $t^-_{ab}$ the first label used by \edgesidONE{a}{b}, then we fix $\alpha,\beta$ with $t^+_{ij}<\alpha<\beta<t^-_{ab}$ and add edges from every \evertex in $E_{ji}$ to $s_1$ at time step $\alpha$, and edges from $s_1$ to every \evertex in $E_{ab}$ at time step $\beta$.
            See \Cref{fig:tw paraNP s1 s2}, left.
    
            At $s_2$, the same arrangement is mirrored: The edge-gadget groups are placed in reverse lexicographic order of the index pairs~$(i,j)$.  
            Between two consecutive blocks $(a,b)$ and $(i,j)$, additional connecting edges are inserted from every \evertex in $E_{ab}$ to $s_2$ and from $s_2$ to every \evertex in $E_{ji}$, using two reserved time labels placed strictly between the intervals of the two blocks.
            See \Cref{fig:tw paraNP s1 s2}, right.
            \ifshort
            \begin{figure}[h]
                \centering
                \includegraphics[width=0.9\linewidth]{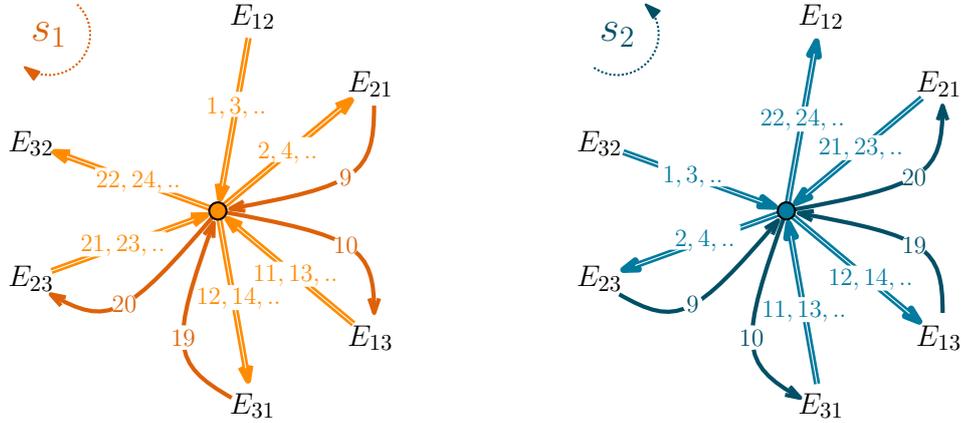}
                \caption{
                    Global arrangement at $s_1$ (left) and $s_2$ (right).  
                    At $s_1$, the blocks $\edgesidONE{i}{j}$ (\Cref{fig:tw pNP special s1s2}) are arranged in lexicographic order of the index pairs~$(i,j)$, with connecting temporal edges inserted between consecutive blocks: e.\,g., $(e_{21},s_1,9)$ for each $e_{21}\in E_{21}$ and $(s_1,e_{13},10)$ for each $e_{13}\in E_{13}$.
                    At $s_2$, the arrangement is reversed.
                }
                \label{fig:tw paraNP s1 s2}
            \end{figure}
            \fi
            \iflong
            \fi
        
        \vspace{-2ex}
        
        \paragraph*{Local construction for incidence compatibility via $s_3,s_4$: $V_i\xrightarrow{inc}E_{ij}$ for a single pair.}\vspace{-1ex}
            Let $i,j\in[k]$ with $i\neq j$. 
            We connect $V_i$ and $E_{ij}$ through the \svertices $s_3$ and $s_4$.
            For every \vvertex $a\in V_i$ and every \evertex $(a,b)\in E_{ij}$ incident to $a$, add a two-step path $(a,s_3,(a,b))$ and a path $((a,b),s_4,a)$ in the reverse direction.
            The starting times of these paths are determined by the lexicographic order $\ord_i$ on $V_i$:
                The left-to-right path $(a,s_3,(a,b))$ starts at time $2\cdot\pi_i(a)-1$ and ends at $2\cdot\pi_i(a)$. The right-to-left path $((a,b),s_4,a)$ also starts at time $2\cdot\pi_i(a)-1$ and ends at $2\cdot\pi_i(a)$.
            Formally, we add the temporal edges 
            \setlength{\abovedisplayskip}{3pt}
            \setlength{\belowdisplayskip}{3pt}
            \begin{align} 
                \edgesincTHREE{i}{j} 
                    &:=\big\{(a,s_3,2\cdot \pi_{i}(a)-1 ),\; (s_3,(a,b),2\cdot \pi_{i}(a)) : a\in V_i\text{ and }(a,b) \in E_{ij}\big\} \label{eq: tw pNP inc THREE};\\ 
                \edgesincFOUR{i}{j} 
                    &:=\big\{((a,b), s_4, 2\cdot\pi_{i}(a)-1),\; (s_4 , a, 2\cdot\pi_{i}(a)) : (a,b) \in E_{ij}\text{ and } a\in V_i \big\}. \label{eq: tw pNP inc FOUR}
            \end{align} 
            Refer to \Cref{fig:tw pNP special s1s2} for an illustration.
            \ifshort
            \begin{figure}[h]
                \centering
                \includegraphics[width=0.9\linewidth]{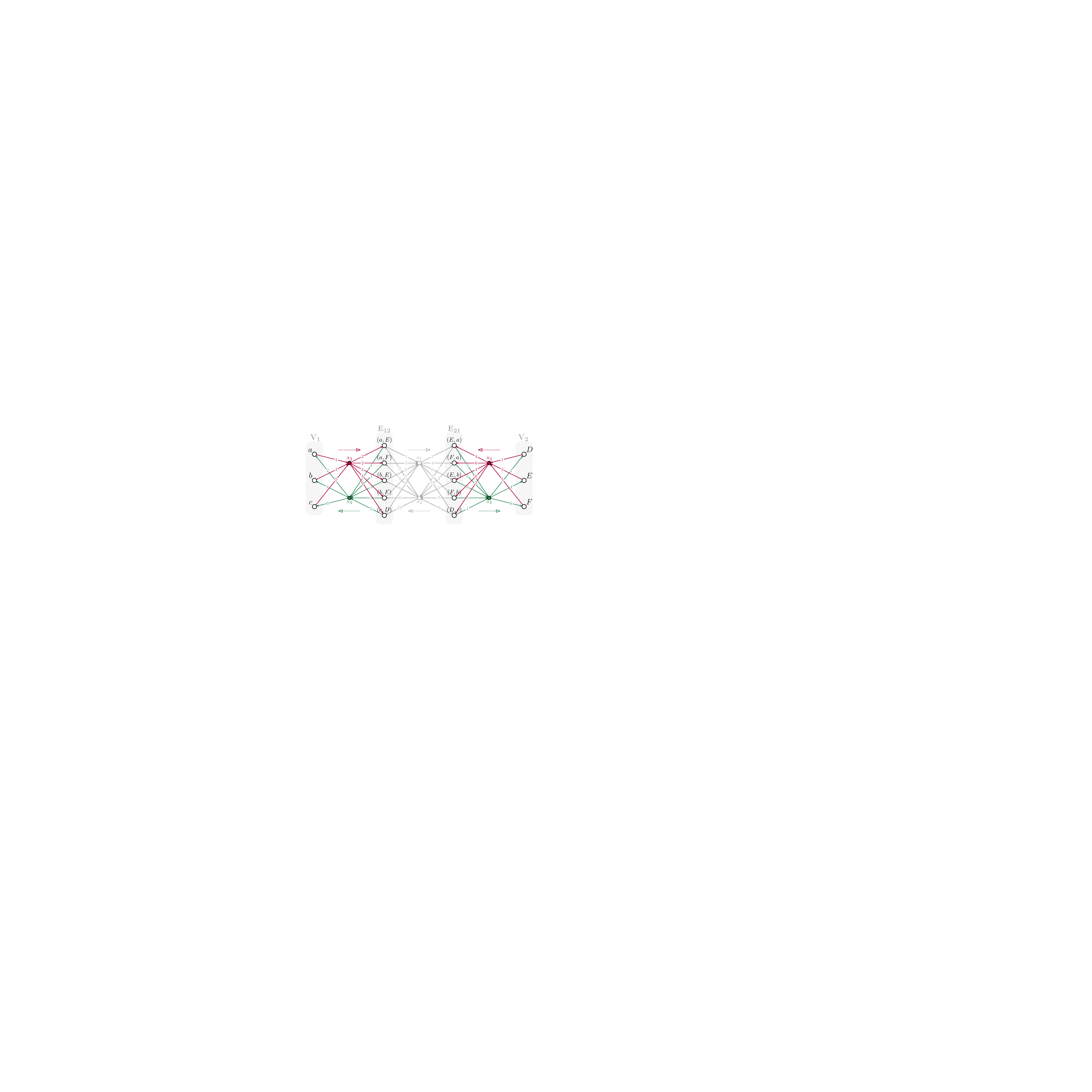}
                \caption{
                    Incidence compatibility via $s_3$ and $s_4$ between $V_1$ and $E_{12}$. 
                    Since $b \in V_1$ is second in $<_1$, the edges $(b,s_3)$, $((b,E),s_4)$ and $((b,F),s_4)$ have time label~$3$, while $(s_3,(b,E))$, $(s_3,(b,F))$ and $(s_4,b)$ have label~$4$.  
                    Note that the order of the time labels between $V_2$ and $E_{21}$ follows the lexicographic order on $V_2$, although the vertices of $E_{21}$ are not arranged in that order in the drawing.
                }
                \label{fig:tw pNP special s1s2}
            \end{figure}
            \fi
            \iflong
            \begin{figure}[h]
                \centering
                \includegraphics[width=\linewidth]{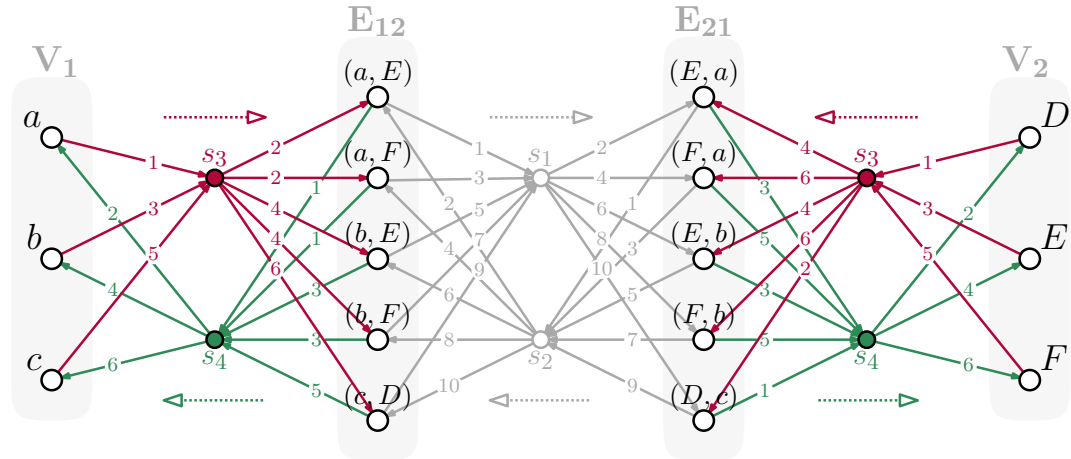}
                \caption{
                    Incidence compatibility via $s_3$ and $s_4$ between $V_1$ and $E_{12}$. 
                    Since $b \in V_1$ is second in $<_1$, the edges $(b,s_3)$, $((b,E),s_4)$ and $((b,F),s_4)$ have time label~$3$, while $(s_3,(b,E))$, $(s_3,(b,F))$ and $(s_4,b)$ have label~$4$.  
                    Note that the order of the time labels between $V_2$ and $E_{21}$ follows the lexicographic order on $V_2$, although the vertices of $E_{21}$ are not arranged in that order in the drawing.
                }
                \label{fig:tw pNP special s1s2}
            \end{figure}
            \begin{figure}[h]
                \centering
                \includegraphics[width=\linewidth]{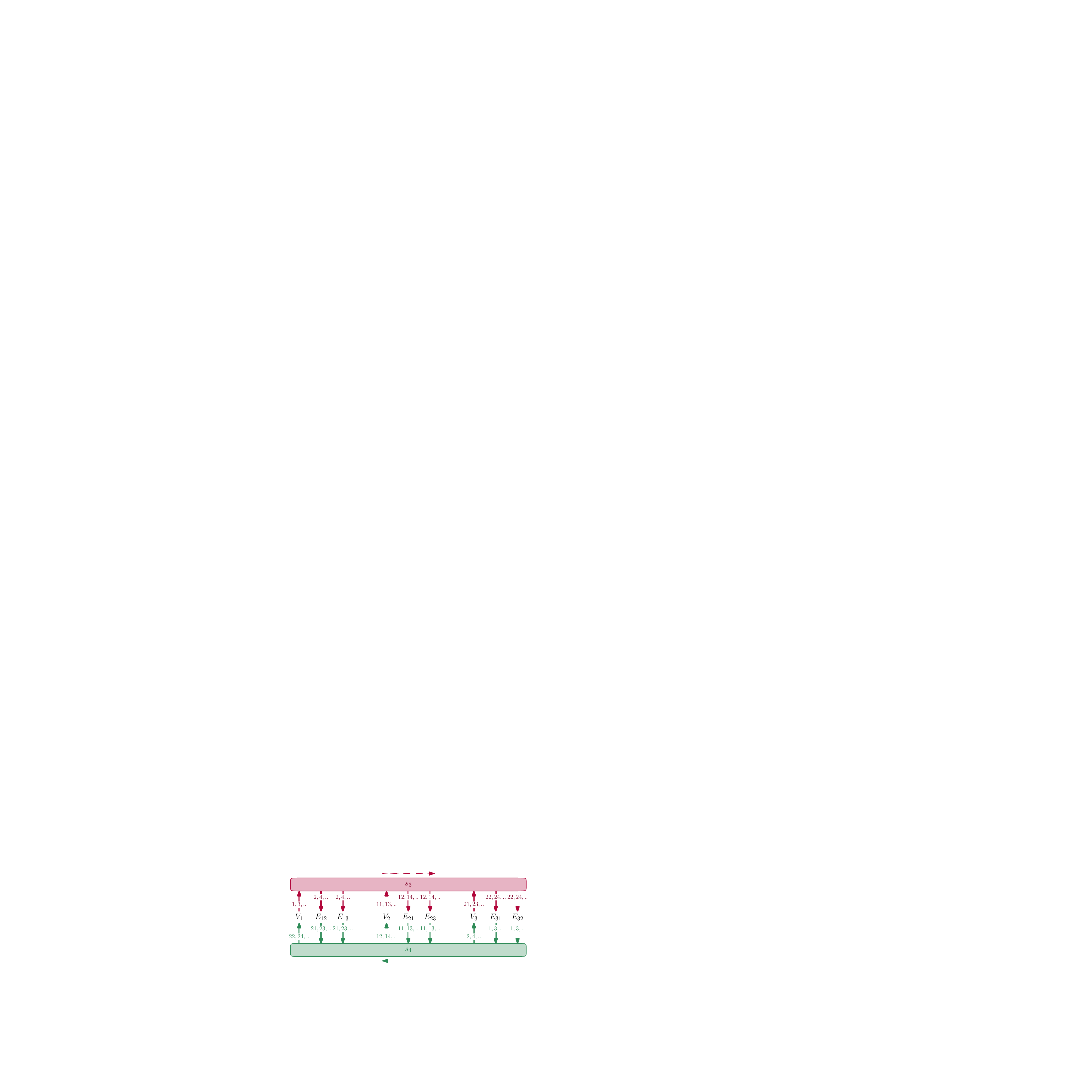}
                \caption{
                    Global arrangement at $s_3$ (top) and $s_4$ (bottom). 
                    At $s_3$, the blocks $\edgesincTHREE{i}{j}$ (\Cref{fig:tw pNP special s3s4}) are arranged in lexicographic order of the index pairs $(i,j)$, with every $V_i$ appearing before its incident edge sets $E_{ij}$.  
                    Additional connecting edges are inserted from $s_3$ to every vertex of $V_{i+1}$ between consecutive blocks.  
                    At $s_4$, the arrangement is reversed.
                }
                \label{fig:tw paraNP s3 s4}
            \end{figure}
            \fi
        \vspace{-2ex}

        \paragraph*{Global arrangement at $s_3,s_4$: $V_i\xleftrightarrow{inc}E_{ij}$.} \vspace{-1ex}
            The \svertices $s_3$ and $s_4$ are shared across all incidence gadgets between $V_i$ and $E_{ij}$.  
            For each $i\in[k]$, the incidence construction with every $j\in[k], i\neq j$ (see \Cref{eq: tw pNP inc THREE,eq: tw pNP inc FOUR}) is implemented by arranging the edges $\edgesincTHREE{i}{j}$ at $s_3$ in lexicographic order of the index pairs $(i,j)$, while the edges $\edgesincFOUR{i}{j}$ are arranged at $s_4$ in reverse lexicographic order:
            
            At $s_3$, all edges of $\edgesincTHREE{i}{j}$ 
            appear strictly before all edges of $\edgesincTHREE{i+1}{j'}$ for $j'\in[k], j'\neq i+1$, where the labels of \edgesincTHREE{i+1}{j'} are shifted accordingly.
            Edges from $V_i$ to $s_3$ are added only once, so that no temporal edge is duplicated.  
            See \Cref{fig:tw paraNP s3 s4}, top.
        
            At $s_4$, the same arrangement is mirrored: The incidence blocks are placed in reverse lexicographic order of the index pairs~$(i,j)$. 
            See \Cref{fig:tw paraNP s3 s4}, bottom.
            \ifshort
            \begin{figure}[h]
                \centering
                \includegraphics[width=0.9\linewidth]{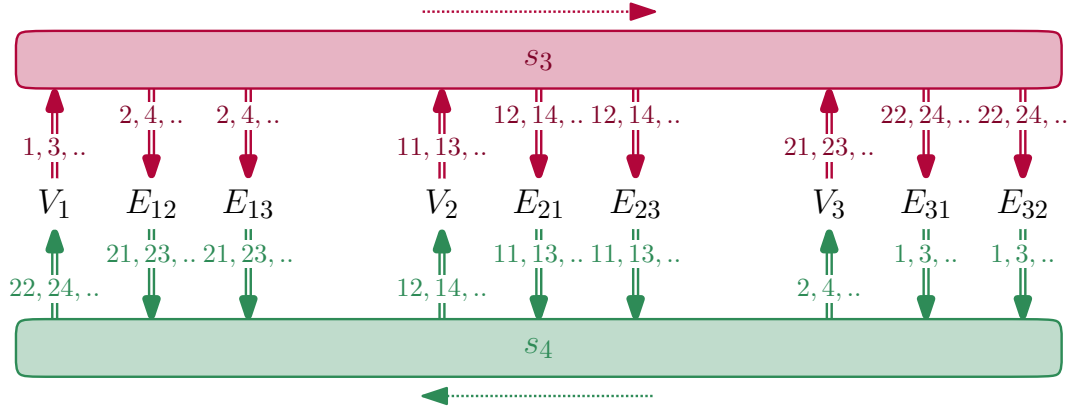}
                \caption{
                    Global arrangement at $s_3$ (top) and $s_4$ (bottom). 
                    At $s_3$, the blocks $\edgesincTHREE{i}{j}$ (\Cref{fig:tw pNP special s3s4}) are arranged in lexicographic order of the index pairs $(i,j)$, with every $V_i$ appearing before its incident edge sets $E_{ij}$.  
                    Additional connecting edges are inserted from $s_3$ to every vertex of $V_{i+1}$ between consecutive blocks.  
                    At $s_4$, the arrangement is reversed.
                }
                \label{fig:tw paraNP s3 s4}
            \end{figure}
            \fi

        \vspace{-2ex}
        \paragraph*{Global construction of full compatibility via $s_5$ and $s_6$:  $V_i\text{\reflectbox{$\rightsquigarrow$}}E_{jx}$ and $V_i\leftrightsquigarrow V_j$.}\vspace{-1ex}
            The \svertices $s_5$ and $s_6$ are used to realize the free relation between \vvertex sets and one direction of the free connections between \evertex and \vvertex sets:  
            
            At $s_5$, for each $i\in[k]$ in lexicographic order, we add an edge from $s_5$ to every \vvertex in $V_i$ with a label $2i-1$, and an edge from every \evertex in $E_{ij}$ ($j\in[k], j\neq i$) and every \vvertex in $V_i$ to $s_5$ with label $2i$. See \Cref{fig:tw pNP s5 s6}, top.
            
            At $s_6$, the same construction is mirrored in reverse lexicographic order of $i\in[k]$: For each $i\in[k]$, add an edge from $s_6$ to every \vvertex in $V_i$ with label $2(k+1-i)-1$, and from every \evertex in $E_{ij}$ ($j\in[k], j\neq i$) and every \vvertex in $V_i$ to $s_6$ with label $2(k+1-i)$. Note that in both $s_5$ and $s_6$, the edges from the \svertex to $V_i$ always appear before those towards the \svertex.  See \Cref{fig:tw pNP s5 s6}, bottom.
            \begin{figure}[h]
                \centering
                \includegraphics[width=0.9\linewidth]{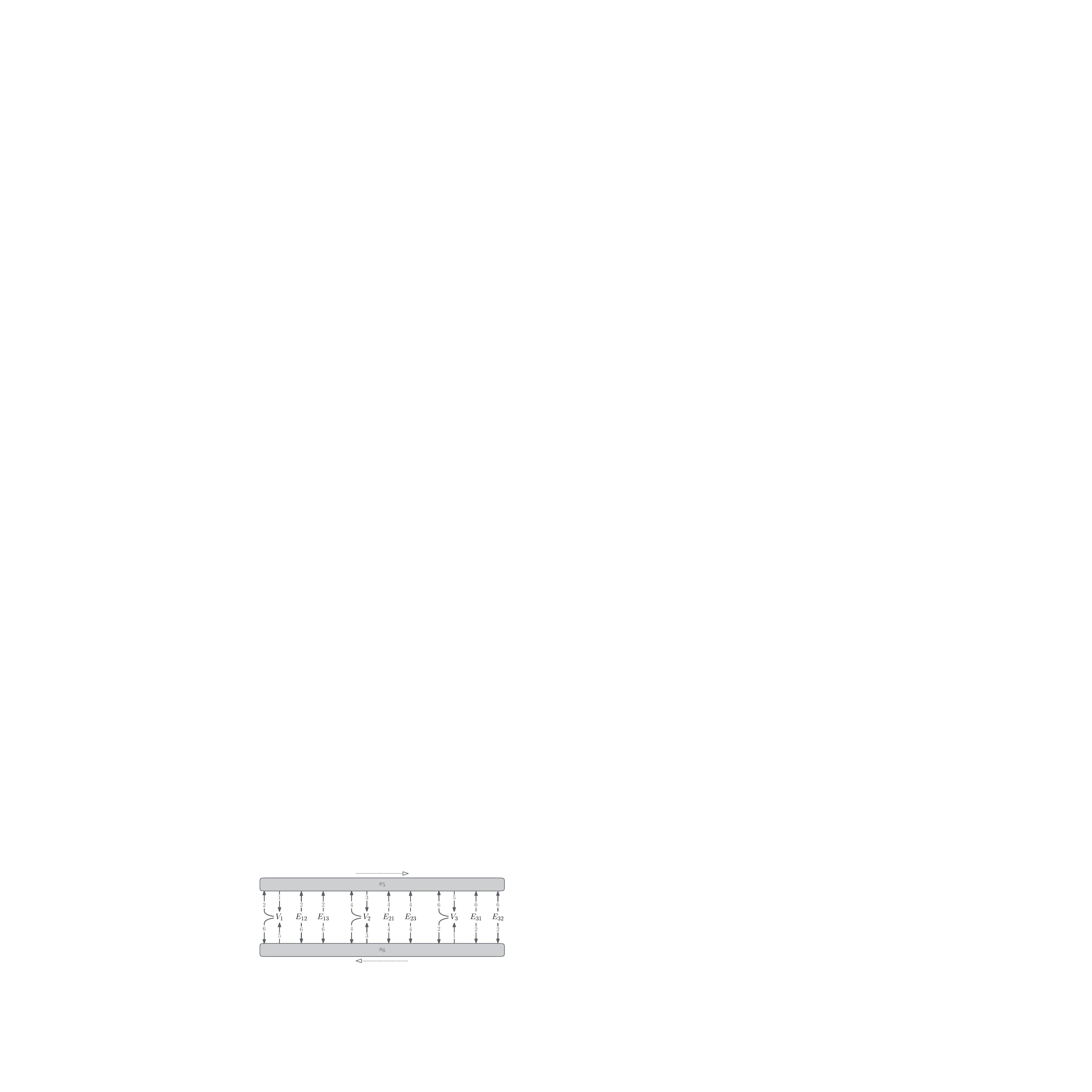}
                \caption{
                    Illustration of $s_5$ (top) and $s_6$ (bottom). 
                    Note that only \vvertices have incoming edges from an \svertex.
                }
                \label{fig:tw pNP s5 s6}
            \end{figure}

        \vspace{-2ex}
        \paragraph*{Global construction of full compatibility via $s_7$ and $s_8$:  $V_i\rightsquigarrow E_{jx}$.}\vspace{-1ex}
            The \svertices $s_7$ and $s_8$ are used to complete the free connections between edge and vertex sets. The construction is analogous to $s_5$ and $s_6$.  
            At $s_7$, for each $i\in[k]$ in lexicographic order, we add an edge from $s_7$ to every \evertex in $E_{ij}$ ($j\in[k], j\neq i$) with a label $2i-1$, and an edge from every \vvertex in $V_{i}$ to $s_7$ with label $2i$.  
            At $s_8$, the same construction is mirrored in reverse lexicographic order of $i\in[k]$.
            See \Cref{fig:tw pNP s7 s8} for an illustration.
            \begin{figure}[h]
                \centering
                \includegraphics[width=0.9\linewidth]{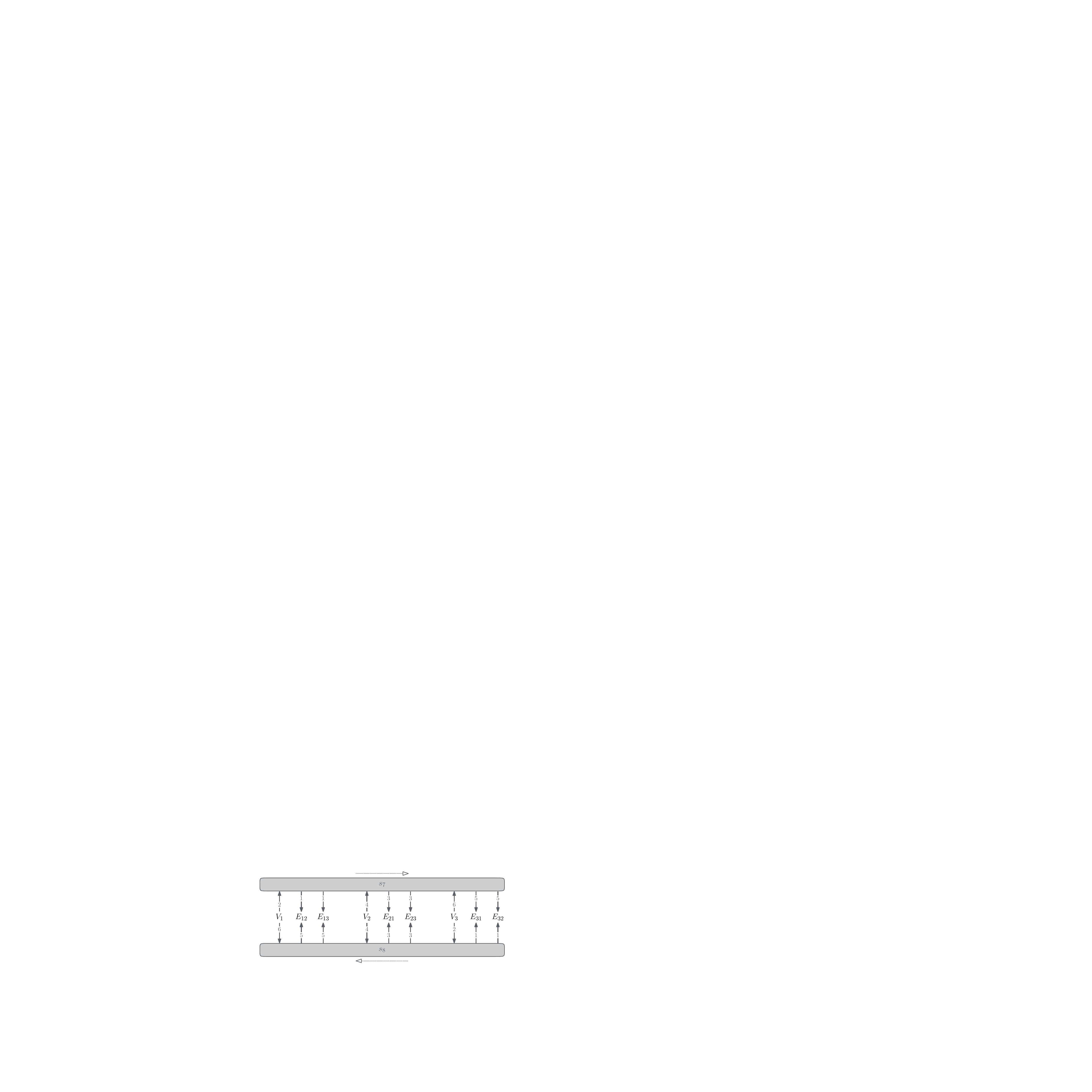}
                \caption{
                    Illustration of $s_7$ (top) and $s_8$ (bottom). 
                    Note that only \vvertices have outgoing edges to an \svertex, while \evertices have only incoming edges.
                }
                \label{fig:tw pNP s7 s8}
            \end{figure}
            
        \vspace{-2ex}
        \paragraph*{Universal compatibility among \svertices.} \vspace{-1ex}
            To ensure that all \svertices are compatible with $V_G$, we add a clique on $S$, 
            where every edge is bidirected and labeled with both $\{\early,\late\}$.  
    
        \vspace{-2ex}
        \paragraph*{Non-transitivity gadgets for \rvertices.} \label{para: nontrans gadgets} \vspace{-1ex}
            To prevent undesired transitive temporal paths through \rvertices (\vvertices and \evertices), we replace each such vertex $x \in R=\{V_i \colon i \in [k]\} \cup \{E_{ij},E_{ji} \colon 1 \leq i < j \leq k\}$ by a \textit{non-transitivity gadget} consisting of two vertices $x^{in}$ and $x^{out}$.  
            These are connected by bidirectional edges $(x^{in},x^{out})$ and $(x^{out},x^{in})$, both labeled with $\{\early,\late\}$.
            All edges in the construction originally directed towards $x$ now point to $x^{in}$, and all edges originally directed away from $x$ now originate from $x^{out}$.  
            \begin{figure}[h]
                \centering
                \includegraphics[width=0.7\linewidth]{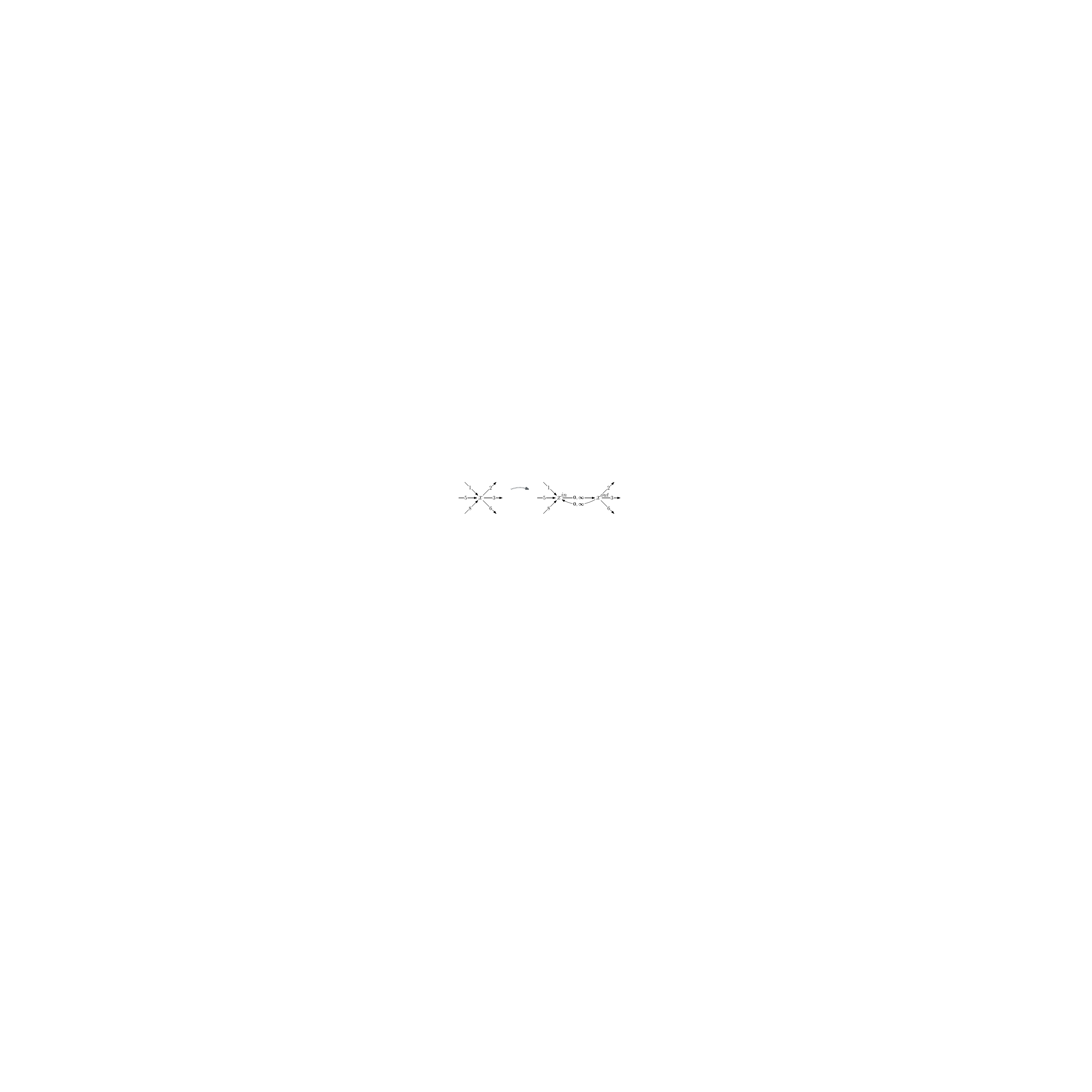}
                \caption{
                    Non-transitivity gadget of a vertex $x$: Incoming edges are redirected to $x^{in}$, outgoing edges to $x^{out}$.  
                    The vertices $x^{in}$ and $x^{out}$ are compatible and preserve the reachabilities of~$x$.
                    }
                \label{fig:tw paraNP nontransitive}
            \end{figure}
           
    \subsection{Proof of the Construction} 
    \label{subsec: tw paraNP proof}
        The intuition behind the non-transitivity gadgets is that they replace each \rvertex by two vertices that always occur together in a component and prevent transitive shortcuts: No temporal path can connect the gadgets of two \rvertices via a third \rvertex gadget.
        \begin{lemma}[non-transitivity gadgets]
        \label{lem:nontrans}
            Let $x \in \{V_i : i \in [k]\} \cup \{E_{ij}, E_{ji} : 1 \le i < j \le k\}=R$, and consider its non-transitivity gadget $\{x^{in}, x^{out}\}$.
            Then for all \rvertices $z,y\in R\setminus\{x\}$, there is no temporal path in \gcal from the $z$-gadget via the $x$-gadget to the $y$-gadget. Moreover, $x^{in}$ is compatible with exactly the same set of vertices as  $x^{out}$.
        \end{lemma}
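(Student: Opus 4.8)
The plan is to argue entirely from the local structure around a gadget $\{x^{in},x^{out}\}$. After the replacement, the temporal edges incident to this gadget are exactly the four internal edges $(x^{in},x^{out},\early),(x^{in},x^{out},\late),(x^{out},x^{in},\early),(x^{out},x^{in},\late)$, together with external edges $(s_\ell,x^{in},t)$ and $(x^{out},s_\ell,t)$ with $s_\ell\in S$, in which every label $t$ is an ordinary (integer) label; recall that $\early$ is strictly smaller and $\late$ strictly larger than every ordinary label in \gcal. I would first record three consequences of this picture: (i) the only edges entering the gadget from outside end in $x^{in}$; (ii) the only edges leaving the gadget for the outside start in $x^{out}$; and (iii) $x^{out}$ has no incoming external edge and $x^{in}$ has no outgoing external edge.

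For the first part I would argue by contradiction. A temporal path from a $z$-gadget vertex to a $y$-gadget vertex, with $z,y\neq x$, that meets the $x$-gadget starts outside it, so by (iii) it enters the $x$-gadget at $x^{in}$ via an edge $(s_\ell,x^{in},t)$ with $t$ ordinary. Since it must still leave the gadget and the only edges out of $x^{in}$ go to $x^{out}$, its next edge is internal, $x^{in}\to x^{out}$; being time-respecting with label in $\{\early,\late\}$ and $\early<t$, this label is $\late$. The path is now at $x^{out}$ at time $\late$, so any further edge needs label $\geq\late$, hence $\late$, and the unique such edge out of $x^{out}$ returns to $x^{in}$, repeating a vertex --- impossible. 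Hence the path dead-ends at $x^{out}$ and cannot reach the $y$-gadget. Strict and non-strict paths are covered uniformly, since $\early<t<\late$ strictly.

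For the second part it suffices to establish, for every vertex $w\notin\{x^{in},x^{out}\}$, the four implications $x^{out}\rightreach w\Rightarrow x^{in}\rightreach w$, $x^{in}\rightreach w\Rightarrow x^{out}\rightreach w$, $w\rightreach x^{in}\Rightarrow w\rightreach x^{out}$, and $w\rightreach x^{out}\Rightarrow w\rightreach x^{in}$; together with $x^{in}\compatible x^{out}$ (witnessed by the $\early$-edges in both directions) these give $x^{in}\compatible w\iff x^{out}\compatible w$ for all $w$. Two of the implications follow by prepending or appending an internal edge: prepend $(x^{in},x^{out},\early)$ to any $x^{out}$-to-$w$ path (valid since $\early$ is minimal), and append $(x^{in},x^{out},\late)$ to any $w$-to-$x^{in}$ path (valid because, by (i), (iii) and $w\neq x^{out}$, such a path avoids $x^{out}$ and hence arrives at $x^{in}$ at an ordinary time $<\late$). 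The remaining two follow by stripping the same internal edge: a path from $x^{in}$ to an outside $w$ must begin with an internal edge to $x^{out}$, and this edge cannot be the $\late$-one (otherwise the path is stranded at $x^{out}$ exactly as in the first part, forcing $w=x^{out}$), so it is the $\early$-edge and the tail is an $x^{out}$-to-$w$ path; dually, a path from an outside $w$ to $x^{out}$ must end with an internal edge from $x^{in}$, so its prefix is a $w$-to-$x^{in}$ path.

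I expect the only genuinely delicate point --- and the one worth spelling out --- is the bookkeeping on which of the extreme labels $\early,\late$ a traversed internal edge carries: every argument rests on the observation that a path crossing the gadget boundary from outside does so at an ordinary time strictly between $\early$ and $\late$, which forces the label $\late$ on any subsequent internal step and thereby dead-ends the path at $x^{out}$. Beyond this observation the proof is a short case analysis on the constant-size neighbourhood of the gadget, and the strict/non-strict and directed/undirected cases need no separate treatment.
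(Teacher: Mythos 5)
Your proof is correct and takes essentially the same route as the paper: a local case analysis of the constant-size gadget neighbourhood, exploiting that $\early$ (resp.\ $\late$) is strictly smaller (resp.\ larger) than every ordinary label incident to $x$, so that a path entering at $x^{in}$ from outside is forced onto the $\late$-labelled internal edge and dead-ends at $x^{out}$. Your treatment of the compatibility claim is in fact slightly more complete than the paper's, which only spells out the two prepend/append implications and leaves the two ``stripping'' directions implicit.
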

        \begin{proof}
            Let $t_{\min}$ and $t_{\max}$ be the minimum and maximum time labels on edges incident to $x$. By construction $\early < t_{\min}$ and $t_{\max} < \late$.

            Consider any temporal path entering the $x$-gadget from outside, arriving at $x^{in}$ at time $t \in [t_{\min},t_{\max}]$. The only way to continue is via $(x^{in},x^{out},\late)$, but since $\late > t_{\max}$, no outgoing external edge from $x^{out}$ is available afterwards.
            Symmetrically, any path leaving the gadget to the outside at some time $t \in [t_{\min},t_{\max}]$ would have to reach $x^{out}$ by $t$. The only possibly preceding edges are $(x^{in},x^{out},\early)$ and $(x^{in},x^{out},\late)$: The latter is too late, while the former is too early to be preceded by any incoming external edge without violating the temporal ordering.  
            Thus, no temporal path connecting two \rvertices can traverse $x$.
            
            Moreover, $x^{in}$ reaches the same vertices as $x^{out}$ by taking the edge $(x^{in},x^{out},\early)$, and conversely every vertex that reaches $x^{in}$ also reaches $x^{out}$ via $(x^{in},x^{out},\late)$.
            Thus, $x^{in}$ and $x^{out}$ have exactly the same compatibilities and there exists no maximal connected set which contains one but not the other.
        \end{proof}
        To simplify the subsequent argumentation, we will therefore treat each gadget as a single \textit{meta-vertex}.
        \begin{remark}[meta-vertices]
        \label{remark:meta-vertices}
            In the remainder of the proof each non-transitivity gadget $\{x^{in},x^{out}\}$ is treated as a single meta-vertex $x \in R$.
            By \Cref{lem:nontrans}, whenever such an $x$ is part of a temporal connected component, the corresponding component in $\gcal$ necessarily contains both $x^{in}$ and $x^{out}$.  
            Consequently, in size arguments, every meta-vertex contributes weight~2.  
        \end{remark}
        We now proceed to prove \Cref{thm: tw is paraNP}. To that end, we provide a series of lemmata showing that the construction enforces exactly the intended compatibilities:
        For distinct indices $i,j,x,i',j'\in [k]$ with $i \neq j$, $j\neq x$, and $\{i,j\} \neq \{i',j'\}$, we want
        \begin{description}
            \item[\textnormal{(C1)}] 
            $V_i \xleftrightarrow{\;inc\;} E_{ij}$, i.e., for $a \in V_i$ and $e \in E_{ij}$, $a$ is compatible with $e$ iff $a \in e$; \label{item:inc}
            \item[\textnormal{(C2)}]
            $V_i \leftrightsquigarrow V_j$;
            \item[\textnormal{(C3)}] $V_i\leftrightsquigarrow E_{jx}$;
            \item[\textnormal{(C4)}]
            $E_{ij} \xleftrightarrow{\;id\;} E_{ji}$, i.e., for $e \in E_{ij}$ and $f \in E_{ji}$, $e$ is compatible with $f$ iff $f = e^{-1}$; \label{item:id}
            \item[\textnormal{(C5)}]
            $E_{ij} \leftrightsquigarrow E_{i'j'}$.
        \end{description}
        We first verify (C1) and (C4) which follow from the local constructions at the corresponding \svertices, then verify (C2), (C3) and (C5) which follow from the global ordering around the \svertices, then show that the \svertices are part of every maximal tcc, and lastly prove that the vertices within a \vvertex or \evertex set are incomparable. All those lemmata are then used to show the actual construction.
        
        \begin{lemma}[identity compatibility]
        \label{lem:identity}
            For $i< j$, an \evertex $e \in E_{ij}$ and an \evertex $f \in E_{ji}$ are compatible if and only if $f = e^{-1}$.
        \end{lemma}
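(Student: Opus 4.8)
The plan is to pin down exactly which temporal paths can connect an $E_{ij}$-vertex to an $E_{ji}$-vertex (these are distinct vertices, so every such path is nontrivial and passes through separators). I work throughout in the meta-vertex picture of \Cref{remark:meta-vertices}: ``$e$'' abbreviates the gadget $\{e^{in},e^{out}\}$, a temporal path starting at $e$ first (possibly) takes the intra-gadget edge $e^{in}\to e^{out}$ at time $\early$ and then leaves through $e^{out}$, while a path ending at $f$ enters at $f^{in}$ and then (possibly) takes $f^{in}\to f^{out}$ at time $\late$. The first step I would establish is that any temporal path between an $E_{ij}$-vertex and an $E_{ji}$-vertex uses exactly one separator. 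Once such a path has left $e^{out}$ it is at some separator at a label $t_1$, and $t_1$ lies strictly between $\early$ and $\late$, since the only edges labelled $\early$ or $\late$ are the intra-gadget edges and the separator clique, neither of which joins a non-separator to a separator. Every separator--separator edge carries only $\early$ and $\late$, so from that separator the path can only move to another separator at time $\late$ --- after which, having overshot every normal label, it can never again reach a non-separator --- or go directly to a non-separator, which by \Cref{lem:nontrans} must be the gadget of $e$ (a dead end, as re-entering $e^{in}$ leaves no usable outgoing edge) or of $f$. Hence every witnessing path has the form ``(intra-gadget edge of $e$), $e^{out}\to s$, $s\to f^{in}$, (intra-gadget edge of $f$)'' for a single separator $s\in S$, the first and last pieces being optional.

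For the direction $f=e^{-1}\Rightarrow e\compatible f$ it then suffices to exhibit two such paths from the local identity construction. Writing $c_1,c_2$ for the constants by which the $(i,j)$-block is shifted at $s_1$ resp.\ $s_2$, \Cref{eq: tw pNP id ONE} provides $(e^{out},s_1,c_1+2\pi_{ij}(e)-1)$ and $(s_1,f^{in},c_1+2\pi_{ij}(f))$, which form an increasing pair because $\pi_{ij}(f)=\pi_{ij}(e)$ and hence witness $e\rightreach f$; symmetrically \Cref{eq: tw pNP id TWO} provides $(f^{out},s_2,c_2+2\pi_{ij}(e))$ and $(s_2,e^{in},c_2+2\pi_{ij}(e)+1)$, witnessing $f\rightreach e$.

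For the converse I would run a case analysis over the separator $s$ in the form above. One checks from the construction that the only separators with an edge \emph{out of} an $E_{ij}$-vertex are $s_1$ (its identity edge), $s_2$ (a connecting edge placed just after the $(i,j)$-block), $s_4$ (incidence), and $s_5,s_6$ (full compatibility), and that the only separators with an edge \emph{into} an $E_{ji}$-vertex are $s_1$ (its identity edge), $s_2$ (a connecting edge placed just before the $(i,j)$-block), $s_3$ (incidence), and $s_7,s_8$ (full compatibility). Hence a path from $e$ to $f$ must use $s\in\{s_1,s_2\}$; at $s_2$ the only available edges are the two connecting edges just mentioned, one strictly after and one strictly before the $(i,j)$-block, so they cannot be chained, and therefore $e\rightreach f$ forces the $s_1$ identity path, whose labels give $c_1+2\pi_{ij}(e)-1\le c_1+2\pi_{ij}(f)$, i.e.\ $\pi_{ij}(e)\le\pi_{ij}(f)$. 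The mirror-image argument shows that $f\rightreach e$ can only use the $s_2$ identity path (at $s_1$ the two relevant connecting edges are again in the wrong temporal order), forcing $\pi_{ij}(f)\le\pi_{ij}(e)$. Together these give $\pi_{ij}(e)=\pi_{ij}(f)$, and since $\pi_{ij}$ is injective on $E_{ij}$ and satisfies $\pi_{ij}(x)=\pi_{ij}(x^{-1})$, this forces $f=e^{-1}$.

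The step I expect to be the real work is the case analysis of the last paragraph: one has to check, separator by separator, that no stray edge out of $E_{ij}$ or into $E_{ji}$ opens an unintended one-separator path, and that the single ``wrong-type'' separator that genuinely touches both sides --- $s_2$ on the way from $e$ to $f$, and $s_1$ on the way from $f$ to $e$ --- is defeated purely by the temporal order in which the inter-block connecting edges were inserted. The remaining ingredients --- reducing every path to a one-separator path (which relies only on \Cref{lem:nontrans} and on the fact that exactly the intra-gadget edges and separator-clique edges carry the labels $\early$ and $\late$) and the rank arithmetic --- are routine.
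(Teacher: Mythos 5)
Your proposal is correct and follows essentially the same route as the paper's proof: reduce (via \Cref{lem:nontrans} and the $\early/\late$ labelling of gadget and separator-clique edges) to temporal paths through a single separator, read off the rank inequalities $\pi_{ij}(e)\le\pi_{ij}(f)$ and $\pi_{ij}(f)\le\pi_{ij}(e)$ from the identity edges at $s_1$ and $s_2$, and rule out $s_3,\dots,s_8$ by the absence of the required in-/out-edges. You are in fact somewhat more explicit than the paper in two places it treats implicitly — excluding multi-separator detours through the separator clique and checking that the inter-block connecting edges at $s_1,s_2$ are in the wrong temporal order — but this only tightens the same argument.
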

        \begin{proof}
            By \Cref{lem:nontrans}, no temporal path can connect two \rvertices via a third \rvertex. Hence, all reachabilities between $E_{ij}$ and $E_{ji}$ must be realized through \svertices.   
            Refer back to \Cref{fig:tw paraNP s1 s2} for an illustration of the identity construction at~$s_1$ and~$s_2$. 

            By construction of $s_1$ (see \Cref{eq: tw pNP id ONE}), $e\in E_{ij}$ reaches $s_1$ by time $\pi_{ij}(e)$, and from there every $f\in E_{ji}$ with $\pi_{ij}(e)\leq \pi_{ij}(f)$.  
            Dually, by construction of $s_2$ (see \Cref{eq: tw pNP id TWO}), $f\in E_{ji}$ reaches $s_2$ by time $\pi_{ij}(f)$, and from there every $e\in E_{ij}$ with $\pi_{ij}(f)\leq \pi_{ij}(e)$. 
            Thus, $e\in E_{ij}$ reaches exactly those \evertices $f\in E_{ji}$ with $\pi_{ij}(e)\leq\pi_{ij}(f)$ and is reached by the \evertices $f\in E_{ji}$ with $\pi_{ij}(f)\leq\pi_{ij}(e)$. In total, $e$ reaches exactly the one $f$ for which $\pi_{ij}(e)=\pi_{ij}(f)$ which by definition is $f=e^{-1}$.  
            As a result, an \evertex $e\in E_{ij}$ and an \evertex $f\in E_{ji}$ are compatible if $f=e^{-1}$.
            
            It remains to analyze the connections via other \svertices.  
            At $s_3$ no \evertex has outgoing edges, and at $s_4$ no \evertex has incoming edges.
            At $s_5$ and $s_6$, no \evertex has incoming edges.  
            At $s_7$ and $s_8$, no \evertex has outgoing edges.  
            Therefore, no additional compatibilities are created outside $s_1$ and $s_2$.
        \end{proof}
        
        \begin{lemma}[incidence compatibility] \label{lem:incidence}
            {
            For $i \ne j$, a \vvertex $a \in V_i$ and an \evertex $e \in E_{ij}$ are compatible if and only if $a \in e$.
            }
        \end{lemma}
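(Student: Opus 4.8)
The plan is to follow the template of \Cref{lem:identity}, now routing every reachability between $V_i$ and $E_{ij}$ through the incidence \svertices $s_3,s_4$ rather than through $s_1,s_2$. By \Cref{lem:nontrans} no temporal path joins two \rvertices through a third \rvertex gadget, so any temporal path between a \vvertex of $V_i$ and an \evertex of $E_{ij}$ uses only \svertices as internal vertices. Fixing $a\in V_i$ and $e=(a',b')\in E_{ij}$, it thus suffices to determine which temporal paths from $a$ to $e$ and from $e$ to $a$ exist through $S$.

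First I would isolate the contribution of $s_3$ and $s_4$. By the local construction (\Cref{eq: tw pNP inc THREE}), $a$ reaches $s_3$ precisely in the time slot determined by its rank $\pi_i(a)$, and one time unit later $s_3$ has an outgoing edge to every \evertex $(a,c)\in E_{ij}$ incident to $a$; globally, the per-color-class blocks of these edges occupy disjoint, lexicographically ordered time windows, with the $V_i$-to-$s_3$ edges preceding the $s_3$-to-$E_{ij}$ edges inside the block of color class $i$. Comparing time labels then yields that $a$ reaches $(a',b')$ via $s_3$ if and only if $\pi_i(a)\le\pi_i(a')$. Dually, by \Cref{eq: tw pNP inc FOUR} and the mirrored arrangement at $s_4$, the \evertex $(a',b')$ reaches $a$ via $s_4$ if and only if $\pi_i(a')\le\pi_i(a)$. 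Intersecting the two conditions, $a$ and $e$ are compatible through paths using only $s_3,s_4$ exactly when $\pi_i(a)=\pi_i(a')$, i.e.\ when $a=a'$, i.e.\ when $a\in e$ -- the claimed characterization.

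It then remains to argue that no other \svertex, and no chain through several \svertices, contributes an extra reachability between $V_i$ and $E_{ij}$; I expect this to be the main obstacle. Following \Cref{lem:identity}, I would argue: $s_1$ and $s_2$ are incident only to \evertices (including those of other index pairs, via the connecting edges), so any path through $s_1$ or $s_2$ from a \vvertex to an \evertex would need an \evertex as an internal vertex, which \Cref{lem:nontrans} forbids; at $s_5,s_6$ only \vvertices have incoming edges and at $s_7,s_8$ only \evertices have incoming edges, so once a path leaves $a$ (resp.\ $e$) onto any of $s_5,\dots,s_8$, every non-clique continuation lands on an \rvertex and dead-ends, while the only clique continuation is an edge inside $S$ taken at the late label $\late$ (the early label $\early$ being unavailable once a construction edge has been used), after which all edges out of $S$ to an \rvertex already lie in the past. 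The same argument applied at $s_3$ (reached only by the single edge from $a$) and at $s_4$ shows that no longer detour reaches $e$ from $a$ or $a$ from $e$. Hence the unique temporal path from $a$ to $e$ is $(a,s_3,e)$ and the unique one from $e$ to $a$ is $(e,s_4,a)$, completing the proof.

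One subtlety worth keeping in mind throughout: $a$ genuinely does reach the higher-ranked \evertices $(a',b')\in E_{ij}$ with $\pi_i(a')>\pi_i(a)$ via $s_3$, and it is only the failure of the reverse direction at $s_4$ that keeps them from being compatible with $a$. So the proof really does need both halves of the $s_3/s_4$ analysis together with the monotone order of time labels inside each block -- and this is exactly where the global (not merely local) arrangement of the blocks at $s_3$ and $s_4$ is used.
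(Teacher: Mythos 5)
Your overall route is the paper's: use \Cref{lem:nontrans} to force all $V_i$--$E_{ij}$ reachability through the separators, read off from the local constructions at $s_3$ and $s_4$ that $a$ reaches exactly the $(x,y)\in E_{ij}$ with $\pi_i(a)\le\pi_i(x)$ and is reached by exactly those with $\pi_i(x)\le\pi_i(a)$, and intersect to get $a=x$. That core is correct, as is your dismissal of $s_1,s_2$ (no edges between $V_i$ and these separators except through a forbidden intermediate $E$-vertex or the $S$-clique), and your explicit treatment of multi-separator detours via the clique on $S$ (only the $\bm{\omega}$-labeled clique edges are usable after a construction edge, and then every edge back out of $S$ lies in the past) is actually spelled out more carefully than in the paper.

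The gap is in how you exclude $s_5,\dots,s_8$. You argue that any non-clique continuation from these separators ``lands on an $R$-vertex and dead-ends,'' but dead-ending is irrelevant: the dangerous case is precisely that the landing vertex \emph{is} the target, i.e.\ that $a$ reaches some $(x,y)\in E_{ij}$ with $\pi_i(x)<\pi_i(a)$ directly via $s_7$ or $s_8$, or that $(x,y)$ with $\pi_i(x)>\pi_i(a)$ reaches $a$ directly via $s_5$ or $s_6$; either would complete a forbidden compatibility together with your $s_3$/$s_4$ paths and falsify the ``only if'' direction. Ruling this out requires the time-label ordering at these separators, which your argument never uses: at $s_7$ (and mirrored at $s_8$) the edges $s_7\to E_{ij}$ carry label $2i-1$ while $V_i\to s_7$ carries label $2i$, so after $a$ arrives there is no edge left towards $E_{ij}$ (only towards $E_{i'j'}$ with $i'\neq i$, which is the intended full compatibility); dually, at $s_5$ and $s_6$ the edges towards $V_i$ precede the edges from $E_{ij}$ into the separator, so no $E_{ij}$-vertex can reach any vertex of $V_i$ there. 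This is exactly the step the paper's proof makes explicit (``after $E_{ij}$ reaches these separators there are no edges to any $V$-vertex in $V_i$,'' and the analogous statement at $s_7,s_8$), and without it your claimed uniqueness of the paths $(a,s_3,e)$ and $(e,s_4,a)$ is asserted rather than proved. The fix is short, but it is the load-bearing part of the exclusion argument, so it needs to be stated.
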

        \begin{proof}
            By \Cref{lem:nontrans}, no temporal path can connect two \rvertices via a third \rvertex. Hence, all reachabilities between $V_i$ and $E_{ij}$ must be realized through \svertices.  
            Refer back to \Cref{fig:tw paraNP s3 s4} for an illustration of the identity construction at~$s_3$ and~$s_4$.

            By construction of $s_3$ (see \Cref{eq: tw pNP inc THREE}), $a$ reaches $s_3$ by time $\pi_i(a)$, and from there every $(x,y)\in E_{ij}$ with $\pi_i(a)\leq \pi_i(x)$.  
            Dually, by construction of $s_4$ (see \Cref{eq: tw pNP inc FOUR}), $(a,b)\in E_{ij}$ reaches $s_4$ by time $\pi_i(a)$, and from there every $x\in V_i$ with $\pi_i(x)\leq \pi_i(a)$.  
            Thus, $a$ reaches exactly those \evertices $(x,y)\in E_{ij}$ with $\pi_i(a)\leq\pi_i(x)$ and is reached by exactly those \evertices $(x,y)\in E_{ij}$ with $\pi_i(x)\leq\pi_i(a)$, \ie $a=x$.
            As a result, a \vvertex $a\in V_i$ and an \evertex $e\in E_{ij}$ are compatible if $a\in e$.
            
            It remains to analyze the connections via other \svertices.  
            First, vertices in $V_i$ do not interact with $s_1$ or $s_2$.  
            At $s_5$ and $s_6$, no \evertex has incoming edges, and after $E_{ij}$ reaches these separators there are no edges to any \vvertex in $V_i$.  
            At $s_7$ and $s_8$, no \vvertex has incoming edges, and after $V_i$ reaches these separators there are no edges to any \evertex in~$E_{ij}$.  
            Therefore, no additional compatibilities are created outside $s_3$ and $s_4$.
        \end{proof}

        \begin{lemma}[full compatibilities] \label{lem:full-compatibilities}
            All pairs of vertex sets of the following form are fully compatible, \ie every vertex of the first set is compatible with every vertex of the second set:
            \begin{enumerate}
                \item $(V_i,V_j)$ for all $i\neq j$;
                \item $(V_i,E_{jx})$ for all $j\neq i$;
                \item $(E_{ij},E_{i'j'})$ for all unordered pairs $\{i,j\}\neq\{i',j'\}$.
            \end{enumerate}
        \end{lemma}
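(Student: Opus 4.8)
The plan is to exhibit, for each of the three families of pairs and for both directions of reachability, an explicit temporal path between the two vertices. By \Cref{lem:nontrans} every temporal path joining two \rvertices uses only \svertices as internal vertices, so it suffices to produce paths of this form; and since the statement asserts only the \emph{existence} of compatibilities, no ``no unwanted reachabilities'' argument is needed here (that is the content of \Cref{lem:identity,lem:incidence}). Throughout I treat each non-transitivity gadget as a single meta-vertex (\Cref{remark:meta-vertices}): a path leaving $x$ may be taken to start at $x^{out}$ and a path reaching $x$ to end at $x^{in}$, and all paths I construct are strictly time-increasing, hence valid already in the strict (and therefore also in the non-strict) setting.

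For part~1, $(V_i,V_j)$ with $i\ne j$, say $i<j$: given $a\in V_i$ and $b\in V_j$, the construction at $s_5$ contains the edges $(a,s_5,2i)$ and $(s_5,b,2j-1)$ with $2i<2j-1$, so $a\rightreach b$; symmetrically $s_6$ contains $(b,s_6,2(k+1-j))$ and $(s_6,a,2(k+1-i)-1)$ with $2(k+1-j)<2(k+1-i)-1$, so $b\rightreach a$. For part~2, $(V_i,E_{jx})$ with $j\ne i$ -- so $E_{jx}$ is the index-$j$ block at $s_5,\dots,s_8$ -- given $a\in V_i$ and $e\in E_{jx}$: to reach $a$ from $e$ use $s_5$ if $j<i$ (edges $(e,s_5,2j),(s_5,a,2i-1)$) and $s_6$ if $j>i$ (edges $(e,s_6,2(k+1-j)),(s_6,a,2(k+1-i)-1)$); to reach $e$ from $a$ use $s_7$ if $i<j$ (edges $(a,s_7,2i),(s_7,e,2j-1)$) and $s_8$ if $i>j$. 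As $i\ne j$, one of the two options in each direction is time-respecting, so $a\compatible e$. Each of these is a two-edge path through a single \svertex.

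Part~3, $(E_{ij},E_{i'j'})$ with $\{i,j\}\ne\{i',j'\}$, is the delicate one. One is tempted to route from $E_{ij}$ to $E_{i'j'}$ through an incident \vvertex, but this is forbidden: by \Cref{lem:nontrans} no temporal path may pass through a third \rvertex gadget, so the whole connection must be carried through \svertices only. A one-separator route through $s_5,\dots,s_8$ fails as well, since at $s_5,s_6$ an \evertex has only edges towards the separator while at $s_7,s_8$ only edges from it, and any route using the clique on $S$ to step from one separator to another lands at the next separator only at time $\late$, that is, after all of its outgoing edges into \evertices. Instead this compatibility is exactly what the block structure at $s_1$ and $s_2$ provides. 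Assume the block $\{i,j\}$ precedes $\{i',j'\}$ in the lexicographic order of index pairs used at $s_1$ (the reverse case is handled symmetrically at $s_2$). Then $e\in E_{ij}$ reaches $s_1$ at some moment within the interval of its own block -- directly via an identity edge $e\to s_1$ if $e$ is in the forward direction of that block, or via the connecting edge leaving the block if $e$ is in the reverse direction. From $s_1$ every outgoing edge with a strictly larger label is then available, and in particular $s_1$ has an outgoing edge to $f$ at (or just before) the interval of block $\{i',j'\}$, which is strictly later; hence $e\rightreach f$ for every $f\in E_{i'j'}$. Mirroring this at $s_2$, where the blocks run in reverse order, $f$ -- now in an earlier block -- reaches $s_2$ within its own block and from there reaches every vertex of $E_{ij}$, so $f\rightreach e$. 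Thus $e\compatible f$.

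The main obstacle is this last case, and it has two parts: first, realizing that the natural relay through a \vvertex is ruled out by \Cref{lem:nontrans}, so the identity separators must do all of the work; second, checking that the two reserved labels $\alpha<\beta$ placed between consecutive blocks at $s_1$ (and, mirrored, at $s_2$) are positioned so that an \evertex arriving at $s_1$ anywhere inside its own block still catches the outgoing connecting edge -- and hence every subsequent outgoing edge -- of every later block. This is precisely what makes the single two-edge path $e\to s_1\to f$ work simultaneously for all targets $f$ lying in later blocks, which finishes part~3 and, together with parts~1 and~2 above, the lemma.
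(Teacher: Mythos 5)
Your proof is correct and follows essentially the same route as the paper's: part 1 via the two-edge paths through $s_5$ and $s_6$, part 2 split into the $E_{jx}\to V_i$ direction through $s_5/s_6$ and the $V_i\to E_{jx}$ direction through $s_7/s_8$ (with the same case split on $i\lessgtr j$), and part 3 via the block ordering at $s_1$ and its mirror at $s_2$, using the identity edges for the forward direction of a block and the reserved connecting labels for the reverse direction. The extra remarks on why relays through a third \rvertex or through $s_5,\dots,s_8$ cannot substitute are harmless but not needed, since the lemma only asserts existence of the compatibilities.
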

        \begin{proof}
            By Lemma~\ref{lem:nontrans}, no temporal path can connect two \rvertices via a third \rvertex. Hence, all reachabilities between the sets listed above must be realized through \svertices.

            \medskip
            \noindent\textbf{$(V_i,V_j)$ for $i\neq j$.}\quad 
            At $s_5$ (in lexicographic order of $i$), for each $i$ there are edges $s_5\to V_i$ at time $2i-1$ and $V_i\to s_5$ at time $2i$.
            Thus, if $i<j$, any $a\in V_i$ reaches $s_5$ at time $2i$ and from there any $b\in V_j$ at time $2j-1$, yielding a temporal path $a\curvearrowright b$ via $s_5$.
            Symmetrically, at $s_6$ (in reverse lexicographic order of $i$) there are edges $s_6\to V_i$ at time $2(k+1-i)-1$ and $V_i\to s_6$ at time $2(k+1-i)$. Hence $b\in V_j$ reaches $s_6$ at $2(k+1-j)$ and from there any $a\in V_i$ at $2(k+1-i)-1$, yielding $a\curvearrowleft b$ via $s_6$, since $2(k+1-j)<2(k+1-i)-1$ for $i<j$. Therefore, $a\leftrightsquigarrow b$ for every $a\in V_i$ and every $b\in V_j$.
            
            \medskip
            \noindent\textbf{$(V_i,E_{jx})$ for $j\neq i$.}\quad
            The two directions are realized by two pairs of separators:
            \begin{itemize}
              \item ($E_{jx}\to V_i$). Around $s_5$/$s_6$, for each fixed $i$ there are edges $s_5\to V_i$ at $2i-1$ and $E_{ij}\to s_5$ at $2i-1$ (and the reverse-order at $s_6$).
              Thus any $e\in E_{jx}$ with $j\neq i$ reaches $s_5$ (or $s_6$) and from there any $a\in V_i$, providing $a\curvearrowleft e$.
              \item ($V_i\to E_{jx}$). Around $s_7$/$s_8$, for each $i$ there are edges $V_i\to s_7$ at $2i$ and $s_7\to E_{ij}$ at $2i-1$ (and the reverse-order at $s_8$). Thus, any $a\in V_i$ reaches $s_7$ (or $s_8$) and from there any $e\in E_{jx}$ with $j\neq i$,
              yielding $a\curvearrowright e$.
            \end{itemize}
            
            \medskip
            \noindent\textbf{$(E_{ij},E_{i'j'})$  for $\{i,j\}\neq\{i',j'\}$.}\quad
            At $s_1$/$s_2$, the edges of the identity constructions for each color pair are placed as blocks, with additional collection of edges at two reserved time labels strictly between any two consecutive blocks (Refer back to \Cref{fig:tw paraNP s1 s2}).
            Wlog let the $(i,j)$-block precede the $(i',j')$-block at $s_1$.
            Then 
                every vertex of $E_{ij}$ has an edge to $s_1$ by the construction, either by the identity construction (if $i<j$) or by the additional edges (if $j<i$).
            Moreover, 
                $s_1$ has edges to every vertex of $E_{i'j'}$ at a later time, either by the identity construction (if $j'<i'$) or by the additional edges (if $i'<j'$).
            Together, this guarantees $e\curvearrowright e'$ for every $e\in E_{ij}$ and $e'\in E_{i'j'}$ via $s_1$.
            Dually, the reverse ordering at $s_2$ yields a path back from every $e'\in E_{i'j'}$ to every $e\in E_{ij}$, \ie $e\curvearrowleft e'$.
        \end{proof}

        \begin{lemma}[universal \svertices]
            \label{lem:separator-universal}
            The \svertices $S=\{s_1,\dots, s_8\}$ are compatible with every vertex in \gcal.
        \end{lemma}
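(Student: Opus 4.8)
The plan is to exhibit, for every separator $s_\ell$ and every vertex $v$ of $\gcal$, an explicit short temporal path in each direction, all routed through the bidirected clique on $S$ (whose edges are present at both the small label $\early$ and the large label $\late$) together with the two ``universal'' constructions at $s_5$ and $s_7$. First I would reduce to meta-vertices: by \Cref{lem:nontrans} each gadget pair $x^{in},x^{out}$ has identical compatibilities, and the two parts are a single $\early$- or $\late$-labeled step apart, so it suffices to reach (resp. leave) one part of the gadget and then append that step. Compatibility of $s_\ell$ with every other separator is immediate from the clique on $S$.

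For a \vvertex $a\in V_i$, recall from the global construction at $s_5$ (\Cref{fig:tw pNP s5 s6}) that $s_5$ has an edge to $a$ at time $2i-1$ and $a$ has an edge to $s_5$ at time $2i$. Hence $s_\ell \xrightarrow{\early} s_5 \xrightarrow{2i-1} a^{in} \xrightarrow{\late} a^{out}$ witnesses $s_\ell\rightreach a^{in}$ and $s_\ell\rightreach a^{out}$, while $a^{in}\xrightarrow{\early} a^{out}\xrightarrow{2i} s_5\xrightarrow{\late} s_\ell$ witnesses the reverse direction (dropping the first, resp. last, hop when $\ell=5$). The label sequences $\early<2i-1<\late$ and $\early<2i<\late$ are strictly increasing, so these are valid temporal paths in both the strict and non-strict settings. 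For an \evertex $e\in E_{ij}$ the same argument works with $s_7$ in place of $s_5$ for the separator-to-\rvertex direction ($s_7$ has an edge to $e$ at time $2i-1$; see \Cref{fig:tw pNP s7 s8}) and $s_5$ for the \rvertex-to-separator direction ($e$ has an edge to $s_5$ at time $2i$). Putting this together with \Cref{lem:nontrans}, $s_\ell$ is compatible with every other separator and with both vertices of every non-transitivity gadget, i.e. with every vertex of $\gcal$, which is what we want.

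I do not expect a genuine obstacle here. The only things to verify are that $\early$ lies strictly below, and $\late$ strictly above, every integer label appearing anywhere in the construction (which holds by their definition as ``arbitrarily small/large'' labels), and that the short paths above are not blocked by any non-transitivity gadget — they are not, since each such path touches at most one gadget and only at one of its two endpoints, whereas \Cref{lem:nontrans} forbids only routes passing \emph{through} a gadget in order to connect two other \rvertices. The only bookkeeping is to pick, for each \rvertex, the right universal separator: $s_5$ always works for the \rvertex-to-separator direction, while the separator-to-\rvertex direction uses $s_5$ for \vvertices and $s_7$ for \evertices.
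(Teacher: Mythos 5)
Your proposal is correct and follows essentially the same route as the paper's proof: compatibility within $S$ via the $\{\early,\late\}$-labeled clique, and for every other vertex an $\early$-hop through $S$ to a separator with an edge into the vertex plus an edge out of the vertex followed by a $\late$-hop back. The only cosmetic difference is that you name explicit witnesses ($s_5$, $s_7$ with labels $2i-1$, $2i$) and spell out the $x^{in}/x^{out}$ bookkeeping, whereas the paper just invokes the existence of some $s',s''\in S$ with suitably labeled edges from the incidence/identity constructions.
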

        \begin{proof}
            By construction, the subgraph induced by $S$ is a clique with bidirectional edges labeled $\{\early,\late\}$. Hence, every pair of \svertices is trivially compatible.  
            Now consider any \svertex $s \in S$ and any $v \in V-S$. 
            By the construction of the incidence and identity gadgets, there is at least one $s'\in S$ which has an edge to $v$ and at least one $s''\in S$ which has an edge from $v$ both with labels in $(\early,\late)$. Since $s$ can reach any other \svertex at time $\early$ and 
            return at time $\late$, it follows that $s$ can always reach $v$ (by traversing to $s'$ at \early and then to $v$) and be reached by $v$ (via $s''$ at time $\late$).  
            Therefore, each \svertex $s \in S$ is mutually reachable with every $v \in V$, and thus compatible.
        \end{proof}

        \begin{lemma}[incompatibilities within set]\label{lem:no-compatibilities-within-a-set}
        For every $i\in[k]$ and every $i\neq j\in[k]$:
        \begin{enumerate}
            \item No two distinct vertices of $V_i$ are compatible.
            \item No two distinct vertices of $E_{ij}$ are compatible.
        \end{enumerate}
        \end{lemma}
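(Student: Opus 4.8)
The plan is to use \Cref{lem:nontrans} to put every temporal path between two \rvertices into a normal form that consists of a single hop through one \svertex, and then to go through the eight \svertices one by one and check that no such hop can connect two distinct vertices of the same \vvertex or \evertex set. For the normal form, recall that every vertex of $V_i$ and of $E_{ij}$ is, after contracting its non-transitivity gadget to one meta-vertex as in \Cref{remark:meta-vertices}, an \rvertex. If $u$ and $v$ are distinct \rvertices, then by \Cref{lem:nontrans} no temporal path can traverse a third \rvertex, so a path from $u$ to $v$ reaches $v$ by a sequence of hops $u \to s_{\ell_1} \to \dots \to s_{\ell_m} \to v$ through \svertices only. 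Its first hop carries a label strictly between \early and \late, since all labels incident to \rvertices do; the only edges between \svertices are the clique edges labelled $\{\early,\late\}$, so every subsequent hop that stays among \svertices must use label \late, after which no edge re-enters an \rvertex (those labels all lie below \late). Hence $m=1$: the path is $u \to s_\ell \to v$ for a single \svertex $s_\ell$, and the label of $u\to s_\ell$ does not exceed that of $s_\ell\to v$. It therefore suffices to prove that for no $\ell$ does $s_\ell$ carry an edge out of $V_i$ (resp.\ out of $E_{ij}$) followed in time by an edge back into the same set.

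For part~(1), let $a,a'\in V_i$ be distinct. The separators $s_1$ and $s_2$ have no edges incident to any \vvertex, and $s_7,s_8$ send edges only to \evertices and so carry no edge into $V_i$. At each of $s_3,s_4,s_5,s_6$, every edge from the separator into $V_i$ carries a strictly smaller label than every edge from $V_i$ into that separator: at $s_5$ these labels are $2i-1$ and $2i$, at $s_6$ they are $2(k+1-i)-1$ and $2(k+1-i)$, and at $s_3$ (and, by the mirrored construction, at $s_4$) the edges from the separator into $V_i$ are connecting edges placed before the block of $V_i$, whereas the edges from $V_i$ into the separator lie inside that block. Hence no separator admits a time-consistent hop, $a$ does not even reach $a'$, and in particular $a$ and $a'$ are not compatible.

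For part~(2), fix distinct $e,e'\in E_{ij}$. I would handle the two orientations $i<j$ and $i>j$ symmetrically, with $s_1$ and $s_2$ exchanging roles, so assume $i<j$. Here $s_3$ receives no edge from any \evertex, $s_4$ sends no edge to any \evertex, $s_5$ and $s_6$ receive edges from \evertices but send none to them, and $s_7,s_8$ send edges to \evertices but receive none from them; hence only $s_1$ and $s_2$ could possibly provide a hop of the required form. At $s_1$, every edge from $E_{ij}$ into $s_1$ lies inside the identity block of the pair $\{i,j\}$, while the only edges from $s_1$ into $E_{ij}$ are the reserved connecting edges placed strictly before that block. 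Dually, at $s_2$, every edge from $s_2$ into $E_{ij}$ lies inside the identity block of $\{i,j\}$, while the only edges from $E_{ij}$ into $s_2$ are connecting edges placed strictly after that block. In both cases there is no time-consistent out-then-in pair, so $e$ and $e'$ are not compatible.

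The step I expect to be the main obstacle is the bookkeeping at $s_1$ and $s_2$, together with the precise placement of the connecting edges at $s_3$ and $s_4$ relative to their blocks: one must keep careful track of which of the identity-block edges and the reserved $\alpha,\beta$-edges carry the orientation $E_{ij}$ rather than $E_{ji}$, and verify in each orientation that the unique candidate edge from a separator into $E_{ij}$ lands strictly on the wrong temporal side of the block housing the edges from $E_{ij}$ into that separator. The remaining separators are a routine scan of the construction, relying only on \Cref{lem:nontrans} and on the fact that the separator clique uses exclusively the extreme labels \early and \late.
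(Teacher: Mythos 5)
Your proof is correct and follows essentially the same route as the paper: reduce everything to single-separator hops via \Cref{lem:nontrans} (your explicit single-hop normal form is just a more careful version of the paper's "all reachabilities within a set must go through separator vertices"), and then verify for each separator that no edge out of $V_i$ (resp.\ $E_{ij}$) is followed in time by an edge back into the same set. One small slip that does not affect correctness: at $s_4$ your description inverts the layout — there the in-block edges go \emph{from} $E_{ij}$ into $s_4$ and \emph{from} $s_4$ out to $V_i$, so under the construction $V_i$ has no outgoing edges at $s_4$ at all (and even with the figure's mirrored connecting edges, those would lie after the block), so the "no time-consistent out-then-in pair" conclusion you need still holds.
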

        \begin{proof}
            By \Cref{lem:nontrans}, no temporal path can connect two \rvertices via a third \rvertex. Hence any reachabilities within a set must be realized through \svertices.
            For $V_i$, the only \svertices with both incoming and outgoing edges are $s_5$ and $s_6$. In both cases, all edges from $V_i$ to the separator occur strictly after all edges from the separator to $V_i$. Hence no two distinct vertices of $V_i$ are compatible.
            For $E_{ij}$, the only separators with edges in both directions are $s_1$ and $s_2$. Again, every edge from $E_{ij}$ to the separator occurs strictly after every edge from the separator to $E_{ij}$. Thus no two distinct vertices of $E_{ij}$ are compatible.
        \end{proof}

        We can now combine all these lemmata to conclude the hardness result for \openTCC and \closedTCC on directed temporal graphs.
        \begin{theorem}
            Solving \openTCC or \closedTCC is \NP-hard even on directed temporal graphs whose footprint $G$ admits a vertex set $S$ of size $8$ such that $G-S$ consists only of components of size at most two.
        \end{theorem}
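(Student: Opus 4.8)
The plan is to reduce from \MCclique. Given an instance $(H=(V_H,E_H),(V_1,\dots,V_k))$ I feed it to the construction above and ask for a temporal connected component of size $\paraCompSize := 8 + 2k + 2k(k-1) = 8 + 2k^2$: the summand $8$ counts the \svertices $S$, the $2k$ the $k$ meta-vertices $V_1,\dots,V_k$, and the $2k(k-1)$ the $k(k-1)$ meta-vertices $E_{ij}$, each meta-vertex weighing $2$ by \Cref{remark:meta-vertices}. The claimed separator property is immediate from the construction: every temporal edge of $\gcal$ lies inside a non-transitivity gadget $\{x^{in},x^{out}\}$, between an \rvertex gadget and an \svertex, or inside the clique on $S$, so deleting the eight \svertices leaves a footprint that is the disjoint union of the edges $x^{in}x^{out}$, i.e.\ of components of size exactly two. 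It remains to show that $\gcal$ has a temporal connected component of size \paraCompSize{} -- which will turn out to be simultaneously open and closed -- if and only if $H$ has a multicolored $k$-clique.

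For the forward direction, take a multicolored clique $\{a_1,\dots,a_k\}$ with $a_i\in V_i$; since $a_ia_j\in E_H$, the \evertex $(a_i,a_j)$ belongs to $E_{ij}$, and I set $X := S \cup \{a_i : i\in[k]\} \cup \{(a_i,a_j) : i\neq j\}$, reading each \rvertex as its gadget. Temporal connectivity of $X$ is a pair-by-pair check handled entirely by the compatibility lemmata: \svertex--\svertex and \svertex--\rvertex pairs by \Cref{lem:separator-universal}; $a_i\compatible a_j$ and $a_i\compatible(a_j,a_\ell)$ for $i\neq j$ by \Cref{lem:full-compatibilities}; $a_i\compatible(a_i,a_j)$ by \Cref{lem:incidence} (as $a_i$ is the first coordinate); $(a_i,a_j)\compatible(a_j,a_i)$ by \Cref{lem:identity}; and $(a_i,a_j)\compatible(a_{i'},a_{j'})$ for $\{i,j\}\neq\{i',j'\}$ again by \Cref{lem:full-compatibilities}. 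Moreover each of these compatibilities is witnessed by a direct edge inside $S$ or by a two-hop temporal path through one \svertex, hence by a path using only vertices of $X$; so $X$ is both an \otcc and a \ctcc. Maximality follows from \Cref{lem:no-compatibilities-within-a-set}: any vertex outside $X$ lies in the gadget of some $w\in V_\ell\setminus\{a_\ell\}$ or of some $(c,d)\in E_{ij}\setminus\{(a_i,a_j)\}$, hence is incompatible with $a_\ell\in X$, resp.\ with $(a_i,a_j)\in X$, and by \Cref{remark:meta-vertices} neither of its two vertices can be added individually. Finally $|X| = \paraCompSize$.

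For the backward direction I only use a temporally connected set $X$ with $|X| = \paraCompSize$ (closedness is not needed here). By \Cref{lem:no-compatibilities-within-a-set}, $X$ meets each of the $k$ sets $V_i$ and each of the $k(k-1)$ sets $E_{ij}$ in at most one meta-vertex, and $|X\cap S|\le 8$; so $|X|\le 8 + 2k + 2k(k-1) = \paraCompSize$ for every temporally connected set, and equality forces $S\subseteq X$, exactly one $a_i\in V_i$ for each $i$, and exactly one $e_{ij}\in E_{ij}$ for each ordered pair $(i,j)$. Now \Cref{lem:incidence} applied to $a_i\compatible e_{ij}$ forces $a_i$ to be the first coordinate of $e_{ij}$, say $e_{ij} = (a_i, b_{ij})$ with $a_ib_{ij}\in E_H$; \Cref{lem:identity} applied to $e_{ij}\compatible e_{ji}$ forces $e_{ji} = e_{ij}^{-1} = (b_{ij}, a_i)$, while \Cref{lem:incidence} applied to $e_{ji}$ forces its first coordinate to be $a_j$. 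Hence $b_{ij} = a_j$, so $a_ia_j\in E_H$ for all $i\neq j$ and $\{a_1,\dots,a_k\}$ is a multicolored $k$-clique of $H$.

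The construction is polynomial-time computable and $\paraCompSize$ depends only on $k$, which establishes \NP-hardness of both \openTCC and \closedTCC on directed temporal graphs whose footprint admits a size-$8$ vertex separator leaving only components of size at most two. I expect the only genuinely delicate point to be pinning $\paraCompSize$ down so that the counting in the backward direction is exactly tight and the solution is forced to take one representative from each \rvertex set; the heavy lifting -- that routing everything through the eight \svertices realizes precisely the incidence, identity, and full-compatibility relations and produces no spurious longer-path shortcuts -- has already been done in Lemmas~\ref{lem:nontrans}--\ref{lem:no-compatibilities-within-a-set}.
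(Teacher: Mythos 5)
Your proposal is correct and follows essentially the same route as the paper: the same \MCclique reduction, the same use of \Cref{lem:nontrans,lem:identity,lem:incidence,lem:full-compatibilities,lem:separator-universal,lem:no-compatibilities-within-a-set} for the forward connectivity/maximality/closedness check and the backward pigeonhole-plus-incidence/identity argument, and the same observation that deleting the eight \svertices leaves only the gadget edges $x^{in}x^{out}$. The only (cosmetic) difference is that you count actual vertices ($8+2k^2$) rather than meta-vertex units ($k+2\binom{k}{2}+8$), which is consistent with \Cref{remark:meta-vertices}.
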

        \begin{proof}
            Let $(H=(V_H,E_H), (V_1,\dots,V_k))$ be an instance of \MCclique, where $V_H = V_1 \dot\cup \cdots \dot\cup V_k$. The goal is to find a subset of vertices $V'\subseteq V_H$ such that $\lvert V'\cap V_i\rvert=1$ for each $i\in[k]$ and for all $a,b\in V'$ holds $ab\in E_H$. 
            Furthermore, let $\gcal=(R \cup S ,E_G,\lambda)$ with \rvertices $R=\{V_i\colon i\in[k]\}\cup \{E_{ij}, E_{ji}\colon 1\leq i<j\leq k\}$ and \svertices $S=\{s_1\dots,s_8\}$ be the temporal graph obtained from $H$ as described in \Cref{construction: paraNP tw}. Recall, \early denotes an arbitrarily small time label and \late an arbitrarily large time label.
            
            First, we show that there is a multi-colored clique of size $k$ in $H$ if and only if the maximum \otcc/\ctcc in \gcal has size $k+2\binom{k}{2}+8$.

            $(\Rightarrow)$\quad Let $V_H'\subseteq V_H$ be a multi-colored clique, \ie $\lvert V_H'\cap V_i\rvert=1$ for each $i\in[k]$ and for all $a,b\in V_H'$ holds $ab\in E_H$.
            
            Define $V_G':=\{a,(a,b),(b,a), b \colon a,b\in V_H'\}\cup S$ and recall that we treat (and count) the non-transitivity gadget $\{x^{in},x^{out}\}$ of each \rvertex $x\in R$ as a single non-transitive vertex (see \Cref{remark:meta-vertices}).
            Then $\lvert V'_G\rvert=k+2\binom{k}{2}+8$ and it remains to show that $V'_G$ is an open, resp. closed, temporal connected component.
            
            By \Cref{lem:separator-universal}, $S$ is compatible with every vertex.  
            Furthermore, since $ab\in E_H$ for all $a,b\in V_H'$, \Cref{lem:identity,lem:incidence,lem:full-compatibilities} imply that $\{a,(a,b),(b,a), b : a,b\in V_H'\}$ is temporally connected. Hence the entire set $V_G'$ is temporally connected.  
            Maximality of $V_G'$ follows because $S \subseteq V_G'$ and, by \Cref{lem:no-compatibilities-within-a-set}, every \vvertex or \evertex set can contribute at most one vertex. Thus $V_G'$ is a maximal \otcc. 
            Finally, $V_G'$ is also closed: Every temporal path between vertices of $V_G'$ uses only vertices of $V_G'$ (direct edges or \svertices).
            
            $(\Leftarrow):$\quad Let $V_G'\subseteq V_G$ be an \otcc of \gcal of size $k+2\binom{k}{2}+8$, where in the size count we treat the non-transitivity gadget $\{x^{in},x^{out}\}$ of each \rvertex $x\in R$ as a single non-transitive vertex (see \Cref{remark:meta-vertices}).
            It is also a \ctcc:
            By \Cref{lem:separator-universal}, every \svertex is compatible with all vertices, so maximality of \otcc{}s implies $S\subseteq V_G'$.
            Moreover, by \Cref{lem:nontrans}, no \rvertex can be used as a transit between two \rvertices; thus any temporal path between vertices of $V_G'$ can only traverse \svertices. Since all \svertices lie in $V_G'$, every such path is contained in $V_G'$, and $V_G'$ is a \ctcc in \gcal.

            Define $V_H':=\{a\colon a\in V_G'\cap V_i\}$.
            Since each \vvertex or \evertex set can contribute at most one vertex (\Cref{lem:no-compatibilities-within-a-set}) and $\lvert S \rvert = 8$, by pigeon-hole principle $V_G'$ must contain exactly one vertex from every \vvertex and from every \evertex set.  
            Hence, $V_H'$ has size $k$.
            
            By \Cref{lem:incidence,lem:identity}, for every $i \neq j$ there exist $a \in V_G' \cap V_i$ and $b \in V_G' \cap V_j$ such that $(a,b) \in V_G' \cap E_{ij}$ and $(b,a) \in V_G' \cap E_{ji}$.  
            By construction, this is possible only if $ab \in E_H$, so $V_H'$ is a multi-colored clique of size $k$ in $H$.
            
            Lastly, we argue that $G-S$ consists only of components of size at most two.
            By construction, removing the separator set $S=\{s_1,\dots,s_8\}$ from $G$ deletes all edges connecting the \rvertex (gadgets). Concretely, the only edges of \gcal which are not incident to an \svertex, are the edges within the non-transitivity gadgets. These form connected components of size two in $G-S$.
            Consequently, $S$ is a constant-size deletion set such that $G-S$ consists only of components of size at most two. 
        \end{proof}
        With this, we have established \paraNP-hardness of \openTCC and \closedTCC on directed temporal graphs parameterized by deletion to components of size at most two.  
        To carry this result over to the undirected setting, we modify the construction by replacing each directed temporal edge between \rvertices and \svertices with a short undirected temporal path through a helper-vertex.  
        This modification preserves the directional behavior of the directed construction, but yields \paraNP-hardness only on graphs parameterized by deletion to trees (rather than to components of size two). 
        Since a constant-size deletion to trees still bounds the treewidth, this will still provide the desired result.        
        \begin{theorem} \label{thm: tw paranp undirected extension}
            Solving \openTCC and \closedTCC is \NP-hard even on undirected temporal graphs whose footprint admits a vertex set $S$ of size $8$ such that $G-S$ consists only of trees.
        \end{theorem}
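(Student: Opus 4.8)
The plan is to reuse the directed construction from the previous theorem and to eliminate all directedness by rerouting every directed temporal edge through a fresh degree-two \emph{helper vertex}. Concretely, I would first rescale every ``normal'' time label $t$ of the directed construction to $2t$, freeing up the odd integers, and then replace each directed temporal edge between a vertex $u$ of an \rvertex gadget and an \svertex $v$ (in either direction), say with rescaled label $2t$, by a new vertex $h$ together with two \emph{undirected} temporal edges $\{u,h\}$ at label $2t-1$ and $\{h,v\}$ at label $2t$; since $2t-1<2t$, the two-edge path through $h$ is traversable only from $u$ to $v$, never back. The \svertex-clique edges and the non-transitivity-gadget edges already carry both $\early$ and $\late$, so they simply become undirected. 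Call the result $\gcal'$ and keep $S=\{s_1,\dots,s_8\}$ as the separator.

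The first step is to show that each helper $h$ faithfully simulates the directed edge it replaces: any temporal walk entering and leaving $h$ does so in the order $u\to h\to v$, and no helper creates reachability between two original vertices that did not already exist. Hence for all original (non-helper) vertices $x,y$ we have $x\rightreach y$ in $\gcal'$ exactly when $x\rightreach y$ in the directed graph \gcal, which lets me import Lemmas~\ref{lem:nontrans}, \ref{lem:identity}, \ref{lem:incidence}, \ref{lem:full-compatibilities}, \ref{lem:separator-universal}, and~\ref{lem:no-compatibilities-within-a-set} verbatim for the original vertices of $\gcal'$: the \svertices stay universal, the \rvertex and \evertex sets still realise the incidence and identity compatibilities of the \MCclique instance, each such set contributes at most one vertex to any temporally connected set, and no temporal path shortcuts through an \rvertex.

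The second step is the bookkeeping of the helper vertices and the choice of the target solution size $N$. For \openTCC one argues, again from the limited reach of a helper, that helpers contribute only a fixed, efficiently computable amount and that the component mirroring a multicolored $k$-clique is still maximal, so $H$ has such a clique if and only if $\gcal'$ has an \otcc of size $N$. For \closedTCC one must in addition add to the component all helper vertices lying on the canonical internal paths of the intended component and verify that this enlarged set is again temporally connected, internally connected, and maximal; with $N$ set to its size, correctness follows from the first step together with the usual forward/backward argument (a clique gives the size-$N$ component; conversely any component of size $\ge N$ must contain all of $S$, exactly one vertex of each \rvertex set — hence a multicolored $k$-clique — and the forced helpers). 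For the treewidth, $|S|=8$ and deleting $S$ from the footprint leaves, for each \rvertex $x$, only $x^{in},x^{out}$ joined by an edge together with their pendant helper vertices — a tree — and no helper joins two of these trees; hence $G'-S$ is a forest and $\tw(\gcal')\le 8+1=9$. Since $|S|$ is a constant this establishes \NP-hardness of \openTCC and \closedTCC on undirected, (non-)strict temporal graphs of treewidth at most $9$, finishing \Cref{thm: tw paranp undirected extension} and, with the directed theorem, \Cref{thm: tw is paraNP}.

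The hard part is the closed case of the second step. Naively the two helper vertices that an internal $u$--$v$ path has to traverse are \emph{not} compatible — one reaches the other but not conversely — so one cannot simply dump all ``needed'' helpers into a single temporally connected set; choosing the label offsets (and, if necessary, a richer helper gadget carrying extra early/late labels) so that the intended closed component is simultaneously temporally connected, internally connected, and maximal — that is, re-establishing in the undirected setting that the clique-component is both an \otcc and a \ctcc — is where the real care is needed.
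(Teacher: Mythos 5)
There is a genuine gap, and it is exactly the one you flag in your last paragraph and then leave unresolved. Your helpers are \emph{not} universal: a helper $h$ subdividing a directed edge between an \rvertex and an \svertex is compatible only with its two endpoints and with whatever happens to fit the timing of its two labels, so (i) for \closedTCC the intended clique-component is not closed, since every internal $u$--$v$ path must traverse helpers that cannot all be placed inside the component (as you admit, two helpers on such a path are not even mutually compatible), and (ii) even for \openTCC the accounting breaks: which helpers are compatible with \emph{all} vertices of a candidate component depends on the time labels of the particular edges chosen, i.e.\ on which vertices of each $V_i$ and $E_{ij}$ are selected, so the number of helpers absorbed into a maximal \otcc need not be the same for every multicolored clique, nor smaller for non-clique vertex selections; the clean threshold ``$N=$ directed component size $+$ fixed helper count'' is therefore not justified by ``the limited reach of a helper''. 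Saying that ``a richer helper gadget carrying extra early/late labels'' might be needed names the missing idea but does not supply it, and without it neither direction of the reduction goes through for \closedTCC.

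The paper closes precisely this hole by adding, for every helper $h$ and \emph{every} separator $s\in S$, undirected edges with labels $\early-\varepsilon$ and $\late+\varepsilon$. This makes every helper compatible with every vertex of the graph (its claim (C6)), while a short argument (claim (C7)) shows these extra edges create no new compatibilities among original vertices, because anything routed through a helper at time $\late+\varepsilon$ can go nowhere afterwards and anything using $\early-\varepsilon$ cannot be preceded by other edges. With universal helpers, the maximum open \emph{and} closed tcc is simply the directed-instance component together with all of $H$: maximality and the size threshold are uniform over all choices, and closedness is immediate since all internal paths use only separators and helpers, both inside the component. The helper--separator edges also do not harm your treewidth claim, since they are incident to $S$; $G-S$ remains a forest of gadget-plus-pendant-helper trees, matching the paper's claim (C8). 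So your overall plan (subdivide directed edges, import Lemmas~\ref{lem:nontrans}--\ref{lem:no-compatibilities-within-a-set}, bound the deletion set by $S$) is the right skeleton, but the universality gadget for helpers is the essential ingredient you are missing, not an optional refinement.
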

        \begin{proof}
            We extend the directed construction used in the proof of \Cref{thm: tw is paraNP}.  
            Starting with the directed temporal graph $\gcal=(\{V_i\colon i\in[k]\}\cup \{E_{ij}, E_{ji}\colon 1\leq i<j\leq k\} \cup S ,E_G,\lambda)$ with \svertices $S=\{s_1,\dots,s_8\}$, we obtain an undirected temporal graph $\gcal^\ast$ as follows. Recall that \early denotes the smallest time label in \gcal and \late the largest. Let $\varepsilon>0$ be chosen so small that for all time labels $t<t'$ present in \gcal holds $t+\varepsilon<t'$.

            All bidirected temporal edges in the construction are replaced with undirected edges with the same time labels. That includes the edges forming a complete subgraph among $S$ and the edges within non-transitivity gadgets, all with labels $\{\early,\late\}$.
            
            That only leaves the directed temporal edges $((x,y),t)$ where $x\in R$ and $y\in S$, or vice versa. For each such edge, we introduce a \textit{helper-vertex} $h_{xy}$ and replace the directed temporal edge by two undirected temporal edges $\{(xh_{xy},t)\;,\;(h_{xy}y,t+\varepsilon)\}$.  
            Intuitively, the increasing timestamps enforce the same ``temporal direction''~$x\to y$ even though the edges are undirected.
            Furthermore, for every helper-vertex $h_{xy}$ and separator $s\in S$, we add the undirected edges $\{(h_{xy}s,\early-\varepsilon),\,(h_{xy}s,\late+\varepsilon) \colon x,y\in R,\, x\neq y,\; s\in S\}$.
            (Recall that $S$ already forms a clique with labels $\{\early, \late\}$ on every edge.)
            Let $H$ denote the set of all helper-vertices. We claim:
            \begin{description}
                \item[(C6)] Every helper $h\in H$ is compatible with every vertex of $\gcal^\ast$.
                \item[(C7)] No pair of original vertices of \gcal becomes newly compatible, and no original compatibilities are lost.
                \item[(C8)] $G-S$ is a forest.
            \end{description}
            Once (C6)–(C8) are proven, the claim follows immediately: A maximum open/closed tcc in $\gcal^\ast$ consists of the component from the directed instance together with all helper-vertices~$H$. Thus \openTCC and \closedTCC remain \NP-hard even on undirected temporal graphs with constant treewidth.
        
            \bigparagraph{(C6) Helpers are universal.}
            Let $h\in H$. 
            Using the edge $(hs,\early-\varepsilon)$, $h$ can reach any \svertex $s\in S$. Furthermore, every $s$ can reach each other vertex $x\in V(\gcal)$ by \Cref{lem:separator-universal}. Since all time labels in \gcal are larger than $\early-\varepsilon$, $h$ can reach $x$ via $s$.
            Conversely, \Cref{lem:separator-universal} also shows that every $s\in S$ can be reached by each other vertex $x\in V(\gcal)$ by time $\late$. Thus, $x$ can reach $h$ via any $s\in S$ using the edge $(sh,\late+\varepsilon)$.
            Additionally, $h$ can reach any $h'\in H, h'\neq h,$ via the path $(hs,\early+\varepsilon),(sh',\late+\varepsilon)$.
        
            \bigparagraph{(C7) Compatibilities stay the same among original vertices of \gcal.}
            First of all, all original reachabilities in \gcal are preserved, every directed edge $((x,y),t)$ is replaced by a temporal path in the same direction that arrives at time $t+\varepsilon$ where $\varepsilon$ is chosen such that the path arrives before any other time label in the graph.
            It remains to show that no new reachabilities are introduced.
            
            Consider a helper vertex $h\in H$.
            It shares an edge with every \svertex $s\in S$ at times $\early-\varepsilon$ and $\late+\varepsilon$. However, since $s$ is already compatible with every other vertex, its compatibilities cannot be affected by these additional edges.
            Apart from that, $h$ is adjacent to exactly one \rvertex $x\in R$ at some time $\early-\varepsilon < t < \late+\varepsilon$.
            When $x$ reaches $h$ by time $t$, the only possible continuation is the edge $(hs,\late+\varepsilon)$ for some $s\in S$, after which no later edge exists.
            Conversely, before time $t$, $h$ can only be reached by some $s\in S$ via $(sh,\early-\varepsilon)$, which cannot be preceded by any other edge.
            Therefore, helper-vertices do not create additional compatibilities, and all compatibilities among the original vertices of $\gcal$ are preserved.

            \bigparagraph{(C8) Constant-size deletion set.}
            In this undirected construction, the only edges not incident to a separator, are the edges within the non-transitivity gadgets, as well as the edges between an \rvertex gadget and a helper-vertex. However, each helper-vertex is adjacent to exactly one vertex of an \rvertex gadget. Consequently, each non-transitivity gadget corresponding to an \rvertex $x\in R$ induces a tree in $G-S$, which consists of the edge $x^{in}x^{out}$ and one star around $x^{in}$ and $x^{out}$ each with helper-vertices as leafs. 
        \end{proof}

        Combining all this, we can finally conclude the main result of this section.
        \twparaNP*
        \begin{proof}
            Since in both constructions $S$ is a constant-size deletion set such that removing $S$ from the footprint leaves only components of size at most two (directed) or trees (undirected), it follows that the treewidth of the footprint is at most $|S|+2=8+2=10$, and therefore the treewidth is constant for any \MCclique instance.
            As a result, \openTCC and \closedTCC remain \NP-hard even on (un)directed temporal graphs of constant treewidth.
        \end{proof}
    \fi

\section{Temporal Path Graphs}
    In this section we move our focus to temporal connected components in \kpathgraphs{k}.
    \iflong
    We prove that \closedTCC is \paraNP-hard when parameterized by \tpn in \Cref{sec:closed}, and show that \openTCC is \XP when parameterized by \tpn in \Cref{sec:open}.
    \fi
\subsection{Hardness of \closedTCC on Constant Temporal Path Number}\label{sec:closed}
    We present a parameterized construction from $k$-\clique to $k$-\closedTCC on a temporal graph constructed of 6 temporal paths.
    Before diving into the proof, we recall the \Wone-hardness reduction for \openTCC (see \cite{bhadra_ComplexityConnected_2003,casteigts_SimpleStrict_2024}) as our reduction to \closedTCC builds on it directly.  
    
    The key idea is to encode the edges of the \clique instance~$H$ as temporal compatibilities via the \emph{semaphore construction}~\cite{casteigts_SimpleStrict_2024,doring_SimpleStrict_2025}, which replaces each\iflong\ static\fi\ edge $uv$ of $H$ with two short temporal paths $(u,x_{uv},v)$ and $(v,x_{vu},u)$.
    Initially, the edges to the subdivision vertices\iflong\ $(u,x_{uv})$ and $(v,x_{vu})$\fi receive time label~1 and the edges from them\iflong\ $(x_{uv},v)$ and $(x_{vu},u)$\fi\ receive label~2. Then a proper labeling is enforced by fixing an arbitrary order on the edges and shifting the labels accordingly (see \Cref{fig:hardness opentcc}).
    This ensures that every temporal path has length at most two, so a vertex can reach another if and only if they are adjacent in $H$. This directly yields the desired reduction for \openTCC.
    \begin{figure}[h]
        \centering
        \includegraphics[width=0.8\linewidth]{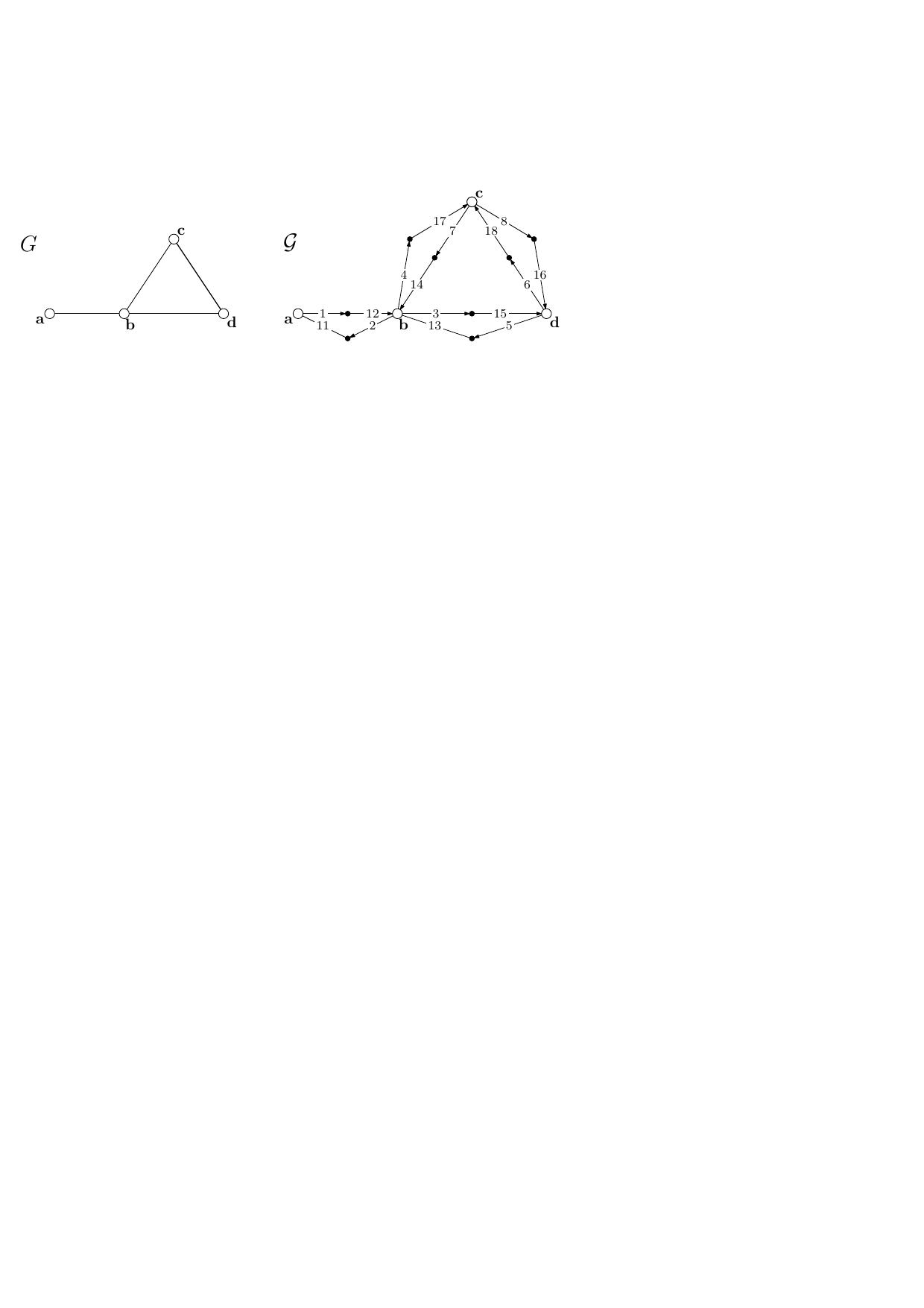}
        \caption{Illustration of the classical \Wone-hardness reduction for \openTCC.}
        \label{fig:hardness opentcc}
    \end{figure}
    
    We extend this construction to obtain \NP-hardness for \closedTCC on \kpathgraphs{6}.  
    We will include all subdivision vertices introduced by the semaphore gadgets in the maximum component (thereby making it closed), while using only six temporal paths overall.  
    For this we exploit the structure of closed components: By inserting vertices with restricted reachability, which we call \textit{bridges}, we can effectively split long temporal paths\iflong\ into multiple shorter temporal paths\fi. 
    \begin{definition}[Bridges]
        A \emph{bridge} is a pair of consecutive vertices on some temporal path that do not appear in any other path. We call these vertices \emph{bridge-vertices}. 
    \end{definition}
    \begin{figure}[h]
        \centering
        \includegraphics[width=0.9\linewidth]{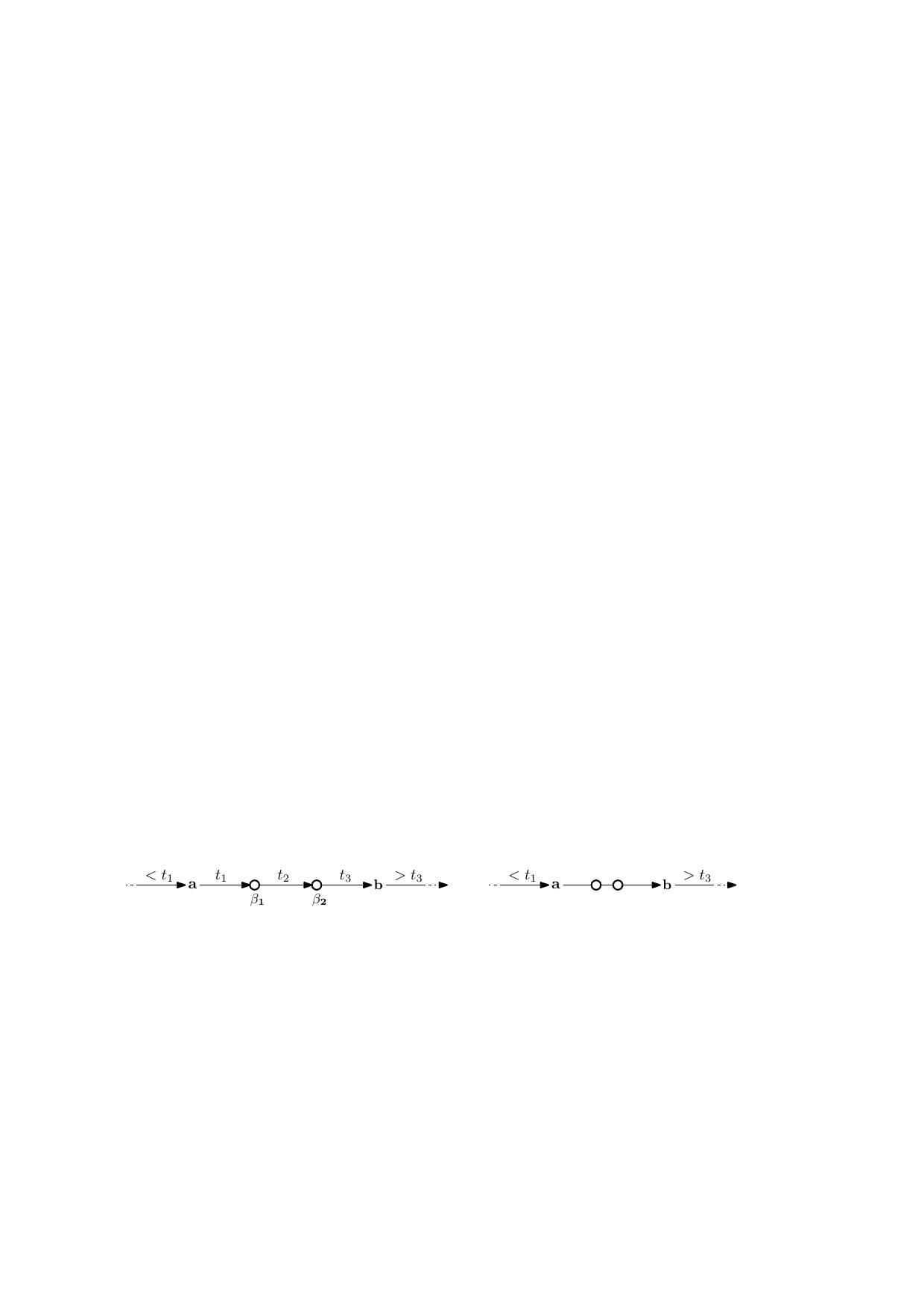}
        \caption{The left shows an example of a bridge $(\beta_1,\beta_2)$ containing the bridge-vertices $\beta_1$ and $\beta_2$ with $t_1<t_2<t_3$. The right shows how we depict  bridges in \Cref{fig:ctcc hardness big overview}.}
        \label{fig:bridge-vertices}
    \end{figure}
    By definition, bridge-vertices are too restricted in their temporal reachability to belong to any nontrivial \ctcc.
    \iflong
    \begin{observation}\label{lem:no-bridges-in-ctcc}
        For any bridge-vertex $\beta\in V$, the only \ctcc containing $\beta$ is the trivial component $\{\beta\}$.
    \end{observation}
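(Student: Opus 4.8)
I would prove the slightly stronger claim that no closed-temporally-connected set of size at least two contains a bridge-vertex; this is exactly what the Observation needs, and it also certifies that $\{\beta\}$ is $\subseteq$-maximal among closed-temporally-connected sets, hence the unique \ctcc containing $\beta$. First I would fix the local picture. If $\beta$ is a bridge-vertex then, by definition, it lies on a single temporal path $P_i$ and is one of two consecutive vertices $\beta_1,\beta_2$ of $P_i$ occurring on no other path; write the relevant stretch of $P_i$ as $(p,\beta_1,\beta_2,s)$, where $p$ or $s$ may be absent if $\beta_1$ (resp.\ $\beta_2$) is an endpoint of $P_i$, and with labels $t_1<t_2<t_3$ along it as in \Cref{fig:bridge-vertices} (a proper labelling, which the semaphore-based reduction produces, already forces these inequalities, and for strict labellings they are automatic). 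Hence each bridge-vertex has at most one incoming and at most one outgoing temporal edge: $\beta_1$ is entered only via $(p,\beta_1,t_1)$ and left only via $(\beta_1,\beta_2,t_2)$, while $\beta_2$ is entered only via $(\beta_1,\beta_2,t_2)$ and left only via $(\beta_2,s,t_3)$. I would argue the directed case first; in the undirected case the three strictly increasing labels make each bridge edge usable in only one temporal direction, so the identical local picture holds (and the undirected instances obtained by the paper's helper-vertex subdivision preserve it). By symmetry it suffices to treat $\beta=\beta_1$.

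Suppose toward a contradiction that $X$ is a closed-temporally-connected set with $\beta_1\in X$ and $|X|\ge 2$, and pick $w\in X\setminus\{\beta_1\}$. Closedness gives a temporal path from $\beta_1$ to $w$ within $X$; its first edge is an out-edge of $\beta_1$, hence $(\beta_1,\beta_2,t_2)$, so $\beta_2\in X$. Moreover $\beta_2$ must have a successor $s$ on $P_i$: otherwise $\beta_2$ has no out-edge, so the path from $\beta_1$ cannot leave $\beta_2$, forcing $w=\beta_2$ and $X=\{\beta_1,\beta_2\}$, which is not closed because then $\beta_2$ reaches nothing, in particular not $\beta_1$, using only vertices of $X$. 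Applying closedness to $\beta_2,\beta_1\in X$ yields a temporal path $Q$ from $\beta_2$ to $\beta_1$ within $X$. The first edge of $Q$ is the unique out-edge of $\beta_2$, namely $(\beta_2,s,t_3)$, so its label is $t_3$; the last edge of $Q$ is the unique in-edge of $\beta_1$, namely $(p,\beta_1,t_1)$ (and $p$ must exist, else no temporal path reaches $\beta_1$), so its label is $t_1$. Since the labels along $Q$ are non-decreasing, $t_3\le t_1$, contradicting $t_1<t_3$. Therefore $|X|=1$, i.e.\ $X=\{\beta_1\}$.

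The whole argument hinges on the clean local structure of a bridge-vertex — a unique in-edge, a unique out-edge, and strictly increasing labels $t_1<t_2<t_3$ — so the step I expect to need the most care is confirming that the bridge construction really forces this structure in all four temporal-graph models, in particular that the undirected or non-strict variants cannot hand $\beta_2$ an earlier way out or $\beta_1$ an alternative way in. Once that is nailed down (it follows from using a proper labelling, or equivalently three strictly increasing time labels, for every bridge), the rest is just the one-line temporal-monotonicity contradiction above; a minor additional point is to observe, as noted, that proving the stronger set-level claim simultaneously delivers the maximality needed to call $\{\beta\}$ the unique \ctcc containing $\beta$.
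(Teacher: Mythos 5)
Your directed argument is exactly the paper's proof: any closed temporally connected set containing $\beta_1$ and a second vertex must pick up $\beta_2$ (the unique way out of $\beta_1$), and then the required $\beta_2$-to-$\beta_1$ path forces $t_3\le t_1$, contradicting $t_1<t_2<t_3$; your handling of the endpoint cases ($p$ or $s$ absent) and the remark that the set-level claim also yields maximality of $\{\beta\}$ are fine and slightly more careful than the paper's two-line version.

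The genuine problem is the sentence on which you hang the undirected case: ``the three strictly increasing labels make each bridge edge usable in only one temporal direction, so the identical local picture holds.'' That is false. An undirected temporal edge can always be traversed in either direction as the \emph{first} edge of a temporal path (nothing has to precede it), so in the undirected model $\beta_2$ reaches $\beta_1$ directly via the single edge $\{\beta_1,\beta_2\}$ at time $t_2$, and likewise $\beta_1$ can leave towards $p$ at time $t_1$; your ``unique in-edge / unique out-edge'' picture, and hence the $t_3\le t_1$ contradiction, breaks down. Indeed, in the undirected model the pair $\{\beta_1,\beta_2\}$ is itself closed temporally connected, so the Observation cannot be rescued by your claimed reduction to the directed local structure; what survives (and what the reduction actually needs) is the weaker statement that no closed temporally connected set of size at least three contains a bridge-vertex, which requires ruling out the extra undirected moves explicitly rather than asserting they do not exist. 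To be fair, the paper's own proof of this Observation argues only the directed picture (the undirected and non-strict variants are dispatched in a later remark), so on the part the paper proves you coincide with it; but since you explicitly claimed the undirected case follows from edge one-directionality, that step is a concrete gap you would need to repair.
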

    \begin{proof}
        Refer to the bridge illustrated in \Cref{fig:bridge-vertices}.  
        Any \ctcc containing $\beta_1$ and some vertex $x\neq \beta_1$ must also contain $\beta_2$, since every vertex reachable from $\beta_1$ is reached via $\beta_2$.  
        However, $\beta_2$ cannot reach $\beta_1$: The earliest time at which $\beta_2$ can reach any other vertex is after $t_3$, whereas the latest time at which any vertex can reach $\beta_1$ is $t_1 < t_3$.  
        Analogously, any nontrivial \ctcc containing $\beta_2$ would also have to contain $\beta_1$, which is impossible.  
        Hence, no nontrivial \ctcc can contain a bridge-vertex.
    \end{proof}
    \fi
    From this, it follows that bridges can be removed from the temporal graph without changing the maximum \ctcc of a \kpathgraph{\paraPaths}.  
    \iflong
    \begin{lemma}\label{lem:bridges-irrelevant-for-ctcc}
        Let $\gcal$ be a \kpathgraph{\paraPaths} and let $\gcal'$ be the temporal graph obtained from $\gcal$ by deleting all bridges. 
        A set $C\subseteq V$ with $\lvert C\rvert>1$ is a \ctcc in $\gcal$ if and only if it is a \ctcc in $\gcal'$.
    \end{lemma}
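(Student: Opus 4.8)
The plan is to derive the equivalence almost entirely from \Cref{lem:no-bridges-in-ctcc} together with one structural remark about the deletion operation. Write $B$ for the set of all bridge-vertices of $\gcal$; then $V(\gcal') = V(\gcal)\setminus B$, and the temporal edges of $\gcal'$ are exactly the temporal edges of $\gcal$ that have no endpoint in $B$ (deleting a bridge only removes edges incident to its two bridge-vertices, and never affects an edge between two non-bridge-vertices). Consequently, for any set $C$ with $C\cap B=\emptyset$, a temporal path of $\gcal$ all of whose vertices lie in $C$ uses only edges with both endpoints outside $B$, hence is already a temporal path of $\gcal'$; and conversely every temporal path of $\gcal'$ is a temporal path of $\gcal$, since $\gcal'$ is obtained from $\gcal$ purely by deletions. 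In short: for $C\subseteq V(\gcal')$, being \emph{closed} temporally connected means the same thing whether measured in $\gcal$ or in $\gcal'$, so the whole proof reduces to comparing the two maximality notions.

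For the forward direction, let $C$ be a \ctcc of $\gcal$ with $\lvert C\rvert>1$. First I would invoke \Cref{lem:no-bridges-in-ctcc} to get $C\cap B=\emptyset$, so $C\subseteq V(\gcal')$ and, by the remark, $C$ is still closed temporally connected in $\gcal'$. For maximality in $\gcal'$: if some $v\in V(\gcal')\setminus C$ made $C\cup\{v\}$ closed temporally connected in $\gcal'$, then the same would hold in $\gcal$ (a subgraph, and $v\in V(\gcal)$), contradicting maximality of $C$ in $\gcal$. Hence $C$ is a \ctcc of $\gcal'$.

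For the backward direction, let $C$ be a \ctcc of $\gcal'$ with $\lvert C\rvert>1$; then $C\subseteq V(\gcal')$, so $C\cap B=\emptyset$ and $C$ is closed temporally connected in $\gcal$. The only point needing an argument is maximality in $\gcal$. Suppose $v\in V(\gcal)\setminus C$ makes $C\cup\{v\}$ closed temporally connected in $\gcal$. If $v\in B$, extend $C\cup\{v\}$ greedily to a maximal closed-temporally-connected set: this is a \ctcc of $\gcal$ of size larger than one that contains the bridge-vertex $v$, contradicting \Cref{lem:no-bridges-in-ctcc}. If $v\notin B$, then $C\cup\{v\}\subseteq V(\gcal')$, so by the remark all the witnessing paths lie in $\gcal'$, making $C\cup\{v\}$ closed temporally connected in $\gcal'$ and contradicting maximality of $C$ there. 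Either way we reach a contradiction, so $C$ is maximal, and hence a \ctcc of $\gcal$.

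The only real content is the maximality step of the backward direction, and within it the subcase $v\in B$: there one must rule out ``growing'' a closed component by attaching a bridge-vertex, which is exactly what \Cref{lem:no-bridges-in-ctcc} forbids once we pass to a maximal extension. Everything else — that closed temporal connectivity is insensitive to the deletion on sets disjoint from $B$, and that extending a component in the smaller graph also extends it in the larger one — is routine bookkeeping about which temporal edges survive the deletion.
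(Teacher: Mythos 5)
Your proposal is correct and follows essentially the paper's route: use \Cref{lem:no-bridges-in-ctcc} to exclude bridge-vertices from any nontrivial closed component, observe that witnessing paths inside a bridge-free set survive the deletion, and use the subgraph relation for the converse. You are in fact more explicit than the paper's terse proof about the maximality transfer (in particular the backward-direction subcase where an extension by a bridge-vertex is ruled out via a maximal extension and \Cref{lem:no-bridges-in-ctcc}), which is a welcome clarification rather than a deviation.
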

    \begin{proof}
        Let $C$ with $\lvert C\rvert>1$ be a \ctcc in $\gcal$.  
        By \Cref{lem:no-bridges-in-ctcc}, $C$ contains no bridge-vertices. By the definition of a closed component, for every~$u,v\in C$ there exists a temporal $u$-$v$-path entirely within~$C$. Since $C$ contains no bridge-vertices, this $u$-$v$-path is also contained in~$\gcal'$. Hence, $C$ is also a \ctcc in~$\gcal'$.
        For the opposite direction observe that a \ctcc $C'$ in $\gcal'$ is a \ctcc in $\gcal$, because $\gcal'$ is a subgraph of $\gcal$.
     \end{proof}
     With these preliminaries in place, we can now present our construction.\fi
         
    \iflong
    \begin{theorem}
    \else
    \begin{theorem}[$\star$]
    \fi
    \label{thm: ctcc is paranp-hard}
        \closedTCC on (un)directed, (non)-strict temporal graphs is \NP-hard even on graphs with $\tpn=6$. 
    \end{theorem}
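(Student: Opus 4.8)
The plan is a polynomial reduction from \clique that grafts a ``closing'' mechanism onto the classical semaphore reduction for \openTCC recalled above. Two things obstruct reusing that reduction verbatim. First, a subdivision vertex $x_{uv}$ is reached from $u$ at an early time but can leave towards $v$ only at a late time, so it reaches no vertex that reaches it back and hence lies in no nontrivial \ctcc --- the maximum \ctcc is trivial. Second, the reduction spends one temporal path per gadget, \ie $\Theta(m)$ of them. I would address both at once: concatenate all $2m$ gadget paths onto a constant number of \emph{host paths}, inserting a bridge between any two consecutive gadgets on a host path. Each host path then carries non-decreasing labels, and by \Cref{lem:no-bridges-in-ctcc} and \Cref{lem:bridges-irrelevant-for-ctcc} the bridge-vertices lie in no nontrivial \ctcc and may be ignored when reasoning about nontrivial closed components.

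The key consequence is a \emph{localisation} property: a candidate closed component $C$ contains no bridge-vertex, so any temporal path that certifies a closed reachability inside $C$ is confined between two consecutive bridges, \ie inside a single semaphore gadget; gadgets on different host paths are glued only at the shared clique-vertices. Hence inside $C$ two clique-vertices are compatible precisely when they are joined by a gadget, \ie when they are adjacent in $H$ --- the property that drove the \openTCC reduction, now transported into the closed setting. For a $k$-clique $K$ of $H$ the prescribed component is $C_K := K \cup \{x_{uv},x_{vu} : \{u,v\}\in\binom K2\}$, of size $k+2\binom k2$, and I must still arrange that $C_K$ is \emph{closed}: each $x_{uv}$ must both reach and be reached by every member of $C_K$ using only $C_K$-vertices. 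This forces a carefully chosen multi-phase schedule of the gadget labels, and the role of the six host paths is to place the two gadgets of each edge on different host paths and to split the work so that this schedule is realisable with non-decreasing labels on every path; the explicit gadget-to-path assignment and the labels are the bookkeeping summarised in \Cref{fig:ctcc hardness big overview}.

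For correctness, the forward direction checks that $C_K$ is temporally connected (clique gadgets together with the scheduled reachabilities among subdivision vertices), closed (all witnessing paths avoid bridge-vertices), and maximal (by localisation every vertex outside $C_K$ is a bridge-vertex, or is incompatible inside the component with some member of $C_K$ --- a clique-vertex not in $K$, or a subdivision vertex attached to a non-$K$ edge). Conversely, a closed tcc $C$ with $|C|\ge k+2\binom k2$ contains no bridge-vertex, so by localisation its clique-vertices, say $t$ of them, are pairwise adjacent in $H$, and each can be accompanied by at most the two subdivision vertices of every incident edge, giving $|C|\le t+2\binom t2$; hence $t\ge k$ and $H$ has a $k$-clique. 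The undirected and non-strict variants then follow by the device used for \Cref{thm: tw paranp undirected extension}: replace each directed gadget edge by an undirected length-two path through a fresh helper vertex and perturb labels by $\varepsilon$; the helpers are again trivially universal or trivially isolated, the target size is unchanged, and $\tpn$ stays $6$ since each helper sits inside its gadget.

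The main obstacle is the tension in the label schedule. On the one hand the bridges together with the schedule must cut \emph{every} within-component temporal path down to a single gadget, so that no spurious compatibility survives --- neither among clique-vertices, nor among subdivision vertices of non-adjacent edges, nor between a clique-vertex and a ``wrong'' subdivision vertex. On the other hand the schedule must leave enough monotone room that inside $C_K$ each subdivision vertex reaches every clique-vertex and every other subdivision vertex of $C_K$, so that $C_K$ is genuinely \emph{closed} rather than merely open. Reconciling these opposing requirements while confining the whole construction to only six host paths, and then verifying that no set of the prescribed size other than a clique-image is a closed component, is the technical heart of the argument.
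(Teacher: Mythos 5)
There is a genuine gap, and it sits exactly where you defer the work to ``the technical heart''. Your architecture keeps each clique vertex as a single vertex and aims at the closed component $C_K = K \cup \{x_{uv},x_{vu} : \{u,v\}\in\binom{K}{2}\}$ of size $k+2\binom{k}{2}$. For $C_K$ to be closed, $x_{uv}$ must reach a third clique vertex $w$ \emph{inside} $C_K$; every such path has to leave the $uv$-gadget through $v$ (or $u$) and then take an out-edge of another gadget at that same vertex, so at the shared vertex $v$ some in-edge must precede some out-edge. But whenever the in-edge $(x_{uv},v)$ precedes the out-edge $(v,x_{vw'})$, every neighbor $u'$ of $v$ whose in-edge into $v$ is at least as early can reach every neighbor $w'$ whose out-edge is at least as late --- a transitive chain through $v$. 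At a fixed vertex, the set of (in-edge, out-edge) pairs satisfying ``in before out'' is a staircase relation determined by two linear orders on the incident gadget edges, whereas the adjacency relation among the neighbors of $v$ in $H$ is arbitrary; hence no label schedule can make ``a chain through $v$ exists'' coincide with ``adjacent in $H$''. So either the forward direction fails ($C_K$ is not closed, which is what happens under the classical out-before-in schedule), or the reverse direction fails (non-adjacent clique vertices become compatible within large candidate sets). Your own ``localisation'' property exposes the inconsistency: if every within-component path is confined to a single gadget, then $x_{uv}$ reaches no vertex of $C_K$ other than $v$ inside $C_K$, so $C_K$ is not closed; if localisation is weakened to allow crossing gadgets at clique vertices, the pairwise-adjacency claim in your reverse direction no longer follows.

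The paper resolves this by changing the architecture, not just the schedule: each original vertex $v_i$ is replaced by a vertex-gadget of $n-1$ sub-vertices $v_i^j$ (one per potential neighbor), the gadget for $v_iv_j$ attaches to the specific sub-vertices $v_i^j$ and $v_j^i$, all out-edges ($P_5^{out}$) still precede all in-edges ($P_6^{in}$), and redistribution inside a vertex-gadget happens only along early vertex paths (before all out-edges) and late vertex paths (after all in-edges), with bridges between gadgets; dedicated early and late sem-paths make \emph{all} $2\lvert E_H\rvert$ semaphore vertices universal. This simultaneously kills every chain across two edge-gadgets and makes all sub-vertices of chosen vertices and all sem-vertices mutually compatible within the component, so the target size is $s\cdot(\lvert V_H\rvert-1)+2\lvert E_H\rvert$ rather than $k+2\binom{k}{2}$. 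Your bridge observations and the merging of ten host paths into six match the paper, but without the per-neighbor splitting (or an equivalent device) the scheduling problem you leave open is not merely technical bookkeeping --- it has no solution in the architecture you propose.
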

    \begin{proof}
        Let $(H=(V_H,E_H),\paraCliqueSize)$ be an instance of \clique. The task is to decide whether there exists a set $V'\subseteq V_H$ with $|V'|=\paraCliqueSize$ such that $uv\in E_H$ for all distinct $u,v\in V'$.
        
        We will construct a temporal \kpathgraph{6} $\gcal$ such that $\gcal$ contains a \ctcc of size~$\paraCliqueSize\cdot (\lvert V_H\rvert-1) + 2\lvert E_H\rvert$ if and only if $H$ contains a clique of size~$s$.  
        We first describe the construction using 10 temporal paths\iflong, prove its correctness,\fi\ and then show how to merge the 10 paths into 6.
        \begin{leconstruction}
        [see \Cref{fig:ctcc hardness big overview} for an illustration] \label{constr: closed hardness}
        In the following construction, we do not give explicit time labels for the edges of the paths.
        Instead, we impose a temporal order: For any two paths $P_i$ and $P_j$ with $i<j$, all edges in $P_i$ occur earlier than all edges in $P_j$, while within each path the edges are labeled in strictly increasing order from start to end.
        \begin{enumerate}
            \item For each vertex $v_i \in V_H$, create a \emph{vertex-gadget} in $\gcal$ consisting of $n-1$ sub-vertices $v_i^j$.
            Let $V^{sub} = \{v_i^j \colon i,j \in [n],\, i\neq j\}$ denote the set of all such sub-vertices.
            
            Construct a temporal path $P_1^V$ traversing the sub-vertices $v_i^j\in V^{sub}$ in lexicographic order on $(i,j)$ (ordered first by $i$, then by $j$) and insert a bridge between each pair of consecutive vertex-gadgets.  
            Let $P_2^V$ be the reverse of $P_1^V$, and let $P_9^V$ and $P_{10}^V$ be additional copies of $P_1^V$ and $P_2^V$, respectively. Note that each path has unique bridge-vertices.
            \item For each edge $v_i v_j \in E_H$, create an \emph{edge-gadget} in $\gcal$ consisting of two bidirected, subdivided edges $(v_i^j, x_{ij}, v_j^i)$ and $(v_j^i, x_{ji}, v_i^j)$, where $v_i^j$ and $v_j^i$ are sub-vertices of the vertex-gadgets of $v_i$ and $v_j$, respectively. The subdivision vertices $x_{ij}$ and $x_{ji}$ are called \emph{semaphore vertices} (or \emph{sem-vertices} for short). Let $S$ denote the set of all sem-vertices, so $\lvert S\rvert = 2\lvert E\rvert$ by construction.  
            
            Collect every edge that goes from a vertex-gadget to a sem-vertex into a path $P_5^{out}$ by ordering the edges $(v_i^j, x_{ij})$ in lexicographic order of $(i,j)$ and inserting a bridge between each pair of consecutive edges.
            Similarly, collect all edges $(x_{ij}, v_j^i)$ going from a sem-vertex to a vertex-gadget into a path $P_6^{in}$.  
            This is well defined, since the semaphore technique of \cite{casteigts_FindingStructure_2018} guarantees a proper temporal labeling of these edge-gadgets, which implies a strict total order in which all edges used in $P_5^{out}$ occur earlier than those in $P_6^{in}$.
            \item Connect $S$ using two temporal paths $P_3^{sem}$ and $P_4^{sem}$ that traverse all sem-vertices $x_{ij}$ in lexicographic order, once forwards and once backwards.  
            Construct $P_7^{sem}$ and $P_8^{sem}$ as duplicates of $P_3^{sem}$ and $P_4^{sem}$, respectively.
            \end{enumerate}
        In summary, the temporal graph is  
        \[\gcal = P_1^V \;\cup\; P_2^V \;\cup\; P_3^{sem} \;\cup\; P_4^{sem} \;\cup\; P_5^{out} \;\cup\; P_6^{in} \;\cup\; P_7^{sem} \;\cup\; P_8^{sem} \;\cup\; P_9^V \;\cup\; P_{10}^V.\quad \qedhere\]
        \end{leconstruction}
        \iflong
        We show that $H$ contains a clique of size $\paraCliqueSize$ if and only if $\gcal$ contains a \ctcc of size $\paraCliqueSize\cdot (\lvert V_H\rvert-1) + 2\lvert E_H\rvert$. Since \clique is \NP-hard and the reduction runs in polynomial time while producing a \kpathgraph{10} (later merged into a \kpathgraph{6}), the claim follows.
        Recall that $S$ is the set of sem-vertices.
        \fi
        \begin{figure}[t]
            \centering
            \includegraphics[width=\linewidth]{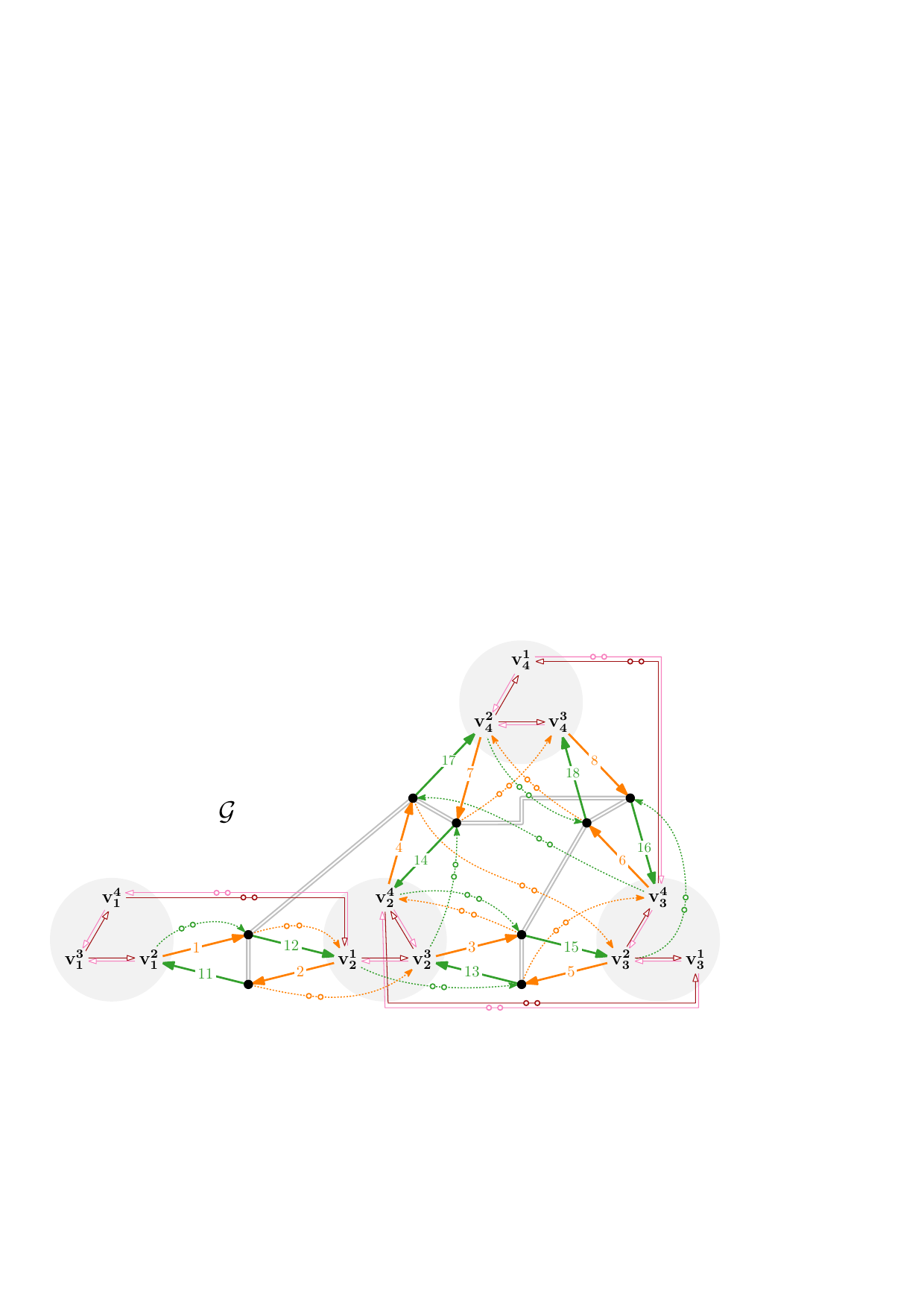}
            \caption{
                Illustration of the construction for \Cref{thm: ctcc is paranp-hard}.  
                Bridges are indicated by pairs of circles, e.g., on the red and pink paths between vertex gadgets. The gadgets are highlighted by gray circular regions.  
                The red and pink paths ($P_1^V$, $P_2^V$, $P_9^V$, $P_{10}^V$) enable compatibility within each vertex gadget while the bridges ensure this does not create arbitrary compatibilities between the gadgets. 
                The gray paths ($P_3^{sem}$, $P_4^{sem}$, $P_7^{sem}$, $P_8^{sem}$) ensure that all sem-vertices (black dots) are compatible with every vertex in \gcal.  
                The orange $P_5^{out}$ and green path $P_6^{in}$ encode the adjacency relation of the \clique instance.  
                Orange temporal edges (labels 1–8) are connected via the dotted orange arcs (with bridges), while green temporal edges (labels 11–18) are connected via the dotted green arcs (with bridges).  
                }
            \label{fig:ctcc hardness big overview}
        \end{figure}
        
        \iflong
        \bigparagraph{$(\Rightarrow)$}\quad 
        Let $C \subseteq V_H$ be a clique of size $\paraCliqueSize$ in $H$. We claim that $\mathcal{C} = S \cup \{v_i^j \in V^{sub} : v_i \in C\}$ is a \ctcc in $\gcal$. By construction, $\lvert\mathcal{C}\rvert = \paraCliqueSize\cdot (\lvert V_H\rvert-1) + 2\lvert E_H\rvert$.
        
        First, all sem-vertices from $S$ are compatible via the paths $P_3^{sem}$ and $P_4^{sem}$. 
        
        Next, every sem-vertex can reach every sub-vertex from $V^{sub}$.
        Since $P_6^{in}$ occurs after both~$P_3^{sem}$ and $P_4^{sem}$, an arbitrary sem-vertex can first traverse the sem-paths to some sem-vertex~$x_{ij}$ incident with $v_i$, then enter the gadget of~$v_i$ via $P_6^{in}$, and finally reach all sub-vertices of $v_i$ through $P_9^V$ and $P_{10}^V$:  
        \[
        \text{sem-vertex} \;\to\; P_3^{sem}/P_4^{sem} \;\to\; P_6^{in} \;\to\; P_9^V/P_{10}^V \;\to\; \text{sub-vertex}.
        \]
        Conversely, every sub-vertex $v_i^j$ of some $v_i \in V_H$ can reach every sem-vertex.  
        Within its gadget it uses $P_1^V$ or $P_2^V$ to reach a designated sub-vertex $v_i^j$, then follows $P_5^{out}$ to a sem-vertex~$x_{ij}$, and from there $P_7^{sem}$ and $P_8^{sem}$ provide reachability to all others:  
        \[
        \text{sub-vertex} \;\to\; P_1^V/P_2^V \;\to\; P_5^{out} \;\to\; P_7^{sem}/P_8^{sem} \;\to\; \text{sem-vertex}.
        \]
        Finally, for any two distinct $v_i,v_j\in C$, the clique property ensures an edge $v_iv_j\in E_H$ in $H$. In $\gcal$ this guarantees that the sub-vertex $v_i^j$ of $v_i$ is connected to the sub-vertex $v_j^i$ of $v_j$ via a sem-vertex $x_{ij}$. 
        Hence, all sub-vertices of $v_i$ can reach all sub-vertices of $v_j$ via  
        \[
        \text{sub-vertex of } v_i \;\to\; P_1^V/P_2^V \;\to\; P_5^{out} \;\to\; x_{ij} \;\to\; P_6^{in} \;\to\; P_9^V/P_{10}^V \;\to\; \text{sub-vertex of } v_j,
        \]
        and symmetrically in the reverse direction. 
        
        All these paths remain within~$\mathcal{C}$, so $\mathcal{C}$ is a valid \ctcc.

        \bigparagraph{$(\Leftarrow)$}\quad 
        Let $\mathcal{C}$ be a \ctcc of size $\paraCliqueSize\cdot (\lvert V_H\rvert-1) + 2\lvert E_H\rvert$ in $\gcal$.  
        Since $\lvert S\rvert = 2\lvert E_H\rvert$ and no bridge-vertex can be contained in $\mathcal{C}$ according to \Cref{lem:no-bridges-in-ctcc}, it follows that $\mathcal{C}$ contains at least~$\paraCliqueSize\cdot (\lvert V_H\rvert-1)$ many sub-vertices.
        Hence, $\lvert C = \{v_i \in V_H \colon v_i^j \in \mathcal{C} \text{ for some }j\}\rvert \geq \paraCliqueSize$ by the pigeonhole principle.  
        We claim that $C$ induces a clique in $G$.
        
        Consider the reduced temporal graph $\gcal'$ obtained by deleting all bridges from \gcal as per \Cref{lem:bridges-irrelevant-for-ctcc}.  
        In $\gcal'$ each path of type $P_i^V$ is split into segments confined to a single vertex-gadget, while $P_5^{out}$ and $P_6^{in}$ are split into isolated edges.
        
        Take any two distinct $v_i,v_j\in C$. Then there exists $x,y\in [n]$ with $v_i^{x} , v_j^{y} \in \mathcal{C}$.  
        Since $\mathcal{C}$ is a \ctcc, there must be temporal paths between $v_i^{x}$ and $v_j^{y}$ in $\gcal'$ using only $\mathcal{C}$.  
        In $\gcal'$, the only way to leave the vertex-gadget of $v_i$ is via an edge in $P_5^{out}$ to a sem-vertex, and the only way to enter the vertex-gadget of $v_j$ is via an edge in $P_6^{in}$ coming from a sem-vertex.
        Since $P_5^{out}$ and $P_6^{in}$ consist of isolated edges in $\gcal'$ (separated by bridges in \gcal), any temporal path can use at most one edge of each type.  
        Hence, the temporal path from $v_i^{x}$ to $v_j^{y}$ must traverse the sem-vertex $x_{ij}$, which exists if and only if $v_iv_j\in E$ by construction.
        Therefore, every pair in $C$ is adjacent and $C$ is a clique of size at least $\paraCliqueSize$.
        \fi

        \bigparagraph{Merging ten paths into six.}\quad 
        $P_1^V$ and $P_3^{sem}$ are disjoint and their relative order is irrelevant for our arguments.  
        Hence, we can concatenate them two into a single temporal path and insert a bridge between them to avoid unwanted reachabilities. We denote this concatenation by $P_1^V \circ^\star P_3^{sem}$.  
        The same reasoning applies to $P_2^V$ with $P_4^{sem}$, $P_9^V$ with $P_7^{sem}$, and $P_{10}^V$ with $P_8^{sem}$.  
        Thus, the temporal graph can equivalently be constructed as
        \[
            \gcal = (P_1^V \circ^\star P_3^{sem}) \;\cup\; (P_2^V \circ^\star P_4^{sem}) \;\cup\; (P_5^{out} \;\cup\; P_6^{in}) \;\cup\; (P_7^{sem}\circ^\star P_9^V) \;\cup\; (P_8^{sem}\circ^\star P_{10}^V)\ifshort.\quad\qedhere\else,\fi
        \]
        \iflong which consists of a total of 6 temporal paths.\fi

        \iflong
        \begin{remark}[Undirected and non-strict versions.]
        The reduction remains valid if all temporal edges are made undirected. To avoid confusion, we will refer to a temporal path through \gcal, which does not have to be one of the 6 paths of the construction, as a \textit{temporal trip}.
        
        Since the labeling of \gcal is proper, no two incident edges have the same time label. Thus, every temporal path of the construction keeps its temporal direction.
        
        Making the edges undirected does create additional local reachabilities (e.g., a sub-vertex can reach its incident sem-vertex along the “in” edge at its late time), but they do not create any new cross-gadget compatibilities. In particular:
        \begin{enumerate}
            \item Within a vertex-gadget, all sub-vertices are compatible, while the paths $P_i^V$ cannot be used to move between different gadgets because of the bridges.
            \item Between vertex-gadgets, compatibility is still achieved only via $P_5^{out}$ and $P_6^{in}$, based on the adjacency in the \clique instance.
            \item The semaphore edges satisfy \emph{out before in}: All edges of $P_5^{out}$ occur strictly earlier than all edges of $P_6^{in}$.
            Hence, a temporal trip cannot enter a gadget and later leave to another gadget (that would require taking an “out” edge after an “in” edge).
            While a trip may reach a sem-vertex via an undirected “in” edge at a late time (e.g., $v_2^1$ taking the green edge at time step 12 in the wrong direction in \Cref{fig:ctcc hardness big overview}), any continuation to another gadget would either violate time order or require following $P_6^{in}$ across bridges, which is not allowed in a closed component.
        \end{enumerate}
        The bridge definition and implications (\Cref{lem:no-bridges-in-ctcc,lem:bridges-irrelevant-for-ctcc}) depend only on the increasing time labels $t_1<t_2<t_3$, not on the edge orientation, and are thus unchanged. Consequently, all temporal compatibilities, and thus the correctness of the reduction, coincide in the directed and undirected versions of the construction.

        Finally, because the labeling is proper, the strict and non-strict interpretations of~$\gcal$ have the same reachabilities and are thus reachability-equivalent (cf.~\cite{casteigts_SimpleStrict_2024,doring_SimpleStrict_2025}).
        \end{remark}
        \fi
    \end{proof}
    Since $\tpn \geq \tdegree$, the reduction above directly implies \paraNP-harness of \closedTCC parameterized by \tdegree. For \openTCC, the classical \clique reduction\iflong\ (see \Cref{fig:hardness opentcc})\fi\ can be adjusted\iflong\ to ensure bounded temporal degree\fi: Replace every vertex $v$ with a binary tree whose number of leaves equals the degree of $v$ in $H$, and enforce full pairwise reachability in this tree before the first and after the last time step of the semaphore edges. Then, in a maximum open tcc that binary tree is included instead of $v$. \iflong Combining both observations yields the following.\fi
    \begin{corollary} \label{cor:open-closed-tempdegree}
        \openTCC and \closedTCC on (un)directed, (non-)strict temporal graphs are \NP-hard even on graphs with $\tdegree=6$ for \closedTCC and $\tdegree=4$ for \openTCC.
    \end{corollary}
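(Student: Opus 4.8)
The plan is to derive the two bounds from two separate reductions: a \closedTCC reduction that comes essentially for free from \Cref{thm: ctcc is paranp-hard}, and a fresh adaptation of the classical \clique-to-\openTCC semaphore reduction.

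For \closedTCC I would reuse the reduction behind \Cref{thm: ctcc is paranp-hard} verbatim. It is a polynomial-time reduction from \clique whose output is a \kpathgraph{6}, i.e., a union of six temporal paths. Every vertex lies on only a bounded number of these six paths and, along each path, is incident to at most two of its edges, so every vertex has constant temporal degree; since $\tpn \ge \tdegree$, the produced instances satisfy $\tdegree \le 6$. As \clique is \NP-hard, the reduction runs in polynomial time, and (as argued there) it works uniformly for the (un)directed and (non-)strict variants, \closedTCC is \NP-hard already on temporal graphs of temporal degree at most $6$; if exact equality $\tdegree = 6$ is wanted, one appends a small disjoint gadget — for instance a fresh vertex incident to six temporal edges confined to a narrow time window, attached only through a bridge so that it joins no nontrivial \ctcc — raising the maximum temporal degree to $6$ without affecting the maximum \ctcc.

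For \openTCC the plain semaphore reduction from \clique (\Cref{fig:hardness opentcc}) gives a vertex $v$ of $H$ temporal degree about $2\deg_H(v)$, which is unbounded, so I would replace each $v$ by a gadget $T_v$: a rooted binary tree with one leaf per neighbour of $v$ (padded with dummy leaves, or obtained by reducing from \clique on regular graphs, so that all $T_v$ have the same number of vertices and the target component size depends only on the clique size). Each leaf $\ell_{vu}$ is dedicated to a neighbour $u$ and carries the two semaphore edges of the pair $\{u,v\}$, so it stays at temporal degree $3$. On $T_v$ I would add two batches of temporal edges — one with labels strictly before all semaphore labels, one strictly after — laid out so that, combined, every vertex of $T_v$ reaches every leaf using only the early batch and every leaf reaches every vertex using only the late batch. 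Composing an early ``$a\to\ell$'' with a late ``$\ell\to b$'' then makes $V(T_v)$ pairwise compatible, and composing across a semaphore makes all of $T_u$ compatible with all of $T_v$ exactly when $uv\in E_H$; the semaphore's strict out-before-in order together with the early/late batches sandwiching the semaphore labels blocks any cross-gadget shortcut, so the maximum open tcc is precisely $\bigcup_{v\in C} V(T_v)$ for a maximum clique $C$ of $H$. It remains to choose the layout of the two batches so that every internal node of $T_v$ also ends up with temporal degree at most $4$; once that is in place, \openTCC is \NP-hard on temporal graphs with $\tdegree = 4$, again for all four graph models.

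The main obstacle in both parts is this degree bookkeeping rather than correctness. For \closedTCC it is checking that the \kpathgraph{6} construction of \Cref{thm: ctcc is paranp-hard} can be made to have temporal degree at most $6$ (a crude count of how many of the six paths pass through a sub- or semaphore-vertex suggests roughly $12$), which may force a careful re-interleaving of the six paths, bridges, and semaphore edges. For \openTCC the crux is the routing inside the binary-tree gadget: realising ``every vertex reaches every leaf before the semaphores'' and ``every leaf reaches every vertex after the semaphores'' while keeping every internal tree node at temporal degree at most $4$ — this is precisely what pins down the tree shape (binary) and forces a layered up/down labelling of the two batches. The correctness of both reductions, the maximality of the claimed components, and their invariance under the (un)directed and (non-)strict readings follow the same arguments as in \Cref{thm: ctcc is paranp-hard} and the classical reduction, so I would treat those as routine.
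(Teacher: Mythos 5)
Your plan is essentially the paper's proof: the \closedTCC bound is obtained by reusing the \kpathgraph{6} construction of \Cref{thm: ctcc is paranp-hard} together with the relation $\tdegree\le\tpn$ invoked just before the corollary, and the \openTCC bound by exactly the same modification of the classical semaphore reduction that the paper describes (replace each vertex by a binary tree with one leaf per neighbour and enforce full pairwise reachability inside the tree before the first and after the last semaphore label). The two ``obstacles'' you defer are precisely the points where the paper is equally terse: it performs no re-interleaving of the six paths (its degree-$6$ claim rests solely on the asserted inequality $\tdegree\le\tpn$, whereas your count of roughly $12$ corresponds to the generic bound $\tdegree\le 2\tpn$, since a path passing through a vertex contributes two temporal edges there), and it likewise does not spell out a temporal-degree-$4$ labelling of the two batches inside the tree gadget, so your proposal is no less complete than the published argument, though neither text carries out that degree bookkeeping explicitly.
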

\newcommand{\vcdim}{\ensuremath{\operatorname{\mathsf{VC}\text{-}dim}}\xspace}
\subsection{\XP Algorithm for \openTCC on Bounded Temporal Path Number}\label{sec:open}
    We present an \XP algorithm for computing a maximum \otcc in a \kpathgraph{\paraPaths}. The central idea is that the number of maxim\textbf{al} \otcc{}s in such a graph is bounded polynomially in $n$ with the exponent depending only on~$\paraPaths$. 
    To prove this, we use tools from Vapnik–Chervonenkis theory: The family of maximal \otcc{}s forms a set system of \vcdimension at most $2\paraPaths+1$. 
    By the Sauer–Shelah–Perles Lemma~\cite{sauer_DensityFamilies_1972,shelah_CombinatorialProblem_1972}, this implies that the number of distinct maximal \otcc{}s is at most $n^{2\paraPaths+1}$. 
    Enumerating over this family can be done via a bounded-depth branching procedure, yielding an \XP algorithm for \openTCC.
    \begin{definition}[\vcdimension \cite{vapnik_UniformConvergence_2015}]
        Let $\mathcal{F}$ be a family of subsets over a universe $U$. A set $A \subseteq U$ is said to be \emph{shattered} by $\mathcal{F}$ if for every subset $S\in 2^A$, there exists a set $F \in \mathcal{F}$ such that $F \cap A = S$. The \emph{\vcdimension} of $\mathcal{F}$, denoted $\vcdim(\mathcal{F})$, is the size of the largest set $A \subseteq U$ that is shattered by $\mathcal{F}$.
    \end{definition}
    \begin{lemma}[Sauer-Shelah-Perles Lemma \cite{sauer_DensityFamilies_1972,shelah_CombinatorialProblem_1972}]
    \label{sauer shelah perles lemma}
        Let \(\mathcal{F}\) be a set system over a universe of size \(n\) with \vcdimension at most \(\paraPaths\). Then the number of distinct sets in \(\mathcal{F}\) is bounded by
        \[
        |\mathcal{F}| \leq \sum_{i=0}^{\paraPaths} \binom{n}{i} = O(n^{\paraPaths}).
        \]
    \end{lemma}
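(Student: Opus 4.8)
This is the classical Sauer--Shelah--Perles lemma, so the plan is simply to reproduce its standard short proof. I would establish the explicit bound $|\mathcal{F}|\le\sum_{i=0}^{\paraPaths}\binom{n}{i}$ by induction on the size $n$ of the universe $U$, allowing the \vcdimension bound $\paraPaths$ to decrease along the recursion, and then note that for fixed $\paraPaths$ the right-hand side is $O(n^{\paraPaths})$. The base cases are immediate: if $n=0$ then $\mathcal{F}\subseteq\{\emptyset\}$, and if $\paraPaths=0$ then $\mathcal{F}$ cannot contain two distinct members (two distinct sets differ on some element $x$, which $\mathcal{F}$ would then shatter), so in both cases $|\mathcal{F}|\le 1=\binom{n}{0}$.

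For the inductive step I would fix an element $x\in U$, set $U'=U\setminus\{x\}$, and split $\mathcal{F}$ into two derived systems over $U'$: the \emph{trace} $\mathcal{F}|_{U'}=\{F\setminus\{x\}\colon F\in\mathcal{F}\}$, and the \emph{doubled} family $\mathcal{F}^{x}=\{F\subseteq U'\colon F\in\mathcal{F}\text{ and }F\cup\{x\}\in\mathcal{F}\}$. The projection $F\mapsto F\setminus\{x\}$ maps $\mathcal{F}$ onto $\mathcal{F}|_{U'}$ and is at most two-to-one, with the fibres of size exactly two being precisely the members of $\mathcal{F}^{x}$; hence $|\mathcal{F}|=|\mathcal{F}|_{U'}|+|\mathcal{F}^{x}|$. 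Then I would verify the two \vcdimension bounds that make the induction close: first $\vcdim(\mathcal{F}|_{U'})\le\vcdim(\mathcal{F})\le\paraPaths$, since any $A\subseteq U'$ shattered by the trace is already shattered by $\mathcal{F}$ (as $x\notin A$), and second — the crucial one — $\vcdim(\mathcal{F}^{x})\le\paraPaths-1$: if $A\subseteq U'$ is shattered by $\mathcal{F}^{x}$, then $A\cup\{x\}$ is shattered by $\mathcal{F}$, because a witness $F\in\mathcal{F}^{x}$ for a trace $S\subseteq A$ can be completed to a witness in $\mathcal{F}$ for the trace $S$ (keep $F$) or the trace $S\cup\{x\}$ (use $F\cup\{x\}$) on $A\cup\{x\}$, so $|A|+1\le\paraPaths$.

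Applying the induction hypothesis to both systems over the $(n-1)$-element universe $U'$ and combining with Pascal's identity $\binom{n-1}{i}+\binom{n-1}{i-1}=\binom{n}{i}$ gives
\[
    |\mathcal{F}|\;\le\;\sum_{i=0}^{\paraPaths}\binom{n-1}{i}\;+\;\sum_{i=0}^{\paraPaths-1}\binom{n-1}{i}\;=\;\sum_{i=0}^{\paraPaths}\binom{n}{i},
\]
and finally $\sum_{i=0}^{\paraPaths}\binom{n}{i}\le(\paraPaths+1)\binom{n}{\paraPaths}=O(n^{\paraPaths})$ for fixed $\paraPaths$, which is the stated asymptotic form. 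The only genuinely nontrivial point, and the step I would write out most carefully, is the bound $\vcdim(\mathcal{F}^{x})\le\paraPaths-1$; everything else is bookkeeping with binomial coefficients. An alternative route that avoids induction is the down-shift (compression) argument: repeatedly replace a set $F\in\mathcal{F}$ containing $x$ by $F\setminus\{x\}$ whenever the latter is not already in the family; this never changes $|\mathcal{F}|$ and never increases the \vcdimension, and it terminates at a downward-closed family in which every member is shattered, hence has size at most $\paraPaths$, giving the same count. Either way the result is standard, so for the purposes of this paper it is enough to state it with the citation.
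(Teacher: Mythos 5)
The paper does not prove this lemma at all: it is stated as a known result and attributed to Sauer and Shelah, so there is no internal proof to compare against. Your write-up is the standard textbook argument and it is essentially correct: the decomposition $|\mathcal{F}|=|\mathcal{F}|_{U'}|+|\mathcal{F}^{x}|$ via the at-most-two-to-one projection, the easy bound $\vcdim(\mathcal{F}|_{U'})\le k$, the key step $\vcdim(\mathcal{F}^{x})\le k-1$ (correctly argued: a set $A$ shattered by $\mathcal{F}^{x}$ yields $A\cup\{x\}$ shattered by $\mathcal{F}$, using $F$ and $F\cup\{x\}$ as witnesses), and Pascal's identity all fit together as claimed, and the shifting/compression alternative you sketch is also a valid standard route. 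The only blemish is the final bookkeeping inequality $\sum_{i=0}^{k}\binom{n}{i}\le(k+1)\binom{n}{k}$, which fails when $n<2k$ (e.g.\ $n=k$ gives $2^{n}$ on the left and $k+1$ on the right); this does not affect the asymptotic conclusion, and a cleaner way to finish is $\binom{n}{i}\le n^{i}\le n^{k}$ for $n\ge 1$, giving $\sum_{i=0}^{k}\binom{n}{i}\le (k+1)n^{k}=O(n^{k})$ for fixed $k$. Your closing remark matches the paper's actual treatment: for the purposes of this work, stating the lemma with the citations suffices.
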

    We first show that the \vcdimension of the family of maximal \otcc{}s in any \kpathgraph{\paraPaths} is at most $2k+1$ 
    and then how this implies an exponential time algorithm.
    \begin{lemma}
    \label{lem:otcc bounded VCdimension}
        The maximal \otcc{}s in an (un)directed, (non-)strict \kpathgraph{\paraPaths} $\gcal$ form a set system over a universe of size $\lvert V\rvert=n$ with \vcdimension at most $2k+1$.
    \end{lemma}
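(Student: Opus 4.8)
I would argue by contradiction: suppose some $A\subseteq V$ with $\lvert A\rvert=2\paraPaths+2$ is shattered by $\mathcal F$, the family of maximal \otcc{}s. First, for any $u,v\in A$ the subset $\{u,v\}$ is a trace of some $F\in\mathcal F$, so $\{u,v\}\subseteq F$; since $F$ is temporally connected, $u\compatible v$. Hence $A$ is temporally connected and in particular $v_i\rightreach v_j$ for all $v_i,v_j\in A$. Next, for each $v\in A$ pick $F_v\in\mathcal F$ with $F_v\cap A=A\setminus\{v\}$; then $A\setminus\{v\}\subseteq F_v$ and $v\notin F_v$. If $v$ were compatible with every vertex of $F_v$, then $F_v\cup\{v\}$ would be temporally connected (compatibility with $A\setminus\{v\}$ holds by the previous sentence, and with $F_v\setminus A$ by assumption), contradicting maximality of $F_v$. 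So there is a \emph{witness} $w_v\in F_v\setminus A$ with $v\not\compatible w_v$, and since $w_v\in F_v\supseteq A\setminus\{v\}$, the witness satisfies $w_v\compatible u$ for every $u\in A\setminus\{v\}$.

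\textbf{Distinctness and pigeonhole.}
The witnesses are pairwise distinct and lie outside $A$: if $w_u=w_v$ for distinct $u,v\in A$, then (as $u\in A\setminus\{v\}$) $w_v\compatible u$, contradicting $w_u\not\compatible u$. So $\{w_v:v\in A\}$ gives $2\paraPaths+2$ new distinct vertices. For each $v\in A$, $v\not\compatible w_v$ means $v\not\rightreach w_v$ or $w_v\not\rightreach v$; by pigeonhole one of the two cases holds for at least $\paraPaths+1$ vertices, and (reversing the temporal direction of all edges of a \kpathgraph{\paraPaths} again yields a \kpathgraph{\paraPaths}) we may assume a set $A'=\{v_1,\dots,v_m\}\subseteq A$ with $m\ge \paraPaths+1$, witnesses $w_i:=w_{v_i}$, all $2m$ of these vertices distinct, and $w_i\not\rightreach v_i$ while for $i\ne j$ we have $w_i\compatible v_j$ (in particular $w_i\rightreach v_j$), and $v_i\rightreach v_j$ for all $i,j$.

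\textbf{The crux: deriving the contradiction from partial transitivity.}
We now have, inside a \kpathgraph{\paraPaths}, a ``near-complete'' pattern: $w_i$ reaches every $v_j$ with $j\ne i$, each such $v_j$ reaches $v_i$, yet $w_i\not\rightreach v_i$. Intuitively this means that \emph{every} temporal walk from $w_i$ to a $v_j$ ($j\neq i$) must arrive at $v_j$ strictly after every temporal walk from $v_j$ to $v_i$ has departed. I would exploit the partial-transitivity property: a witnessing walk $w_i\rightreach v_j$ decomposes into maximal segments along distinct paths joined at ``usable crossings'', and whenever a path $P_\ell$ is entered at a crossing $c$, every $P_\ell$-predecessor of $c$ inherits all the forward-reachability of $c$ along $P_\ell$. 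Tracking, for each index $i$, the last path/last crossing used to reach the $v_j$'s, and using that the $v_j$'s form a reachability-clique, one shows these data cannot be realised consistently once the number of aligned indices exceeds $\paraPaths$ — there are only $\paraPaths$ paths to host the required late-arriving crossings — and a careful accounting (keeping track of the one path shared across the clique that prevents a cleaner bound) yields that the original shattered set $A$ can have size at most $2\paraPaths+1$. This counting argument is the main obstacle: steps~1–2 are routine, but turning the reachability pattern plus the clique structure into the exact bound $2\paraPaths+1$ requires pinning down precisely how usable crossings interact, and in particular where the extra ``$+1$'' comes from.
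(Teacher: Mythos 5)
Your steps 1--2 (shattering forces $A$ to be temporally connected, maximality forces a distinct blocking witness $w_v\notin A$ for each $v\in A$, and pigeonhole on the direction of incompatibility yields $\paraPaths+1$ aligned pairs) coincide with the paper's argument and are fine. But the crux of the proof is exactly the part you leave as a plan: you never actually derive a contradiction from the configuration $v_i\rightreach v_j$, $w_i\rightreach v_j$ for $i\neq j$, and $w_i\not\rightreach v_i$ inside a \kpathgraph{\paraPaths}. The sketch via ``usable crossings'' and ``tracking the last path/last crossing'' is not carried out, and you explicitly flag it as the main obstacle, so as it stands the proposal does not prove the lemma.

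The missing idea is considerably simpler than the accounting you anticipate, and it is where the $\paraPaths+1$ (hence the $+1$ in $2\paraPaths+1$) is actually used. Since there are only $\paraPaths$ paths but $\paraPaths+1$ blocking vertices, some blocker $b_i$ is not the \emph{first} blocking vertex on any path. Let $p_j$ be the path whose incoming edge at $b_i$ is earliest, and let $b_j$ ($j\neq i$) be the first blocker on $p_j$; note $b_j$ precedes $b_i$ on $p_j$. Now look at any temporal trip realizing $b_i\rightreach a_j$. If it leaves $b_i$ along $p_j$, then $b_j$, lying earlier on $p_j$, can follow $p_j$ and the same continuation, so $b_j\rightreach a_j$ --- contradicting $b_j\not\rightreach a_j$. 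Otherwise the trip starts on some other path $p$ whose visit to $b_i$ is no earlier than $p_j$'s arrival, so $b_j$ can first travel along $p_j$ to $b_i$ and then follow the trip via $p$, again giving $b_j\rightreach a_j$ --- the same contradiction. No analysis of crossing patterns or of the clique structure among the $v_j$'s beyond this single reachability is needed; your intuitive observation that walks $w_i\rightsquigarrow v_j$ must arrive too late to be prefixed to walks $v_j\rightsquigarrow v_i$ is correct but by itself does not interact with the bound $\paraPaths$, which is why your route stalls. To complete your proof you would need to replace the ``careful accounting'' paragraph with an argument of this kind (or a worked-out version of your crossing bookkeeping, which you have not supplied).
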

    \begin{proof}    
        Let $\mathcal{G} = \bigcup_{i} p_i$ be a \kpathgraph{\paraPaths} with vertex set $V$ and let $\mathcal{C}$ be the family of all maximal \otcc{}s in $\mathcal{G}$, \ie $\mathcal{C}=\{C\subseteq V \colon C\text{ is a maximal open tcc}\}$.
        Towards contradiction assume that the \vcdimension of $\mathcal{C}$ is at least $2k+2$. Then by definition of the \vcdimension, there exists a set $A \subseteq V$ of size $2k+2$ that is \emph{shattered} by $\mathcal{C}$. That is, for every subset $S \in 2^A$ there exists a component $C(S) \in \mathcal{C}$ such that $C(S) \cap A = S$.
        We analyze the structure implied by this shattering and show that this cannot be achieved using $\paraPaths$ paths.
        
        Let $A = \{a_1, \dots, a_{2\paraPaths+2}\}$.
        Since $A$ is shattered by $C$, there exists a component $C(A)\in\mathcal{C}$ with $A\subseteq C(A)$. Therefore $A$ is temporally connected: For every $i,j \in [2\paraPaths+2], i\neq j$, holds
        \vspace{0.5em}
        \begin{itemize}
            \item $a_i \rightreach a_j$ and $a_i \leftreach a_j$.
        \end{itemize}
        \vspace{0.5em}
        Also by definition of a shattered set, for each $a_i \in A$, the set $A \setminus \{a_i\}=:A_{-i}$ must be contained in some maximal component $C(A_{-i}) \in \mathcal{C}$. Since $a_i\notin C(A_{-i})$, there must exist some \emph{blocking} vertex $b_i\in C(A_{-i})\setminus A$ which is not compatible with $a_i$, \ie
        \vspace{0.5em}
        \begin{itemize}
            \item $b_i \not\rightreach a_i$ or $b_i \not\leftreach a_i$, and
            \item $b_i \rightreach a_j$ and $b_i \leftreach a_j$ for all $i \neq j$.
        \end{itemize}
        \vspace{0.5em}
        Note that $b_i\neq b_j$ for $i\neq j$, since $b_i\rightreach a_j$ and $b_j\not\rightreach a_j$.
        This implies the existence of a set $B = \{b_1, \dots, b_{2k+2}\} \subseteq V \setminus A$ of blocking vertices such that each $b_i$ is incompatible with $a_i$ ($b_i \not\rightreach a_i$ or $b_i \not\leftreach a_i$) but compatible with every $a_j$ for $i \neq j$.
        Since each $b_i$ is incompatible with $a_i$ in at least one direction and $\lvert A\rvert=2k+2$, the pigeonhole principle implies a subset $A' \subseteq A$ of size at least $\paraPaths+1$ for which all incompatibilities have the same direction.

        We may therefore assume wlog that there exist a \emph{shattered set} $A = \{a_1, \dots, a_{\paraPaths+1}\} \subseteq V$ and a \emph{blocking set} $B = \{b_1, \dots, b_{\paraPaths+1}\} \subseteq V \setminus A$ such that for every $i,j \in [\paraPaths+1], i\neq j$, holds
        \vspace{0.5em}
        \begin{enumerate}
            \item $a_i \rightreach a_j$ and $a_i \leftreach a_j$,
            \item $b_i \rightreach a_j$ and $b_i \leftreach a_j$, and
            \item $b_i \not\rightreach a_i$. \label{enum: three}
        \end{enumerate}
        \vspace{0.5em}
        We now show that such a configuration is impossible in a \kpathgraph{\paraPaths}.
        Since $\mathcal{G}$ is constructed of $\paraPaths$ temporal paths and there are $\paraPaths+1$ blocking vertices, there must exist at least one blocking vertex—say $b_i\in B$—that is \emph{not the first} blocking vertex appearing on any path. That is, on each of the $\paraPaths$ paths, some other $b_j\in B$ appears before $b_i$.

        Let $p_j$ be the path with the earliest incoming edge at $b_i$, and  $b_j$ the first blocking vertex on $p_j$. Since $b_i$ is not the first blocking vertex on any path, we have $i \neq j$.
        Now consider the necessary temporal reachability from $b_i$ to $a_j$. If this $b_i$-$a_j$-trip were to use $p_j$ then $b_j$ could also reach $a_j$, contradicting $b_j \not\rightreach a_j$ (\Cref{enum: three}).
        If the $b_i$-$a_j$-trip does not use $p_j$, there must exist some other path $p$ which arrives at $b_i$ after $p_j$ (because $p_j$ was chosen to be the earliest path arriving at $b_i$) and consequently leaves $b_i$ after $p_j$ arrived. Thus, every vertex on $p_j$ before $b_i$ can reach the vertices on $p$ after $b_i$.
        As a result, $b_j$ can reach $a_i$ using first $p_j$ and then the $b_i$-$a_j$-trip via $p$, which again contradicts $b_j\not\rightreach a_j$. See \Cref{fig:vcdim} for an illustration.
        \begin{figure}[h]
            \centering
            \includegraphics[width=0.4\linewidth]{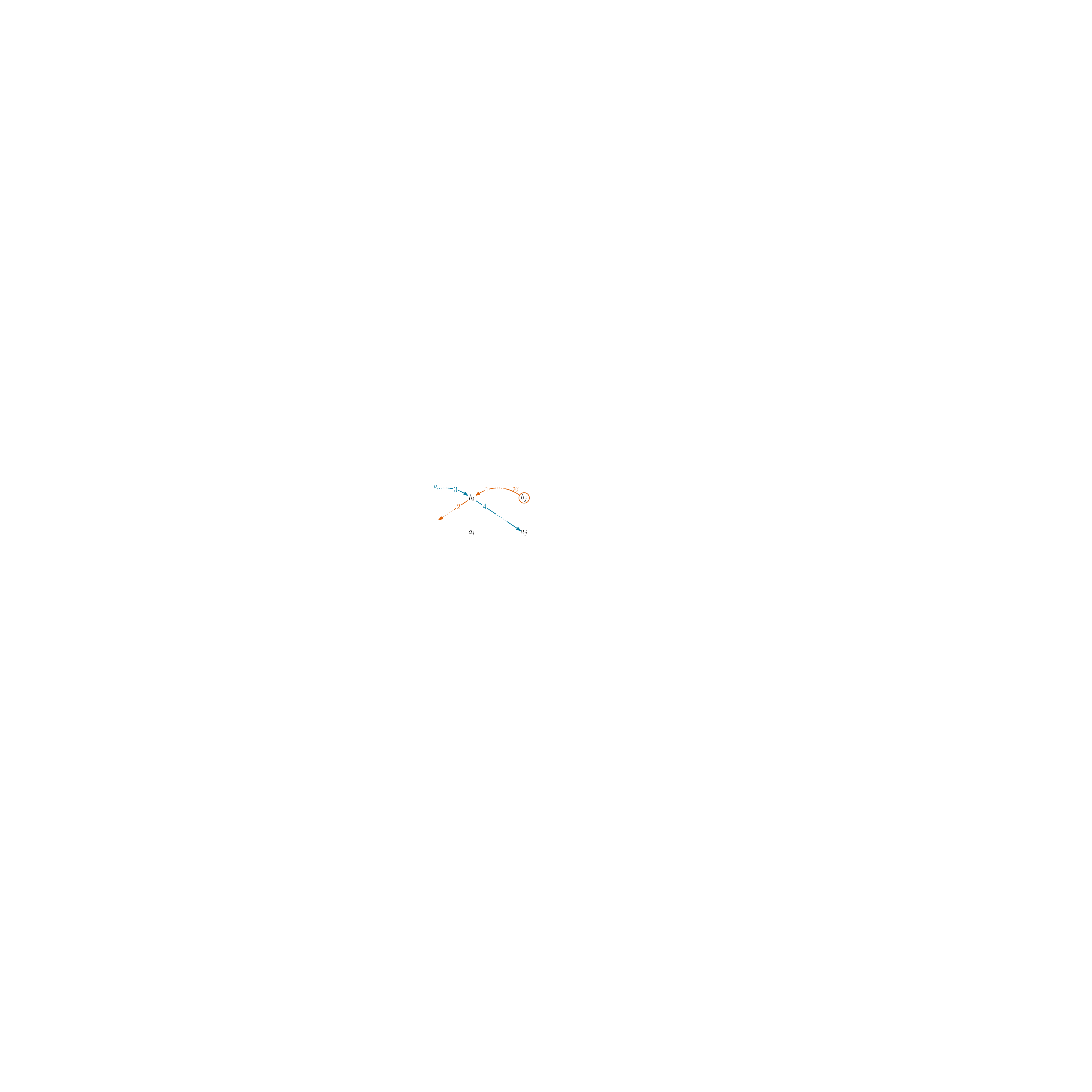}
            \caption{Illustration of the key argument for bounded \vcdimension.
            The orange path $p_j$ is the earliest to reach $b_i$ (here at time 1); its first blocking vertex $b_j$ is circled in orange.
            By assumption, $b_i$ is not the first blocking vertex on any path.
            The blue path $p$ indicates the trip by which $b_i$ eventually reaches $a_j$; note that $a_j$ does not need to lie directly on $p$, it suffices that $p$ starts the trip.
            }
            \label{fig:vcdim}
        \end{figure}
        
        This shows that a union of \paraPaths temporal paths cannot realize such a reachability configuration on $\paraPaths+1$ vertices, and the \vcdimension of $\mathcal{C}$ must therefore be strictly less than $2k + 2$.

        Note that these arguments are all independent of edge directionality and of the strictness of temporal paths.
    \end{proof}

    \begin{theorem} \label{thm:otcc is XP}
        \openTCC on (un)directed, (non)-strict temporal graphs can be solved in time $\mathcal{O}(n^{2\tpn+1})$.
    \end{theorem}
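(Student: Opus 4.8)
The plan is to turn the maximum \otcc\ problem into a maximal-clique enumeration problem on a static auxiliary graph, and then to control the number of cliques produced via \Cref{lem:otcc bounded VCdimension} together with the Sauer--Shelah--Perles lemma (\Cref{sauer shelah perles lemma}). First I would build the \emph{compatibility graph} $G_c=(V,E_c)$ of $\gcal$, putting $uv\in E_c$ exactly when $u\compatible v$. This graph is computable in polynomial time, since for every source $u$ a single temporal traversal records all vertices reachable from $u$. By definition a vertex set $X$ is temporally connected precisely when it is a clique of $G_c$, so the maximal \otcc{}s of $\gcal$ are exactly the inclusion-maximal cliques of $G_c$; in particular the instance is a yes-instance iff $G_c$ has a maximal clique of size exactly $\paraCompSize$.

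Second, I would bound how many such cliques there can be. \Cref{lem:otcc bounded VCdimension} says the family $\mathcal{C}$ of maximal \otcc{}s has \vcdimension at most $2\tpn+1$, so \Cref{sauer shelah perles lemma} with universe $V$ gives $|\mathcal{C}|\le\sum_{i=0}^{2\tpn+1}\binom{n}{i}=\mathcal{O}(n^{2\tpn+1})$. I would emphasise that the algorithm itself neither needs to know $\tpn$ nor to compute a path decomposition: the bound holds automatically for whatever temporal graph is handed to us, and the enumeration below simply inherits it.

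Third, I would enumerate all maximal cliques of $G_c$ by a bounded recursion. The state is a clique $I$ (the committed part of the component) together with its candidate set $P=\{v\in V\setminus I:\ v\text{ is }G_c\text{-adjacent to every vertex of }I\}$. If $I\cup P$ is a clique it is the unique maximal clique extending the current state and we report it; otherwise we fix an incompatible pair $u,v\in P$ and branch on the at most three ways a maximal clique extending $I$ can resolve the conflict ($u$ joins, $v$ joins, or neither joins). Equipped with a pivoting rule in the spirit of Bron--Kerbosch (equivalently, by invoking any polynomial-delay maximal-clique enumeration subroutine as a black box), the procedure outputs each maximal clique exactly once, so by the count of the previous step the enumeration runs within the claimed $\mathcal{O}(n^{2\tpn+1})$ bound. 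Finally I would scan the enumerated components for one of size exactly $\paraCompSize$ and answer accordingly. Since the argument behind \Cref{lem:otcc bounded VCdimension} does not use edge orientations or strictness, the same algorithm handles (un)directed and (non-)strict temporal graphs.

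I expect the delicate point to be the running-time analysis of the recursion rather than its correctness: the naive branching tree on $G_c$ can have $2^{n}$ leaves, so the analysis must genuinely cash in the polynomial bound on $|\mathcal{C}|$ --- that is, the VC-dimension statement has to be wired into the branching (through a correct pivot rule or an output-sensitive enumeration routine), not merely used as a side remark.
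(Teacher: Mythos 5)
Your proposal is correct and rests on the same two pillars as the paper's proof---\Cref{lem:otcc bounded VCdimension} plus the Sauer--Shelah--Perles bound giving $\mathcal{O}(n^{2\tpn+1})$ maximal \otcc{}s---but the algorithmic wrapper is genuinely different. The paper runs a plain two-way branching directly on the temporal graph (include $v$ and delete all vertices incompatible with $v$, or delete $v$) and asserts in one line that the number of leaves is bounded by the number of maximal components; you instead pass to the static compatibility graph $G_c$, observe that maximal \otcc{}s are exactly the maximal cliques of $G_c$ (valid here precisely because the components are open), and invoke an output-sensitive maximal-clique enumerator (Bron--Kerbosch with pivoting, or any polynomial-delay routine) so that the total work is explicitly $n^{\mathcal{O}(1)}$ times the number of maximal cliques. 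Your closing caveat is exactly the right one, and it is in fact the step the paper glosses over: without a pivot rule or an output-sensitive routine the branching tree is \emph{not} automatically tied to the number of maximal components (if all vertices are pairwise compatible, the naive include/exclude recursion has $2^n$ leaves while there is a single maximal \otcc), and even your three-way ``$u$ joins / $v$ joins / neither'' branching alone can overshoot the clique count, which is why deferring to a polynomial-delay enumerator is the clean way to cash in the VC-dimension bound. The trade-off is symmetric: the paper's branching needs no auxiliary graph but leaves the leaf-count claim implicit, whereas your route buys a rigorous output-sensitive analysis at the cost of computing all-pairs compatibility up front; both arguments, like the theorem statement itself, absorb an extra polynomial factor on top of the $\mathcal{O}(n^{2\tpn+1})$ component count, and both are indifferent to directedness and strictness since only reachability is used.
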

    \begin{proof}
        Given a temporal \kpathgraph{\paraPaths} $\gcal=\bigcup_{i\in[\paraPaths]}P_i$, the algorithm uses a branching approach (see \cite[Chapter 3]{cygan_ParameterizedAlgorithms_2015}):
        Pick an arbitrary vertex $v$ in \gcal and branch on it. In one branch, add $v$ into the \otcc and remove from the graph vertex $v$ and all vertices that are not reached by $v$ or cannot reach $v$. In the other branch simply remove $v$ from the graph. This computes all maximal \otcc{}s, and the algorithm returns one of maximum size.
        
        The correctness of this algorithm follows from the fact that the branching is exhaustive.
        The running time of the algorithm is bounded by the number of nodes in the search tree times the time taken at each node.
        By \Cref{sauer shelah perles lemma} and \Cref{lem:otcc bounded VCdimension}, the number of leaves of the search tree is $\mathcal{O}(n^{2\tpn+1})$ and thus there are at most $\mathcal{O}(2n^{2\tpn+1}-1)$ nodes in the search tree. The time taken at each node (removing $v$ and possibly all vertices that are not reached by $v$ or cannot reach $v$) is bounded by $n^{\mathcal{O}(1)}$. Thus, the algorithm runs in time $\mathcal{O}(n^{2\tpn+1})$.
    \end{proof}

\section{Graphs with Bounded Treewidth + a Temporal Parameter}
\label{subsec: FPT for tw+deg or tw+lifetime}
    In this section we present our results for computing temporal connected components when parameterized by treewidth plus a temporal parameter: temporal degree or lifetime.
    \iflong
    We present \FPT results for both $\tw+\tdegree$ and $\tw+\lifetime$ in \Cref{sec: FPT MSO} and show that they do not admit a polynomial kernel in \Cref{sec:kernel}.
    \fi
        
    \subsection{\FPT via MSO Formulations}\label{sec: FPT MSO}
        We show that open and closed tccs can be encoded using \emph{monadic second-order logic} (MSO). This yields the existence of a fixed-parameter tractable algorithm when parameterized by the treewidth~$\tw$ combined either with the maximum temporal degree~$\tdegree$ or with the lifetime~$\lifetime$.  
    
        \bigparagraph{MSO on static graphs.}
        MSO is a logical formalism with two types of quantifiers ranging over individual elements and sets of such elements.  
        A classical \textit{static} graph $G=(V,E)$ can be represented as a relational structure $(U, V, E, \adj, \inc)$ with universe $U=V\cup E$, where \iflong
        \begin{itemize}
            \item $V(\cdot)$ and $E(\cdot)$ are unary predicates identifying vertices and edges,
            \item $\adj(u,v)$ is a binary relation expressing adjacency of vertices,
            \item $\inc(v,e)$ is a binary relation expressing incidence between a vertex and an edge.
        \end{itemize}
        \else
        $V(\cdot)$ and $E(\cdot)$ are unary predicates identifying vertices and edges,
        $\adj(u,v)$ is a binary relation expressing adjacency of vertices, and
        $\inc(v,e)$ is a binary relation expressing incidence between a vertex and an edge.
        \fi
        Formulas are built from atomic statements of the form $x=y$, $R(x,y)$, or $R'(x)$ (for $R\in\{\adj,\inc\}$ and $R'\in\{V,E\}$), combined by Boolean connectives $\neg,\vee,\wedge,\rightarrow,\leftrightarrow$ and quantifiers $\forall,\exists$ over elements or sets of elements. For details, see~\cite{courcelle_GraphStructure_2012}.  
        
        \bigparagraph{Algorithmic meta-theorem.}
        To connect MSO to parameterized complexity, we rely on Courcelle’s theorem, which states that every MSO-definable graph property can be decided efficiently on graphs of bounded treewidth. We use the following optimization variant.  
        \begin{theorem}[\cite{arnborg_Easyproblems_1991,courcelle_GraphStructure_2012}] \label{thm: courcelle optimization}
            There exists an algorithm that, given 
            \vspace{-0.5em}\begin{enumerate}[label=(\roman*)]
                \item an MSO formula $\varphi$ with free monadic variables $X_1,\dots,X_r$,
                \item an affine function $\alpha(x_1,,\dots,x_r)$, and
                \item a graph $G$,
            \end{enumerate} 
            \vspace{-0.5em}
            computes the minimum (or maximum) of $\alpha(\lvert X_1\rvert,\dots,\lvert X_r\rvert)$ over all evaluations of $X_1,\dots,X_r$ that satisfy $\varphi$ on $G$, in time $f(\lvert \varphi \rvert, \tw(G)) \cdot n$, where $f$ is a computable function.
        \end{theorem}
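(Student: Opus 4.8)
The plan is to obtain this as the standard ``LinEMSO'' strengthening of Courcelle's theorem, whose decision version states that every MSO-definable graph property can be tested in time $f(\lvert\varphi\rvert,\tw(G))\cdot n$. I would follow the automata-theoretic route rather than the Ehrenfeucht--Fra\"iss\'e one.

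First I would compute a nice tree decomposition of $G$ of width $\tw(G)$ with $\bigoh(\tw(G)\cdot n)$ nodes in time $g(\tw(G))\cdot n$. Next I would eliminate the free monadic variables by encoding them as vertex colours: an assignment of $X_1,\dots,X_r\subseteq V(G)$ is the same as a colouring $c\colon V(G)\to 2^{[r]}$ with $c(v)=\{\,i : v\in X_i\,\}$, using $2^r$ colours. There is an MSO sentence $\psi$ over the vocabulary of vertex-coloured graphs with $2^r$ colours such that a colouring of $G$ satisfies $\psi$ exactly when the corresponding $X_1,\dots,X_r$ satisfy $\varphi$, with $\lvert\psi\rvert$ and the vocabulary size bounded by a function of $\lvert\varphi\rvert$ and $r$. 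Moreover $\lvert X_i\rvert=\sum_{S\ni i}\lvert c^{-1}(S)\rvert$ is linear in the colour-class sizes, so $\alpha(\lvert X_1\rvert,\dots,\lvert X_r\rvert)$ rewrites as an affine function of the $2^r$ class sizes. The task thus becomes: among all colourings of $G$ satisfying $\psi$, optimise this affine function.

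Then I would invoke the core of Courcelle's theorem: for a fixed MSO sentence $\psi$ over a fixed vocabulary and a fixed bound $w$ on bag size, one can compute a finite deterministic bottom-up tree automaton $\mathcal{A}_\psi$ running on the nice tree decomposition — each node annotated with its bag and the restriction of the colouring to that bag — which accepts precisely the annotated decompositions of coloured graphs satisfying $\psi$, and whose number of states is a computable function of $\lvert\psi\rvert$ and $w$. Over this automaton I would run a bottom-up dynamic program computing, for each node $t$ and each state $q$, the optimal value $\mathrm{opt}[t,q]$ of the affine objective restricted to the part of the graph processed below $t$, taken over colourings that drive $\mathcal{A}_\psi$ into state $q$ at $t$. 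Since the objective is affine, it is additive and each vertex is charged to its unique forget node, so the tables combine via the usual $(\min/\max,\,+)$ recurrences at leaf, introduce, forget and join nodes (at a join one adds the two child values and subtracts the shared-bag contribution to avoid double counting). At the root one returns the best value over accepting states, plus the constant term of $\alpha$. The running time is $(\#\text{states})^{\bigoh(1)}\cdot n=f(\lvert\varphi\rvert,\tw(G))\cdot n$.

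I expect the main obstacle to be the construction of the automaton $\mathcal{A}_\psi$ whose state at a node captures exactly the MSO-type of quantifier rank at most $\mathrm{qr}(\psi)$ of the boundaried coloured subgraph below it. This is the technical heart of the meta-theorem, proved by induction on $\psi$ using closure of recognizable tree languages under union, complementation, and projection (for $\vee$, $\neg$, and $\exists X$ respectively), with the base cases supplied by a Feferman--Vaught-style composition lemma for the bag operations of the nice tree decomposition; it is precisely here that the (in general non-elementary) dependence of $f$ on $\lvert\varphi\rvert$ is incurred. The only other delicate point is the bookkeeping that ensures the affine objective counts each element exactly once although vertices occur in many bags, which the charge-to-forget-node convention resolves cleanly.
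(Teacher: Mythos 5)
This theorem is not proved in the paper at all: it is quoted as a known black-box result (the LinEMSO optimization variant of Courcelle's theorem) from Arnborg--Lagergren--Seese and Courcelle--Engelfriet, so there is no in-paper proof to compare against. Your sketch is an essentially correct reconstruction of the standard automata-theoretic proof from those references---encoding the free set variables as a $2^r$-colouring, building the tree automaton on an annotated nice tree decomposition, and optimizing the affine objective by a $(\min/\max,+)$ dynamic program over states (noting that the states, being MSO-types of the boundaried coloured structure, record the bag colouring and thus make the join-node combination consistent)---so it matches the cited literature's approach.
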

        \ifshort
        Note that the runtime depends both on the treewidth and the length of the formula.
        \else
        Note that the runtime of the algorithm depends on the treewidth of the graph and the length of the formula.
        \fi

        \bigparagraph{MSO on temporal graphs.}
        \iflong
        The first application of this framework to temporal graphs was given by Arnborg et al.~\cite{arnborg_Easyproblems_1991}, who encoded time labels as bit strings in an \textit{edge-labeled graph} and employed the classical MSO on static graphs. This was used by Zschoche et al.~\cite{zschoche_ComplexityFinding_2020} to show that separating two vertices $s$ and $z$ is FPT when parameterized by $\tw+\lifetime$.  
    
        In this work we follow a more general approach: A temporal graph is encoded as a relational structure with a universe (usually containing the vertices, static edges, temporal edges, and time steps), unary predicates for each set in the universe, and suitable relations such that the treewidth of the associated \emph{Gaifman graph} is bounded by the chosen parameter $\mathsf{para}$.
        The Gaifman graph of a relational structure $\mathcal{S}=(U, R_1,\dots,R_m)$ is the undirected graph with vertex set $U$ in which two distinct elements $u,v\in U$ are connected by an edge whenever there exists a relation $R_i$ and a tuple $(x,y)$ in $R_i$ with $u,v\in\{x,y\}$.
        We refer to such an encoding format as $MSO_\mathsf{para}$. If the temporal property can be expressed in $MSO_\mathsf{para}$ by a formula of length bounded in $\mathsf{para}$, then Courcelle’s theorem yields the existence of an \FPT algorithm.
        
        This approach was used by Enright et al.~\cite{enright_DeletingEdges_2021} to obtain an FPT algorithm parameterized by $\tw+\tdegree$, and by Haag et al.~\cite{haag_FeedbackEdge_2022} to obtain an FPT algorithm parameterized by $\tw+\lifetime$. An overview of these developments is provided in the survey of temporal treewidth notions by Fomin et al.~\cite[Section 5]{fluschnik_TimeGoes_2020}.
        \else
        A temporal graph is encoded as a relational structure with a universe, unary predicates for each set in the universe, and suitable relations such that the treewidth of the associated \emph{Gaifman graph} is bounded by a chosen parameter~$\mathsf{para}$.
        The Gaifman graph of a relational structure $\mathcal{S}=(U,R_1,\dots,R_m)$ is the undirected graph with vertex set $U$ in which two distinct elements $u,v\in U$ are connected by an edge whenever there exists a relation $R_i$ and a tuple $(x,y)$ in $R_i$ with $u,v\in\{x,y\}$.
        We refer to such an encoding format as $MSO_\mathsf{para}$. If a temporal property can be expressed in $MSO_\mathsf{para}$ by a formula of length bounded in $\mathsf{para}$, then Courcelle’s theorem yields the existence of an \FPT algorithm.
        \fi
        
        \newcommand{\MSOlifetime}{MSO\ensuremath{_{\tw+\lifetime}}\xspace}
        \newcommand{\MSOtdegree}{MSO\ensuremath{_{\tw+\tdegree}}\xspace}
        \iflong
        \subsubsection{MSO Formulation under $\tw+\tdegree$}
        \label{subsec: MSO tw plus degree FPT}
        We first present the MSO language and encoding for \openTCC/\closedTCC on undirected strict temporal graphs, and then  explain the minor adjustments needed for directed or non-strict temporal graphs, without fully restating the definitions and proofs.
        \else
        In the following, we present the MSO encodings and open/closed tcc formulations on undirected strict temporal graphs. For the minor adjustments needed for directed or non-strict graphs we give a short intuition; details can be found in the full version.
        We first consider treewidth together with the maximum temporal degree of the temporal graph.
        \fi
        \begin{definition}
            A relational structure $(U,V,E,\ecal,\inc, \edgeTedge, \psuc)$ in \MSOtdegree has universe $U=V\cup E \cup\ecal$, unary predicates $V(\cdot),E(\cdot),\ecal(\cdot)$ identifying vertices, static edges, and temporal edges, respectively, and binary relations  
            \vspace{-0.5em}
            \begin{itemize}
                \item $\inc\subseteq E\times V$ where $\inc(e,v) \Leftrightarrow v\in e$,
                \item $\edgeTedge\subseteq\ecal\times E$ where $\edgeTedge((e,t),e')\Leftrightarrow e=e'$,
                \item $\psuc\subseteq \ecal\times\ecal$ where $\psuc((e_1,t_1),(e_2,t_2))\Leftrightarrow \big( e_1\cap e_2\neq\emptyset\text{ and } t_1<t_2\big)$.
            \end{itemize}  
            \vspace{-0.5em}
        \end{definition}
        By~\cite[Lemma 5.3]{enright_DeletingEdges_2021}, the treewidth of the Gaifman graph is bounded by $\tw+\tdegree$.
        \begin{lemma}[\cite{enright_DeletingEdges_2021}]
            The treewidth of the Gaifman graph of a structure representing a temporal graph \gcal in \MSOtdegree is bounded by $\tw+\tdegree$.
        \end{lemma}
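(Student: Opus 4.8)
The plan is to build a tree decomposition of the Gaifman graph $\Gamma$ of the structure $(U,V,E,\ecal,\inc,\edgeTedge,\psuc)$ by starting from a tree decomposition of the footprint and enlarging its bags so that they also cover the static‑edge and temporal‑edge elements. Fix a tree decomposition $(T,\{B_u : u\in V(T)\})$ of the footprint $G=(V,E)$ of width $\tw$, so $|B_u|\le \tw+1$ for every $u$. For $v\in V$ let $\mathcal{E}_v:=\{(e,t)\in\ecal : v\in e\}$ be the temporal edges incident to $v$; by definition of the maximum temporal degree, $|\mathcal{E}_v|\le\tdegree$. For each node $u$ set
\[
 B_u^{\ast}:= B_u \;\cup\; \{e\in E : e\subseteq B_u\} \;\cup\; \bigcup_{v\in B_u}\mathcal{E}_v .
\]
I would then claim that $(T,\{B_u^{\ast}\})$ is a tree decomposition of $\Gamma$. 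Its width is at most $(\tw+1)+\binom{\tw+1}{2}+(\tw+1)\tdegree-1$, i.e.\ bounded by a computable function of $\tw+\tdegree$, which is exactly what is needed for the subsequent application of Courcelle's theorem.

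The next step is to verify the three tree‑decomposition axioms. \emph{Coverage of the universe:} every $v\in V$ lies in some $B_u\subseteq B_u^{\ast}$; every static edge $e=\{v,w\}$ lies in $B_u^\ast$ for a node $u$ with $\{v,w\}\subseteq B_u$ (such a $u$ exists by the edge axiom of the original decomposition); every temporal edge $(e,t)$ with $v\in e$ lies in $B_u^\ast$ for any $u$ with $v\in B_u$. \emph{Coverage of the edges of $\Gamma$:} the only adjacencies in $\Gamma$ arise from $\inc$, $\edgeTedge$, $\psuc$ (the predicates $V,E,\ecal$ are unary and contribute nothing), namely (i) $v\sim e$ when $v\in e$, witnessed by a bag containing both endpoints of $e$; (ii) $(e,t)\sim e$ from $\edgeTedge$, witnessed by a bag with $e\subseteq B_u$, which then contains both $e$ and $(e,t)$; and (iii) $(e,t)\sim(e',t')$ when $e\cap e'\neq\emptyset$ and $t\neq t'$, witnessed by any bag containing a vertex $v\in e\cap e'$, which then contains both temporal edges.

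The only delicate axiom is connectivity, and this is where the main care is required. For a vertex $v$, the set of enlarged bags containing $v$ equals the set of original bags containing $v$ (the enlargement never inserts a vertex into a new bag), hence it is a subtree. For a static edge $e=\{v,w\}$, the set of enlarged bags containing $e$ is $\{u:v\in B_u\}\cap\{u:w\in B_u\}$, an intersection of two subtrees and therefore a subtree. For a temporal edge $(e,t)$ with $e=\{v,w\}$, the set of enlarged bags containing it is $\{u:v\in B_u\}\cup\{u:w\in B_u\}$, a \emph{union} of two subtrees; this union is connected precisely because the two subtrees intersect, which holds since $e\in E$ forces $v$ and $w$ to occur together in some bag of the original decomposition. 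Thus the whole argument hinges on using the edge axiom of the footprint's decomposition to glue the two vertex‑subtrees together — that is the one step I expect to be the crux, everything else being routine bookkeeping. The directed and non‑strict variants alter the relations only up to edge orientation and the $<$ versus $\le$ comparison, neither of which changes the undirected Gaifman graph, so the same construction applies verbatim.
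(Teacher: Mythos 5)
You should first note that the paper does not prove this lemma at all: it is imported directly from \cite{enright_DeletingEdges_2021}, and your bag-expansion construction is essentially the standard argument behind that citation, carried out correctly. Starting from a width-\tw{} decomposition of the footprint and adding to each bag the static edges contained in it and the temporal edges incident to its vertices covers the whole universe and every Gaifman adjacency arising from $\inc$, $\edgeTedge$ and $\psuc$, and you put the weight exactly where it belongs in the connectivity axiom: the occurrence set of a temporal edge $(e,t)$ is the union of the two endpoint subtrees, which is connected precisely because the edge axiom for $e$ in the footprint decomposition forces those subtrees to intersect. One caveat is worth stating explicitly: what you actually prove is a width bound of $(\tw+1)+\binom{\tw+1}{2}+(\tw+1)\tdegree-1$, i.e., a computable (roughly quadratic) function of $\tw$ and $\tdegree$, not literally $\tw+\tdegree$. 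This is not a defect of your proof --- a literal additive bound cannot hold for this structure: take the complete footprint $K_r$ with a single label on every edge, all labels distinct; then $\tw+\tdegree=2r-2$, yet the temporal edges induce the line graph of $K_r$ inside the Gaifman graph (any two temporal edges sharing a vertex are $\psuc$-related), and that line graph has treewidth of order $r^2$. So the lemma must be read as ``bounded in terms of the combined parameter $\tw+\tdegree$'', which is what the cited result supplies and all that the application of \Cref{thm: courcelle optimization} in \Cref{thm: fpt tw plus degree} needs; your explicit bound is of exactly that form, so the argument is fine for the intended use.
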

        \iflong
        With this, we are ready to prove our theorem.
        \fi
        \begin{restatable}{theorem}{twtdegFPT} \label{thm: fpt tw plus degree}
            \openTCC and \closedTCC on (un)directed, (non)-strict temporal graphs are in \FPT parameterized by $\tw+\tdegree$; can be solved in time $\mathcal{O}(f(\tw,\tdegree)\cdot n)$ for a computable function $f$.
        \end{restatable}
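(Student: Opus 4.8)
The approach is to express, as a single MSO formula of constant length over the $\MSOtdegree$-structure of $\gcal$ defined above, the predicate ``$X$ is maximal and temporally connected'' (for \openTCC), respectively the same predicate strengthened by the closedness requirement (for \closedTCC), and then to invoke the optimization version of Courcelle's theorem (\Cref{thm: courcelle optimization}) with the affine objective $\alpha(|X|)=|X|$. Since the Gaifman graph of the structure has treewidth at most $\tw+\tdegree$ by the cited lemma of Enright et al., and the formula does not depend on $\gcal$ or on the parameters, \Cref{thm: courcelle optimization} then computes the size of a maximum (closed) tcc in time $f(\tw,\tdegree)\cdot n$, which we compare against $\paraCompSize$.

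\textbf{Step 1: temporal reachability in MSO.}
The only nontrivial ingredient is a formula $\mathrm{reach}(u,v)$ capturing $u\rightreach v$. The key observation is that a \emph{$\psuc$-chain} of temporal edges $a_1,\dots,a_\ell$ (with $\psuc(a_i,a_{i+1})$ for all $i$) is equivalent, as far as reachability is concerned, to a time-respecting walk: choosing a common vertex $w_i$ of the static edges underlying $a_i$ and $a_{i+1}$, the walk $\cdots\to w_{i-1}\to w_i\to\cdots$ traverses these edges in order and is time-respecting because the labels strictly increase along the chain; conversely every temporal $u$-$v$-walk yields such a chain between a temporal edge incident to $u$ and one incident to $v$. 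Since temporal reachability coincides with the existence of a time-respecting walk, $\mathrm{reach}(u,v)$ is expressed as: $u=v$, or there exist $a,a'\in\ecal$ such that $u$ is incident to the static edge underlying $a$, $v$ is incident to the static edge underlying $a'$, and $a'$ is reachable from $a$ in the digraph $(\ecal,\psuc)$ -- the last condition being the textbook MSO transitive-closure formula $\forall Z\subseteq\ecal\,\big((a\in Z\wedge\forall x\forall y((x\in Z\wedge\psuc(x,y))\to y\in Z))\to a'\in Z\big)$. For the non-strict case one replaces $<$ by $\le$ in $\psuc$, so that a chain may dwell within a single snapshot; for directed footprints one uses source/target relations in place of incidence so that chains respect orientation. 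Neither modification affects the Gaifman-treewidth bound.

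\textbf{Step 2: assembling the tcc formulas.}
Temporal connectivity is $\mathrm{conn}(X):=\forall u\forall v((u\in X\wedge v\in X)\to\mathrm{reach}(u,v))$. Maximality is \emph{local}: any subset of a temporally connected set is temporally connected, hence $X$ is a maximal temporally connected set iff $\mathrm{conn}(X)$ holds and no single vertex can be added, i.e.\ $\forall w\,(w\notin X\to\exists u\,(u\in X\wedge\neg(\mathrm{reach}(u,w)\wedge\mathrm{reach}(w,u))))$. This yields the formula $\varphi_{\mathrm{open}}(X)$. For \closedTCC we additionally need reachability that stays inside $X$: call a temporal edge \emph{$X$-internal} if both endpoints of its static edge lie in $X$ (expressible via incidence), and let $\mathrm{reach}_X(u,v)$ be $\mathrm{reach}(u,v)$ with the two witnessing temporal edges required $X$-internal and with the closure step restricted to $X$-internal temporal edges; then $\varphi_{\mathrm{closed}}(X):=\varphi_{\mathrm{open}}(X)\wedge\forall u\forall v((u\in X\wedge v\in X)\to\mathrm{reach}_X(u,v))$. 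Both $\varphi_{\mathrm{open}}$ and $\varphi_{\mathrm{closed}}$ are fixed, constant-size MSO formulas, so \Cref{thm: courcelle optimization} applies and yields running time linear in $|V|+|E|+|\ecal|$ with a multiplicative factor depending only on $\tw$ and $\tdegree$.

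\textbf{Main obstacle.}
The technical heart is Step 1: carefully justifying that $\psuc$-chains correspond exactly to temporal reachability in the undirected setting, and that this correspondence survives the passage to non-strict labels (where ``$\le$'' lets a chain remain within one snapshot) and to directed footprints; and, for \closedTCC, verifying that relativizing the transitive closure to $X$-internal temporal edges really captures reachability in the temporal subgraph induced by $X$. Everything else -- Courcelle's theorem and the bound on the Gaifman-graph treewidth -- is invoked as a black box.
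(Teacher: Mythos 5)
Your proposal is correct and follows essentially the same route as the paper: the same \MSOtdegree relational structure with the cited Gaifman-treewidth bound of Enright et al., constant-length MSO formulas for open/closed temporal connectivity (with the directed and non-strict adjustments handled exactly as in the paper's remark), and the optimization version of Courcelle's theorem. The only differences are cosmetic: you encode reachability via the MSO transitive closure of $\psuc$ where the paper instead quantifies an explicit edge set $P$ with in-/out-degree-one conditions, and you add an explicit (correct but unnecessary) maximality clause that the paper omits since maximizing $\lvert X\rvert$ already yields a maximal component.
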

        \begin{proof}
            We define the optimization variant of \openTCC and \closedTCC as an \MSOtdegree formula. 
            Let $\mathcal{I}=(\gcal,k)$ be an instance of \openTCC or \closedTCC interpreted as a relational structure in \MSOtdegree.
            First we express incidence of a temporal edge with a vertex: 
            \vspace{-0.5em}
            \begin{align*}
                \mathrm{inc_t}(\varepsilon,v) &:= \exists e \in E \big(\edgeTedge(\varepsilon,e)\wedge \inc(e,v)\big).
                \intertext{Next, we express for a temporal edge $\varepsilon$ the existence of one (or no) pre-/successor in a set $P$:}
                \deg_{\text{out}=1}(\varepsilon,P) &:= \exists \varepsilon'\in \ecal (\varepsilon'\in P \wedge \psuc(\varepsilon,\varepsilon')) \;\wedge \\
                    & \hspace{1.6em}\forall \varepsilon_1,\varepsilon_2 \Big( \big(\varepsilon_1,\varepsilon_2\in P \wedge \mathrm{pos\_suc}(\varepsilon,\varepsilon_1)\wedge \mathrm{pos\_suc}(\varepsilon,\varepsilon_2) \big)\to \varepsilon_1=\varepsilon_2\Big), \\
                \deg_{\text{in}=1}(\varepsilon,P) &:= \exists \varepsilon'(\varepsilon'\in P\wedge \mathrm{pos\_suc}(\varepsilon',\varepsilon)) \;\wedge \\
                    &\hspace{1.6em}\forall \varepsilon_1,\varepsilon_2\Big(\big(\varepsilon_1,\varepsilon_2\in P \wedge \mathrm{pos\_suc}(\varepsilon_1,\varepsilon)\wedge \mathrm{pos\_suc}(\varepsilon_2,\varepsilon) \big)\to \varepsilon_1=\varepsilon_2\Big), \\
                \deg_{\text{out}=0}(\varepsilon,P) &:= \neg\exists \varepsilon'(\varepsilon'\in P\wedge \mathrm{pos\_suc}(\varepsilon,\varepsilon')), \\
                \deg_{\text{in}=0}(\varepsilon,P) &:= \neg\exists \varepsilon'(\varepsilon'\in P\wedge \mathrm{pos\_suc}(\varepsilon',\varepsilon)).\end{align*}
                Now, we express open and closed temporal paths with a first and last temporal edge:
                \begin{align*}
                \pathPred(P,\varepsilon_s,\varepsilon_t) &:= (\varepsilon_s\in P \wedge \varepsilon_t\in P)\;\wedge
                    \deg_{\text{in}=0}(\varepsilon_s,P)\wedge \deg_{\text{out}=1}(\varepsilon_s,P)\;\wedge\\ 
                    &\hspace{1.6em}\deg_{\text{in}=1}(\varepsilon_t,P) \wedge \deg_{\text{out}=0}(\varepsilon_t,P)\;\wedge \\
                    &\hspace{1.6em} \forall \varepsilon \in P \Big( \big(\varepsilon\neq \varepsilon_s \wedge \varepsilon\neq \varepsilon_t \big) 
                    \to \deg_{\text{in}=1}(\varepsilon,P)\wedge \deg_{\text{out}=1}(\varepsilon,P)\Big),\\
                \pathPredX(P,\varepsilon_s,\varepsilon_t,X) &:= 
                        \pathPred(P,\varepsilon_s,\varepsilon_t) \;\wedge\; \forall \varepsilon \in P\; \forall v\in V \big( \mathrm{inc_t}(\varepsilon,v) \to v\in X\big).
                \intertext{Using these formulas, we can express temporal reachability between two vertices:}
                \mathrm{reach}(u,v) &:= \exists P\subseteq \ecal \;\exists \varepsilon_s,\varepsilon_t\in \ecal 
                    \big(\pathPred(P,\varepsilon_s,\varepsilon_t)\wedge
                    \mathrm{inc_t}(\varepsilon_s,u)\wedge \mathrm{inc_t}(\varepsilon_t,v)\big).\\
                \mathrm{reach_X}(u,v) &:= \exists P\subseteq \ecal \;\exists \varepsilon_s,\varepsilon_t\in \ecal 
                    \big(\pathPredX(P,\varepsilon_s,\varepsilon_t)\wedge
                    \mathrm{inc_t}(\varepsilon_s,u)\wedge \mathrm{inc_t}(\varepsilon_t,v)\big).
                \intertext{Finally, we express a set $X\subseteq V$  being an open or closed component as before:}
                \varphi_{open}(X)&:= X\subseteq V\wedge \forall u,v\in X \big(\mathrm{reach}(u,v)\wedge \mathrm{reach}(v,u) \big),\\
                \varphi_{closed}(X)&:= X\subseteq V\wedge \forall u,v\in X \big(\mathrm{reach_X}(u,v,X)\wedge \mathrm{reach_X}(v,u,X) \big).
        \end{align*}
            Using the affine goal function $\alpha(x)=x$, \Cref{thm: courcelle optimization} implies that the optimization variant of \openTCC and \closedTCC can be solved in time $f(\tw,\tdegree)\cdot n$ for some computable function $f$.
            \ifshort
            
            These formulations extend naturally to directed and to non-strict temporal graphs:
                for directed graphs, the incidence relations is split into source/target relations; for non-strict graphs, the ordering in the possible successor relation is relaxed from $<$ to $\leq$.
                The adjustments leave the bound on the treewidth $\tw+\tdegree$ unchanged.
            \fi
            \iflong
                \begin{remark}[Adjustments for directed / non-strict variants] We describe the adjustments necessary to formulate the component properties in directed or non-strict temporal graphs. 
                All variant keep the Gaifman treewidth bound of $\tw+\tdegree$ unchanged.

                \bigparagraph{Directed.}
                In the signature, replace $\inc$ with the two binary relations $\source,\target\subseteq E\times V$ where $\source((y,x),v)\Leftrightarrow x=v$ and $\target((y,x),v)\Leftrightarrow y=v$; and replace $\psuc$ with
                \[
                \psuc^{\to}((e_1,t_1),(e_2,t_2))\Leftrightarrow \exists v\in V\,\big(\target(e_1,v)\wedge\source(e_2,v)\big)\wedge t_1<t_2.
                \]
                In the formulas, replace the temporal incidence $\inc_t$ with
                \begin{align*}
                \inc_t^{-}(\varepsilon,u)&:=\exists e\in E\,\big(\edgeTedge(\varepsilon,e)\wedge\source(e,u)\big),\\
                \inc_t^{+}(\varepsilon,v)&:=\exists e\in E\,\big(\edgeTedge(\varepsilon,e)\wedge\target(e,v)\big);
                \intertext{and replace $\mathrm{reach}$ with}
                \mathrm{reach}(u,v)&:=\exists P\subseteq\ecal,\exists\varepsilon_s,\varepsilon_t\in\ecal
                \big(\pathPred(P,\varepsilon_s,\varepsilon_t)\wedge \inc_t^{-}(\varepsilon_s,u)\wedge \inc_t^{+}(\varepsilon_t,v)\big),
                \end{align*}
                and analogously for $\mathrm{reach_X}$.
                
                \smallskip
                \bigparagraph{Non-strict.}  In the signature, replace the use of $t_1<t_2$ in $\psuc$ by $t_1\leq t_2$, yielding $\pnsuc$.  
                The formulas remain unchanged.  
            \end{remark}
            \fi\end{proof}
        Since $\tdegree\leq\tpn$, this implies:
        \begin{corollary} \label{cor: FPT tw + tpn}
            \openTCC and \closedTCC on (un)directed, (non-)strict temporal graphs are in \FPT parameterized by $\tw + \tpn$.
        \end{corollary}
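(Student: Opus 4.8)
The plan is to obtain this as an immediate consequence of \Cref{thm: fpt tw plus degree}, which already establishes fixed-parameter tractability of \openTCC and \closedTCC under the combined parameter $\tw+\tdegree$ on all of the (un)directed and (non-)strict variants. The only additional ingredient is a bound relating the maximum temporal degree to the temporal path number, so the whole argument reduces to that single observation.

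First I would fix a temporal graph $\gcal$ together with a minimum-size collection $\pcal=\{P_1,\dots,P_{\tpn}\}$ of temporal paths that exactly covers the temporal edges of $\gcal$ (witnessing $\tpn$). Since each $P_i$ is a temporal path, its footprint is a (simple) path, so $P_i$ visits any fixed vertex $v$ at most once and therefore contributes at most two temporal edges incident to $v$: the edge of $P_i$ entering $v$ and the edge of $P_i$ leaving $v$. Summing over the $\tpn$ paths in $\pcal$ yields $\delta^t(v)\le 2\,\tpn$ for every $v\in V$, and hence $\tdegree\le 2\,\tpn$.

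With this bound in hand, the corollary follows directly: up to replacing the second coordinate of the parameter by the at-most-twice-as-large value $\tdegree$, a parameterization by $\tw+\tpn$ is a parameterization by $\tw+\tdegree$, and fixed-parameter tractability is preserved under such a constant-factor change in a single parameter. Concretely, the \MSOtdegree-structure encoding $\gcal$ has Gaifman treewidth at most $\tw+\tdegree\le\tw+2\,\tpn$ (by~\cite{enright_DeletingEdges_2021}), so running the Courcelle-based algorithm of \Cref{thm: fpt tw plus degree} on it takes time $\mathcal{O}(f(\tw,\tdegree)\cdot n)\le\mathcal{O}\!\big(f(\tw,2\,\tpn)\cdot n\big)=g(\tw+\tpn)\cdot n$ for a computable function $g$. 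There is no genuine obstacle here: all the substance lies in \Cref{thm: fpt tw plus degree}, and the present step is merely the observation $\tdegree=O(\tpn)$; the one point that merits a moment of care is checking that this degree bound is exactly what keeps the Gaifman treewidth bounded by a function of the new parameter $\tw+\tpn$.
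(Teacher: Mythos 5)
Your proposal is correct and follows essentially the same route as the paper, which derives the corollary from \Cref{thm: fpt tw plus degree} via a bound of the maximum temporal degree in terms of the temporal path number. Your bound $\tdegree\le 2\,\tpn$ is in fact the more careful version (the paper states $\tdegree\le\tpn$, ignoring that a path through a vertex contributes two incident temporal edges), and either constant suffices for fixed-parameter tractability in $\tw+\tpn$.
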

        \iflong
        Furthermore, we observe that monotone path graphs have pathwidth (and thereby treewidth) bounded in \tpn.
        \begin{observation} \label{lem: monotone has bounded tw}
            The footprint of a monotone \kpathgraph{\paraPaths} has treewidth at most $\tpn$.
        \end{observation}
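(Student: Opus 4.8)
The plan is to construct an explicit \emph{path} decomposition of the footprint directly from the $\paraPaths$ monotone paths. Fix the linear order $\prec$ witnessing monotonicity and write $V=\{v_1\prec v_2\prec\cdots\prec v_n\}$; for $v_c\in V$ call $c$ its \emph{index}. Since reversing a temporal path does not change the set of footprint edges it contributes, and since a monotone path either respects or reverses $\prec$, we may assume without loss of generality that each path $P_1,\dots,P_{\paraPaths}$ \emph{respects} $\prec$; write $P_\ell=(u^\ell_1,\dots,u^\ell_{m_\ell})$ with $u^\ell_1\prec\cdots\prec u^\ell_{m_\ell}$. Call a vertex $v_c$ \emph{$P_\ell$-active at index $i$} if $v_c=u^\ell_j$ for some $j<m_\ell$ with $u^\ell_j\prec v_i\preceq u^\ell_{j+1}$; intuitively $v_c$ is the most recently visited vertex of $P_\ell$ that still has an unprocessed successor. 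For $i\in[n]$ I set
\[
  B_i \;:=\; \{v_i\}\;\cup\;\bigl\{\, v_c \;:\; v_c \text{ is } P_\ell\text{-active at } i \text{ for some } \ell\in[\paraPaths]\,\bigr\},
\]
and let the decomposition tree be the path $B_1-B_2-\cdots-B_n$.

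Next I would verify the three tree-decomposition axioms. Every $v_i$ lies in $B_i$. For every footprint edge $\{u^\ell_j,u^\ell_{j+1}\}$, writing $d=\mathrm{idx}(u^\ell_{j+1})$, both endpoints lie in $B_d$: the vertex $u^\ell_{j+1}=v_d$ is the anchor of that bag, and $u^\ell_j$ is $P_\ell$-active at $d$ because $u^\ell_j\prec v_d\preceq u^\ell_{j+1}$. For connectivity, note that for a fixed $\ell$ with $v_c=u^\ell_j$ and $j<m_\ell$, the set of indices at which $v_c$ is $P_\ell$-active is exactly the interval $\{c+1,\dots,\mathrm{idx}(u^\ell_{j+1})\}$. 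Taking the union of these intervals over all $\ell$ for which $v_c$ occurs on $P_\ell$ and is not its last vertex, together with the singleton $\{c\}$ at which $v_c$ is an anchor, yields a single contiguous interval $\{c,\dots,r\}$, because every one of the intervals begins at $c$ or $c+1$ and hence they all overlap. Thus the bags containing any given vertex form a subpath of $B_1-\cdots-B_n$.

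For the width bound I would show that each path $P_\ell$ contributes at most one active vertex to $B_i$: if $v_c=u^\ell_j$ and $v_{c'}=u^\ell_{j'}$ with $c<c'$ were both $P_\ell$-active at $i$, then $\mathrm{idx}(u^\ell_{j+1})\le c'$ (since $u^\ell_{j+1}$ is the immediate $P_\ell$-successor of $u^\ell_j$ and cannot come after $u^\ell_{j'}$), so $i\le\mathrm{idx}(u^\ell_{j+1})\le c'<i$, a contradiction. Hence $|B_i|\le \paraPaths+1$ for every $i$, so the constructed path decomposition has width at most $\paraPaths$. As it is a valid decomposition of the footprint, the pathwidth of the footprint — and therefore its treewidth — is at most $\paraPaths$, which is the number of temporal paths, i.e.\ $\tpn$.

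The main obstacle is the connectivity axiom: a vertex can be simultaneously ``active'' for several of the $\paraPaths$ paths, so one must argue that all of these activity stretches, plus the vertex's own anchor bag, concatenate into a single contiguous block of consecutive bags. This works precisely because every activity interval of $v_c$ starts at index $c+1$, immediately after the anchor bag $B_c$, so the intervals cannot be disjoint. The edge-coverage verification and the $|B_i|\le\paraPaths+1$ counting are then routine, and the reduction to the case where every path respects $\prec$ is immediate since footprints are insensitive to path reversal.
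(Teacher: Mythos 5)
Your proof is correct and is exactly the argument the paper leaves implicit: the observation is stated without proof (with only the remark that monotone path graphs have bounded pathwidth), and your sliding-window path decomposition along $\prec$ — each bag consisting of the anchor vertex plus at most one ``active'' predecessor per path, hence size at most $\paraPaths+1$ — is precisely that bound. One minor caveat: what you actually prove is $\tw\le \paraPaths$ for the given monotone collection of $\paraPaths$ paths (which is what the subsequent corollary needs); identifying $\paraPaths$ with $\tpn$ is only exact when that monotone collection is minimum, a conflation the paper's own statement already makes.
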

        This directly implies \FPT for monotone path graphs by \tpn alone.
        \begin{corollary} \label{cor: FPT tpn monotone}
            \openTCC and \closedTCC on (un)directed, (non-)strict monotone \kpathgraphs{\paraPaths} are in \FPT parameterized by $k$.
        \end{corollary}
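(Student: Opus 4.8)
The plan is to obtain the corollary immediately from \Cref{lem: monotone has bounded tw} and \Cref{cor: FPT tw + tpn}. A \kpathgraph{\paraPaths} satisfies $\tpn\le k$ by definition, and on a \emph{monotone} \kpathgraph{\paraPaths} \Cref{lem: monotone has bounded tw} additionally gives $\tw\le\tpn\le k$, hence $\tw+\tpn\le 2k$. Feeding this into the \FPT algorithm of \Cref{cor: FPT tw + tpn}, whose running time is $f(\tw+\tpn)\cdot n$ for some computable $f$, yields running time $f(2k)\cdot n$, which is fixed-parameter tractable in $k$; since \Cref{cor: FPT tw + tpn} already covers all four variants, so does this. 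So the only substantive ingredient is \Cref{lem: monotone has bounded tw}, which I would prove as follows.

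Fix the linear order $\prec$ witnessing monotonicity and write $V=\{u_1\prec\dots\prec u_n\}$. Since reversing a path does not change the set of footprint edges it contributes, we may assume each $P\in\pcal$ traverses its vertices in increasing $\prec$-order. I would build a \emph{path} decomposition of the footprint $G$ along the path $1\!-\!2\!-\!\dots\!-\!n$, with bags
\[
  B_m:=\{u_m\}\ \cup\ \bigl\{\,u_b\ :\ \{u_a,u_b\}\in E(G),\ a\le m<b\,\bigr\},
\]
so that $B_m$ contains $u_m$ together with the right endpoint of every footprint edge ``crossing'' position $m$. The decomposition axioms are routine to verify: $u_m\in B_m$; any edge $\{u_a,u_b\}$ with $a<b$ satisfies $\{u_a,u_b\}\subseteq B_a$; and for a fixed $u_b$ the set of indices $m$ with $u_b\in B_m$ is the contiguous interval $[a^\ast,b]$, where $a^\ast$ is the least index of a $\prec$-smaller neighbour of $u_b$ (or $a^\ast=b$ if there is none).

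The width bound is where monotonicity is essential. For a single monotone path $P$ traversing $v_1\prec\dots\prec v_\ell$, the footprint edges of $P$ are exactly $\{v_i,v_{i+1}\}$, and the integer intervals $[\mathrm{pos}(v_i),\mathrm{pos}(v_{i+1}))$ partition $[\mathrm{pos}(v_1),\mathrm{pos}(v_\ell))$; hence for every position $m$ at most one edge of $P$ satisfies $a\le m<b$. Aggregating over the $k$ paths of $\pcal$, at most $k$ footprint edges cross any fixed $m$, so $\lvert B_m\rvert\le k+1$ and the decomposition has width at most $k$; in particular $\tw(G)\le k$, which together with $\tpn\le k$ is all that \Cref{cor: FPT tw + tpn} needs. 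I expect this last counting step to be the main obstacle: it breaks down for non-monotone path graphs, where a single path can cross a given position arbitrarily often and the bag sizes are no longer controlled by $k$; everything else (checking the decomposition axioms and invoking \Cref{cor: FPT tw + tpn}) is routine bookkeeping.
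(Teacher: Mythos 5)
Your proposal is correct and follows the same route as the paper: bound the treewidth of the footprint of a monotone \kpathgraph{\paraPaths} by $k$ (the paper states this as an observation without proof, which your explicit path decomposition with bags of crossing edges supplies) and then invoke \Cref{cor: FPT tw + tpn}. The crossing-edge count of at most one per monotone path per position is exactly the right reason the bound holds, so nothing is missing.
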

        \fi
    \ifshort
    We now consider treewidth together with the lifetime of the temporal graph.
    \fi
    \iflong
    \subsubsection{MSO Formulation under $\tw+\lifetime$} \label{subsec: tw plus lifetime FPT}   
        We first present the MSO language and encoding for \openTCC/\closedTCC on undirected strict temporal graphs, and then  explain the minor adjustments needed for directed or non-strict temporal graphs, without fully restating the definitions and proofs.
        \else
        We present the MSO language and encoding on undirected strict temporal graphs. The minor adjustments needed for directed or non-strict graphs can be found in the full version.
        \fi
    \begin{definition}
        A relational structure $(U,V,E,\ecal,T,\inc, \timeTedge, \edgeTedge, \pres)$ in \emph{\MSOlifetime} has universe $U=V\cup E \cup\ecal \cup T$, unary predicates $V(\cdot),E(\cdot),\ecal(\cdot),T(\cdot)$ identifying vertices, static edges, temporal edges, and time steps, respectively, and binary relations  
        \vspace{-0.5em}
        \begin{itemize}
            \item $\inc\subseteq E\times V$ where $\inc(v,e) \Leftrightarrow v\in e$,
            \item $\timeTedge\subseteq \ecal\times T$ where $\timeTedge((e,t),t')\Leftrightarrow t=t'$,
            \item $\edgeTedge\subseteq\ecal\times E$ where $\edgeTedge((e,t),e')\Leftrightarrow e=e'$,
            \item $\pres\subseteq E\times T$ where $\pres(e,t)\Leftrightarrow (e,t)\in\ecal$.
        \end{itemize}  
        \vspace{-0.5em}
    \end{definition}
    By~\cite[Theorem 23]{haag_FeedbackEdge_2022}, the Gaifman graph has treewidth bounded by $\tw+\lifetime$.
    \begin{lemma}[\cite{haag_FeedbackEdge_2022}]
        The treewidth of the Gaifman graph of a structure representing a temporal graph \gcal in \MSOlifetime is bounded by $\tw+\lifetime$.
    \end{lemma}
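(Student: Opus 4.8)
The plan is to exhibit a tree decomposition of the Gaifman graph $\hcal$ of the relational structure of width at most $\tw+\lifetime$, essentially recovering the argument of Haag et al.~\cite{haag_FeedbackEdge_2022}. First I would write out the edges of $\hcal$ explicitly from the four relations: two elements of the universe $U=V\cup E\cup \ecal\cup T$ are adjacent precisely when they are a static edge $e=\{u,v\}$ together with one of its endpoints (from $\inc$), a static edge $e$ together with a time step $t\in\lambda(e)$ (from $\pres$), a temporal edge $(e,t)$ together with its underlying static edge $e$ (from $\edgeTedge$), or a temporal edge $(e,t)$ together with its label $t$ (from $\timeTedge$). Thus $V$ and $T$ are independent sets, $V$ touches only $E$, $T$ touches only $E\cup\ecal$, and every temporal-edge vertex has degree exactly two.

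Next I would start from a tree decomposition $(T_0,\{B_x\})$ of the \emph{footprint} $G=(V,E)$ of width $\tw$; by the edge axiom it contains, for every static edge $e$, a bag holding both endpoints of $e$. I would then modify it in two steps: (i) add all of $T$ to every bag, so each bag has size at most $\tw+1+\lifetime$; and (ii) for each static edge $e=\{u,v\}$ attach a fresh leaf bag $\{u,v,e\}\cup T$ at a bag containing $u$ and $v$, and for each temporal edge $(e,t)$ attach a further leaf bag $\{e,(e,t)\}\cup T$ below the leaf created for $e$.

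Finally I would check the three tree-decomposition axioms. Vertex coverage and edge coverage are routine: every Gaifman edge listed above sits in one of the new leaf bags (here it is used that $t\in T$ is present in every bag). The step I expect to be the main obstacle is the connectivity axiom, specifically for the time-step vertices: each $t\in T$ is Gaifman-adjacent to many static and temporal edges that may be spread throughout $T_0$, so handling them locally would require long connector paths of bags and would destroy the width bound; globalising $T$ in step (i) is exactly what sidesteps this, and paying $\lifetime$ per bag for it is the reason the parameter is $\tw+\lifetime$ and not $\tw$ alone. With $T$ in every bag, each $t$ occupies the whole tree; each original vertex occupies its old subtree together with the leaves hung at it; and each $e$ (resp.\ each $(e,t)$) occupies only the connected pendant subtree attached for it (resp.\ a single leaf). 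Counting bag sizes then gives width at most $\tw+\lifetime$ whenever $\tw\ge 2$ (the only bags of a new shape are the $\{u,v,e\}\cup T$, of size $3+|T|\le 3+\lifetime$); the residual small-treewidth cases change the bound only by an additive constant, for which I would defer to Haag et al.~\cite{haag_FeedbackEdge_2022}.
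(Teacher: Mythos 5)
The paper does not prove this lemma itself — it only cites Theorem~23 of Haag et al.~\cite{haag_FeedbackEdge_2022} — and your reconstruction is exactly the standard argument behind that result: add all of $T$ to every bag of a width-$\tw$ decomposition of the footprint and hang pendant bags for the static and temporal edge elements, and the coverage and connectivity checks go through just as you describe (indeed only the pairs $e$--$u$, $e$--$v$, $e$--$t$, $(e,t)$--$e$, $(e,t)$--$t$ need covering, since two footprint vertices $u,v$ are not even Gaifman-adjacent in this signature). The only blemish is the one you already flag: the pendant bags $\{u,v,e\}\cup T$ have size $3+\lifetime$, so for $\tw\le 1$ your construction yields width $\tw+\lifetime+1$ rather than $\tw+\lifetime$, an additive-constant slack that is immaterial for the \FPT application via Courcelle's theorem.
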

    \iflong
    With this, we are ready to prove our theorem.
    \fi
    \begin{restatable}{theorem}{twlifetimeFPT} \label{thm: fpt tw plus lifetime}
        \openTCC and \closedTCC on (un)directed, (non)-strict temporal graphs are in \FPT parameterized by $\tw+\lifetime$; can be solved in time $\mathcal{O}(f(\tw,\lifetime)\cdot n)$ for a computable function $f$.
    \end{restatable}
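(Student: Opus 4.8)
\end{restatable}

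The plan is to mirror the proof of \Cref{thm: fpt tw plus degree} almost verbatim: I express the optimization variant of \openTCC and \closedTCC as an MSO formula over the \MSOlifetime structure and then invoke \Cref{thm: courcelle optimization} together with the lemma above, which bounds the treewidth of the associated Gaifman graph by $\tw+\lifetime$. With the affine goal function $\alpha(x)=x$ this yields a running time of $f(\lvert\varphi\rvert,\tw+\lifetime)\cdot n$, hence an \FPT algorithm.

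The only substantial difference to the $\tw+\tdegree$ case is that the predicates $\psuc$ and $\mathrm{inc_t}$, which are \emph{primitive} relations of the \MSOtdegree signature, now have to be \emph{defined} inside the \MSOlifetime signature. Temporal incidence is immediate:
\[
    \mathrm{inc_t}(\varepsilon,v) \;:=\; \exists e\in E\,\big(\edgeTedge(\varepsilon,e)\wedge \inc(e,v)\big).
\]
For the possible-successor relation we must compare time steps, so I would use that the \MSOlifetime structure is equipped with the natural linear order $<_{T}$ on the at most $\lifetime$ time steps in $T$ (equivalently, with a successor relation on $T$, whose transitive closure is MSO-definable by a formula of constant length); this leaves the Gaifman treewidth bound of the lemma above unchanged. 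For strict temporal graphs one then sets
\begin{align*}
    \psuc(\varepsilon_1,\varepsilon_2) \;:=\; {}&\exists v\in V\,\big(\mathrm{inc_t}(\varepsilon_1,v)\wedge\mathrm{inc_t}(\varepsilon_2,v)\big)\ \wedge \\
    &\exists t_1,t_2\in T\,\big(\timeTedge(\varepsilon_1,t_1)\wedge\timeTedge(\varepsilon_2,t_2)\wedge t_1<_{T}t_2\big).
\end{align*}
With $\psuc$ and $\mathrm{inc_t}$ available as MSO formulas, every remaining building block from the proof of \Cref{thm: fpt tw plus degree} (the $\deg$-predicates, the path predicates $\pathPred$ and $\pathPredX$, the reachability predicates $\mathrm{reach}$ and $\mathrm{reach_X}$, and the component formulas $\varphi_{open}$ and $\varphi_{closed}$) can be reused verbatim, since those formulas refer only to $\psuc$, $\mathrm{inc_t}$, set membership, and first-order/monadic quantification.

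For the directed variant I would split $\inc$ into source/target relations exactly as in the proof of \Cref{thm: fpt tw plus degree} and strengthen $\psuc$ so that the shared vertex is the target of $\varepsilon_1$ and the source of $\varepsilon_2$; for the non-strict variant I would replace $<_{T}$ by $\le_{T}$ inside $\psuc$. Neither adjustment affects the Gaifman treewidth bound, so the optimization variant is solved in time $\mathcal{O}(f(\tw,\lifetime)\cdot n)$ in all four cases. I do not expect a real obstacle, since the argument is essentially a translation of the $\tw+\tdegree$ proof. The one point that needs care is making the temporal ordering of the temporal edges expressible through the bounded set $T$ of time steps (and an order on it) while keeping the Gaifman treewidth at $\tw+\lifetime$; once that is settled, everything else is mechanical.
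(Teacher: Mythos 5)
Your proposal is correct, but it takes a genuinely different route from the paper. The paper keeps the \MSOlifetime signature exactly as in the quoted lemma (so the Gaifman-treewidth bound of Haag et al.\ applies verbatim, with no order on $T$) and instead pays in formula length: it defines snapshot adjacency $\tadj(v,w,t)$ and writes the path predicate by unrolling a temporal path time step by time step, i.e.\ $\pathPred(u,v)$ existentially quantifies $\lifetime+1$ vertex variables $x_0,\dots,x_{\lifetime}$ and requires for each $t$ that $x_t=x_{t+1}$ or $\tadj(x_t,x_{t+1},t)$; the resulting formula has length bounded by a function of $\lifetime$, which is all \Cref{thm: courcelle optimization} needs, and the non-strict case is handled by replacing $\tadj$ with an MSO-definable ``path-inside-snapshot'' predicate rather than by relaxing an order. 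You instead keep the constant-length temporal-edge-set encoding of \Cref{thm: fpt tw plus degree} and make $\psuc$ definable by enriching the structure with an order (or successor) on $T$. This works, but since you modify the structure you can no longer invoke the lemma as stated; you owe the short argument you only gesture at: because $\lvert T\rvert\le\lifetime$, adding all comparability edges inside $T$ (or just a successor path, with MSO-definable transitive closure) increases the Gaifman treewidth by at most $\lifetime$ (e.g.\ add $T$ to every bag), so it remains bounded by a function of $\tw+\lifetime$ and the conclusion stands. In short, your version buys constant formula length and uniformity with the $\tw+\tdegree$ proof at the price of a modified encoding and an extra (easy) treewidth argument, while the paper buys an unchanged encoding and lemma at the price of a $\lifetime$-dependent formula; both routes give the claimed $\bigoh(f(\tw,\lifetime)\cdot n)$ running time.
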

    
    \begin{proof}
        We define the optimization variant of \openTCC and \closedTCC as an \MSOlifetime formula. 
        Let $\mathcal{I}=(\gcal,k)$ be an instance of \openTCC or \closedTCC interpreted as a relational structure in \MSOlifetime.
        First, we express adjacency of two vertices $v$ and $w$ at time step $t$:
        \begin{align*}
            \tadj(v,w,t)&:=\exists e\in E \big( \inc(e,v) \wedge \inc(e,w) \wedge \pres(e,t)\big).\\
        \intertext{Next, we express open and closed temporal paths between two vertices:}
            \pathPred(u,v)&:=\exists x_0,\dots,x_{\lifetime}\in V
                \Big(x_0=u \wedge x_{\lifetime}=v \wedge
                \bigwedge_{t=0}^{\lifetime-1}
                \big( x_t=x_{t+1} \vee \tadj(x_t,x_{t+1},t)\big)
                \Big),\\
            \pathPredX(u,v,X)&:=\exists x_0,\dots,x_{\lifetime}\in X
                \Big(x_0=u \wedge x_{\lifetime}=v \wedge
                \bigwedge_{t=0}^{\lifetime-1} \big(x_t=x_{t+1} \vee \tadj(x_t,x_{t+1},t)\big)
                \Big).\\
        \intertext{The formula $\pathPred(u,v)$ checks whether there is a strict temporal path from $u$ to $v$ in $G$, while $\pathPredX(u,v,X)$ additionally restricts the path to visit only $X$. Both formulas have length upper-bounded by $2^{\mathcal{O}(\lifetime)}$.
        Finally, we express a set $X\subseteq V$  being an open/closed tcc:}
            \varphi_{open}(X)&:= X\subseteq V\wedge \forall u,v\in X 
            \big(\pathPred(u,v)\wedge \pathPred(v,u)
            \big),\\
        \varphi_{closed}(X)&:= X\subseteq V\wedge \forall u,v\in X 
            \big(\pathPredX(u,v,X)\wedge \pathPredX(v,u,X)
            \big).
        \end{align*}
        Using the affine goal function $\alpha(x)=x$, \Cref{thm: courcelle optimization} implies that the optimization variant of \openTCC and \closedTCC can be solved in time $f(\tw,\lifetime)\cdot n$ for some computable function $f$.
        \ifshort
        
        These formulations extend naturally to directed and to non-strict temporal graphs: 
        For directed graphs the incidence relation is split into source/target relations; for non-strict graphs the temporal adjacency formula is replaced by an MSO-definable “path-inside-snapshot” formula. The adjustments leave the bound on the treewidth $\tw+\lifetime$ unchanged.
        \fi
        \iflong
            \begin{remark}[Adjustments for directed / non-strict variants]
            We describe the adjustments necessary to formulate the component properties in directed or non-strict temporal graphs.
            All variants below keep the Gaifman treewidth bound $\tw+\lifetime$ unchanged.
            
            \bigparagraph{Directed.}
            In the signature, replace $\inc$ with the two binary relations $\source,\target\subseteq E\times V$ where $\source((y,x),v)\Leftrightarrow x=v$ and $\target((y,x),v)\Leftrightarrow y=v$.
            In the formulas, replace the temporal adjacency $\tadj$ with
            \[
            \tadj^{\to}(u,v,t)\;:=\;\exists e\in E\;\big(\source(e,u)\wedge \target(e,v)\wedge \pres(e,t)\big).
            \]
            
            \smallskip
            \bigparagraph{Non-strict.}
            We allow multiple hops within a single snapshot by replacing $\tadj$ (two vertices are connected by an edge at time $t$) with an
            MSO-definable ``path-inside-snapshot'' predicate (two vertices are connected by a path at time $t$). Define the formula
            \begin{align*}
            \pathPred_t(u,v):=&\exists P\subseteq E\,\exists X\subseteq V\,\Big(
            u,v\in X\ \wedge\\ &\forall e\in P\,\big(\pres(e,t)\wedge\forall w\in V\,(\inc(e,w) \to w\in X)\big)\,\wedge\\
            &\deg_{=1}(u)\ \wedge\ \deg_{=1}(v)\ \wedge\ \forall w\in X\setminus\{u,v\}\,\big(\deg_{=2}(w)\big)\Big),
            \end{align*}
            where the degree formulas are:
            \begin{align*}
                \deg_{=0}(w):=&\neg\exists e\in P\,\big(\inc(e,w)\big),\\
                \deg_{=1}(w):=&\exists e\in P\,\big(\inc(e,w)\big)\ \wedge\ \forall e_1,e_2\in P\,
                \big(\inc(e_1,w)\wedge\inc(e_2,w)\to e_1=e_2\big),\\
                \deg_{=2}(w):=
                &\exists e_1,e_2\in P\, \big(e_1\neq e_2 \wedge \inc(e_1,w)\wedge\inc(e_2,w)\big)\ \wedge\ \\
                & \forall e_1,e_2,e_3\in P\,\Big(\bigwedge_{i\in[3]} \inc(e_i,w)\to \big(e_1=e_2\vee e_1=e_3\vee e_2=e_3\big)\Big).
            \end{align*}
            Accordingly, define $\pathPred_{t,X}$.
            \end{remark}
        \fi 
    \end{proof}

\subsection{Kernelization  Lower Bounds}
\label{sec:kernel}
    Since \openTCC and \closedTCC are both in \FPT parameterized by $\tw+\tpn$, $\tw+\tdegree$ and $\tw+\lifetime$, it is natural to ask whether they admit a polynomial kernel. We show that this is not the case. The proof is a rather standard, straightforward proof based on the framework of \emph{cross-composition} introduced by Bodlaender, Jansen and Kratsch \cite{bodlaender_KernelizationLower_2014}, (see also the book by Fomin et al.~\cite{Fomin_Lokshtanov_Saurabh_Zehavi_2019}).
    \iflong \begin{definition}[Polynomial equivalence relation~\cite{Fomin_Lokshtanov_Saurabh_Zehavi_2019}] 
    An equivalence relation $R$ on the set $\Sigma^*$ is called a \emph{polynomial equivalence relation} if the following conditions are satisfied:
    \begin{enumerate}[label=(\roman*)]
        \item There exists an algorithm that, given strings $x, y \in \Sigma^*$, resolves whether $x$ is equivalent to $y$ in time polynomial in $|x| + |y|$.
        \item For any finite set $S \subseteq \Sigma^*$ the equivalence relation $R$ partitions the elements of $S$ into at most $(\max_{x\in S} |x|)^{\bigoh(1)}$ classes. 
    \end{enumerate}
    \end{definition}
    \begin{definition}[OR-cross-composition~\cite{Fomin_Lokshtanov_Saurabh_Zehavi_2019}]\label{def:cross_composition}
        Let $L \subseteq \Sigma^*$ be a language and $Q \subseteq \Sigma^* \times \mathbb{N}$ be a parameterized language. 
        We say that $L$ \emph{cross-composes} into $Q$ if there exists a polynomial equivalence relation $R$ and an algorithm $\mathbb{A}$, called a \emph{cross-composition}, satisfying the following conditions. 
        The algorithm $\mathbb{A}$ takes as input a sequence of strings $x_1, x_2,\ldots, x_t \in \Sigma^*$ that are equivalent with respect to $R$, runs in time polynomial in $\sum_{i=1}^t |x_i|$, and outputs one instance $(y, k) \in \Sigma^* \times N$ such that 
        \begin{enumerate}[label=(\roman*)]
            \item $k\le p(\max^t_{i=1}|x_i|+\log t)$ for some polynomial $p(\cdot)$, and 
            \item $(y,t)\in Q$ if and only if there exists at least one index $i\in [t]$ such that $x_i\in L$.
        \end{enumerate}
    \end{definition}
    \begin{theorem}[\cite{Fomin_Lokshtanov_Saurabh_Zehavi_2019}]\label{thm:cross_composition_kernel_lowerbound}
    Let $L\subseteq \Sigma^*$ be an \NP-hard language. If $L$ cross-composes into parameterized problem $Q$ and $Q$ has a polynomial kernel, then $\coNP\subseteq \NP/poly$    
    \end{theorem}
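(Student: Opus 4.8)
\emph{Proof idea.} The plan is to combine the cross-composition with the polynomial kernelization to build an \emph{OR-distillation} of $L$, and then invoke the well-known distillation lower bound of Fortnow and Santhanam (see also~\cite{Fomin_Lokshtanov_Saurabh_Zehavi_2019}): no \NP-hard language admits an OR-distillation into any language unless $\coNP\subseteq \NP/poly$. So the whole proof reduces to assembling the hypotheses into such a distillation, i.e.\ a polynomial-time map sending any tuple of strings $x_1,\dots,x_t$ to a single string $y$ with $|y|$ bounded by a polynomial in $\max_i|x_i|$ \emph{alone} such that $y$ lies in some fixed language iff at least one $x_i\in L$.

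Concretely, I would fix a polynomial kernelization $K$ for $Q$ with polynomial size bound $h$, and the cross-composition $\mathbb{A}$ of $L$ into $Q$, with its polynomial equivalence relation $R$ and parameter bound $k\le p(\max_i|x_i|+\log t)$. Given strings $x_1,\dots,x_t$ of length at most $s:=\max_i|x_i|$, the first step is to delete duplicates: this changes neither $\bigvee_i[x_i\in L]$ nor $s$, and afterwards there are at most $|\Sigma|^{s+1}$ distinct strings, so $\log t=\bigoh(s)$. Next I partition the remaining strings into $R$-equivalence classes; by the definition of a polynomial equivalence relation there are at most $r\le s^{\bigoh(1)}$ of them. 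Each class is a valid ($R$-equivalent) input to $\mathbb{A}$, so the composed algorithm $K\circ\mathbb{A}$ runs in polynomial time and outputs an instance of $Q$ of size at most $h(p(s+\log t))=h(p(\bigoh(s)))=s^{\bigoh(1)}$ that is a yes-instance iff some string in that class lies in $L$. Doing this for all classes yields instances $(z_1,k_1),\dots,(z_r,k_r)$ of $Q$, each of size $s^{\bigoh(1)}$, with $\bigvee_i[x_i\in L]=\bigvee_j[(z_j,k_j)\in Q]$.

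It remains to package this into a single instance of size polynomial in $s$ independently of $t$. I would concatenate $\langle(z_1,k_1),\dots,(z_r,k_r)\rangle$ into one string $y$ of length $r\cdot s^{\bigoh(1)}=s^{\bigoh(1)}$ and define the auxiliary language $Q^{\vee}:=\{\langle(w_1,\ell_1),\dots,(w_m,\ell_m)\rangle \colon (w_j,\ell_j)\in Q\text{ for some }j\}$ (which lies in $\NP$ since $Q$ does). Then $y$ is computable from $x_1,\dots,x_t$ in time polynomial in $\sum_i|x_i|$, $|y|$ is polynomial in $s$, and $y\in Q^{\vee}$ iff $x_i\in L$ for some $i$; this is precisely an OR-distillation of the \NP-hard language $L$. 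By the distillation lower bound this forces $\coNP\subseteq \NP/poly$, as claimed.

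The one delicate point I expect to be the main obstacle is keeping the final instance size free of any dependence on $t$: the cross-composition only bounds the parameter by $p(\cdot+\log t)$, so naive chaining leaves a $\log t$ term, and a distillation is required to be $t$-independent. This is absorbed by the duplicate-removal/pigeonhole step (capping $\log t$ at $\bigoh(s)$) together with the polynomial bound on the number of $R$-classes; without both of these the argument collapses. The genuinely complexity-theoretic ingredient --- that an \NP-hard language with an OR-distillation implies $\coNP\subseteq \NP/poly$ --- is the part I would cite rather than reprove, as it is exactly the statement of the Fortnow--Santhanam theorem.
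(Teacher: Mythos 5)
Your proposal is correct and follows essentially the standard argument for this result, which the paper itself does not reprove but cites from the kernelization literature~\cite{Fomin_Lokshtanov_Saurabh_Zehavi_2019,bodlaender_KernelizationLower_2014}: compose the cross-composition with the polynomial kernel classwise to obtain an OR-distillation of the \NP-hard language $L$, then invoke the Fortnow--Santhanam lower bound. Your duplicate-removal step to cap $\log t$ at $\bigoh(\max_i|x_i|)$ is a legitimate (and standard) way to remove the $t$-dependence, and the remark that the target OR-language lies in \NP is unnecessary since the distillation lower bound holds for an arbitrary target language.
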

    
    With this, we show that unless $\coNP\subseteq \NP/poly$, none of the studied problems admit a polynomial kernel parameterized by $\tw+\tdegree+\lifetime$.
    \fi
    The main observation is that each of $\tw$, $\tdegree$, and $\lifetime$ of a disjoint union of temporal graphs is bounded by their respective maximum in a single temporal graph in the union. \ifshort
    Moreover, $\tw$ and $\tdegree$ stay bounded, even if we connect the graphs in a chain.
    \fi
    \begin{theorem}\label{thm:no_kernel_lifetime}
        \openTCC and \closedTCC on (un)directed, (non)strict graphs do not admit a polynomial kernel parameterized by $\tw+\tdegree+\lifetime$, unless $\coNP\subseteq \NP/poly$.
    \end{theorem}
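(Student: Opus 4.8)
The plan is a standard OR-cross-composition invoking \Cref{thm:cross_composition_kernel_lowerbound}. Take $L$ to be the unparameterized decision version of \openTCC (resp.\ \closedTCC), which is \NP-hard, e.g.\ by \Cref{thm: tw is paraNP} or the classical reduction of~\cite{bhadra_ComplexityConnected_2003}, and cross-compose it into the parameterization $Q$ by $\tw+\tdegree+\lifetime$. As the polynomial equivalence relation $R$, put all malformed strings into one class and declare two well-formed instances $(\gcal,s)$ and $(\gcal',s')$ equivalent iff $s=s'$; over any finite set $S$ this yields at most $\max_{x\in S}|x|+2$ classes, as required.

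Given $R$-equivalent well-formed instances $(\gcal_1,s),\dots,(\gcal_t,s)$, the composition first relabels the time stamps of each $\gcal_i$ by an order-preserving bijection onto $\{1,\dots,\lifetime(\gcal_i)\}$, which preserves all reachabilities and hence membership in $L$, and then outputs $(\gcal^\ast,s)$ where $\gcal^\ast:=\gcal_1\uplus\dots\uplus\gcal_t$ is the disjoint union (its footprint being the disjoint union of the footprints). This runs in time polynomial in $\sum_i|x_i|$. For the parameter we use that $\tw(\gcal^\ast)=\max_i\tw(\gcal_i)$ (connect optimal tree decompositions of the components into one tree by adding edges between arbitrary nodes; the connectivity condition survives since no vertex is shared), $\tdegree(\gcal^\ast)=\max_i\tdegree(\gcal_i)$ (temporal degrees are unchanged), and $\lifetime(\gcal^\ast)=\max_i\lifetime(\gcal_i)$ (by the relabeling). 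Hence the parameter of $\gcal^\ast$ is at most $3\max_i|x_i|$, satisfying condition~(i) of \Cref{def:cross_composition} with $p(z)=3z$.

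For condition~(ii) the key structural claim is that the maximal (open, resp.\ closed) temporal connected components of $\gcal^\ast$ are exactly those of the individual $\gcal_i$. Every temporally connected set lies inside a single connected component of the footprint, hence inside one $V(\gcal_i)$; and reachability between two vertices of $\gcal_i$, as well as reachability confined to any subset of $V(\gcal_i)$, is identical whether computed in $\gcal_i$ or in $\gcal^\ast$. Therefore a set $X\subseteq V(\gcal_i)$ is a maximal (closed) tcc of $\gcal_i$ iff it is one of $\gcal^\ast$, so $\gcal^\ast$ has a maximal (closed) tcc of size exactly $s$ iff some $\gcal_i$ does, i.e.\ iff some $x_i\in L$. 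By \Cref{thm:cross_composition_kernel_lowerbound}, a polynomial kernel for $Q$ would give $\coNP\subseteq\NP/\poly$. The argument is insensitive to edge directions and to strictness, since disjoint unions preserve both and the structural claim uses neither, so it applies to all four variants.

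The one point needing care is the ``maximal and of size exactly $\paraCompSize$'' semantics of the problem: one must ensure the disjoint union does not manufacture a maximal component of size $s$ from pieces none of which has such a maximal component. This is exactly what the structural claim rules out — every maximal component of $\gcal^\ast$ equals a maximal component of a single $\gcal_i$ — which is why it suffices for $R$ to fix $s$ across the batch rather than having to reason about maximum component sizes. I expect this verification to be the only slightly delicate step; everything else is bookkeeping on disjoint unions.
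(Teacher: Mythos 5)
Your proposal is correct and follows essentially the same route as the paper: an OR-cross-composition realized by taking the disjoint union of the input instances, together with the observations that every (open or closed) tcc of the union lies entirely inside a single summand and that $\tw$, $\tdegree$ and $\lifetime$ of the union are bounded by the size of a single input instance. The only cosmetic difference is that the paper composes from the \NP-hard lifetime-$2$ restriction of the problem (so the output has $\lifetime=2$) and fixes $n$, $|\ecal|$ and $s$ in its equivalence relation, whereas you compose from the general problem and compress time labels to keep $\lifetime\le\max_i|x_i|$; both satisfy the parameter bound required by the cross-composition framework.
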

    \iflong
    \begin{proof}
         Note that all versions of \openTCC and \closedTCC are \NP-hard even on instances with lifetime $\lifetime =2$~\cite{costa_ComputingLarge_2023}. We give a simple, very standard cross-composition from the problem on instances with lifetime $2$ to itself, parameterized by $\tw+\tdegree+\lifetime$.
         Let $R$ be an equivalence relation on the instances such that  $(\gcal_1=(V_1,E_1,\lambda_1), \paraCompSize_1)$ and $(\gcal_2=(V_2,E_2,\lambda_2), \paraCompSize_2)$, both with lifetime~$2$ are equivalent according to $R$ if and only if
         \begin{itemize}
             \item $|V_1| = |V_2|$, 
             \item $|\ecal_1| = |\ecal_2|$, and 
             \item $\paraCompSize_1 = \paraCompSize_2$.
         \end{itemize}  
         All instances with lifetime other than $2$ or strings that do not form a valid instance of the problem form another equivalence class. 
         It is easy to see that $R$ is a polynomial equivalence relation. 
         
         Now we give a cross-composition for instances belonging to the same equivalence class. For the equivalence class containing the invalid or non-lifetime-2 instances, we output a trivial no-instance. Let $(\gcal_1=(V_1,E_1,\lambda_1), \paraCompSize), (\gcal_2=(V_2,E_2,\lambda_2), \paraCompSize),\ldots, (\gcal_t=(V_t,E_t,\lambda_t), \paraCompSize)$ be valid instances of the same equivalence class such that $|V_i| = n$ for all $i\in [t]$. We simply output the instance $(\gcal = (V,E,\lambda),s)$, where $\gcal$ is the disjoint union of the temporal graphs $\gcal_1,\ldots, \gcal_t$. That is, $V = \biguplus_{i\in [t]}V_i$, $E = \biguplus_{i\in [t]}E_i$ and for all $e\in E$, if $e\in E_i$, then $\lambda(e) = \lambda_i(e)$. First, note that treewidth of the footprint of $\gcal$ is at most $n$, the maximum temporal degree is at most $2n-2$ and lifetime is $2$. Hence, $\tw+\tdegree+\lifetime$ is bounded by $4n$. It remains to show that $(\gcal,s)$ is yes-instance if and only if there exists $i\in [t]$ such that $(\gcal_i, s)$ is yes-instance. Since, there is no (temporal) path between two vertices in different connected components of the footprint of $\gcal$, it is easy to see that if $X$ is an \otcc or \ctcc, then $X$ is fully contained in the vertices of exactly one of the original instances. That is, there exists $i\in [t]$ such that $X\subseteq V_i$. Hence if there exists $X\subseteq V$ such that $|X|\ge s$ and $X$ is an \otcc or a \ctcc in $\gcal$, then there exists $i\in [t]$ such that $X\subseteq V_i$ and $X$ is an \otcc or a \ctcc in $\gcal_i$. Similarly, if there exists $i\in [t]$ such that $\gcal_i$ contains an \otcc or a \ctcc of size $s$, then clearly $\gcal$ contains an \otcc or a \ctcc of size $s$, as $\gcal_i$ is fully contained in $\gcal$. Hence, $(\gcal, s)$ is yes-instance if and only if there is at least one $i\in [t]$ such that $(\gcal_i,s)$ is yes-instance.
         In conclusion, all versions of the problem cross-compose to themselves parameterized by $\tw+\tdegree+\lifetime$ and by Theorem~\ref{thm:cross_composition_kernel_lowerbound} do not admit a polynomial kernel unless $\coNP\subseteq\NP/poly$.
    \end{proof}

    Unfortunately, if we wish to bound $\tpn$ of the instance, we cannot take a disjoint union of the instances. However, we can chain them so we connect the ends of temporal paths in the $i$-th instance to the starts of temporal paths in the $(i+1)$-st instance. This still keeps \tw and \tdegree bounded by the size of a single instance; however, $\lifetime$ now becomes unbounded, since we do not wish to introduce new connectivity. This is indeed unavoidable, as a (strict) $k$-path graph with lifetime $\lifetime$ and temporal path number $\tpn$ has at most $k\cdot\lifetime$ temporal edges and hence a kernel of size $k\cdot\lifetime$.
    \fi
    \begin{theorem}\label{thm:no_kernel_tpn}
        \openTCC and \closedTCC on (un)directed, (non)strict graphs do not admit a polynomial kernel parameterized by $\tw+\tdegree+\tpn$, unless $\coNP\subseteq \NP/poly$.
    \end{theorem}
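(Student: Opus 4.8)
The plan is to obtain this via an OR-cross-composition (in the sense of \Cref{def:cross_composition}) from \closedTCC (resp.\ \openTCC) on general temporal graphs --- which is \NP-hard --- into the same problem parameterized by $\tw+\tdegree+\tpn$; \Cref{thm:cross_composition_kernel_lowerbound} then yields the claim. As the polynomial equivalence relation I would take the one that puts two well-formed inputs $(\gcal,\paraCompSize)$ and $(\gcal',\paraCompSize')$ in the same class iff $(\lvert V(\gcal)\rvert,\lvert\ecal(\gcal)\rvert,\paraCompSize)=(\lvert V(\gcal')\rvert,\lvert\ecal(\gcal')\rvert,\paraCompSize')$, with all malformed strings forming one extra class that the composition maps to a fixed no-instance. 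The guiding idea is the one indicated above the theorem: whereas \Cref{thm:no_kernel_lifetime} uses a disjoint union (keeping $\lifetime$ bounded but letting the number of paths grow with the number of instances), here I glue the instances \emph{in series along their temporal paths}, so the number of paths stays bounded at the cost of an unbounded lifetime.

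For the composition, let $(\gcal_1,\paraCompSize),\dots,(\gcal_t,\paraCompSize)$ be equivalent inputs with $\lvert V(\gcal_j)\rvert=n$ and $\lvert\ecal(\gcal_j)\rvert=m$ for all $j$. First relabel the temporal edges of each $\gcal_j$ so that they all lie inside a private window $(A_j,B_j)$ chosen with $B_1<A_2<B_2<\dots<A_t$. Next view each $\gcal_j$ as the union of its $m$ single-edge temporal paths $e_1^{(j)},\dots,e_m^{(j)}$ (after fixing an orientation of its edges in the undirected case), and for each $i\in[m]$ form one global path $P_i$ by concatenating $e_i^{(1)},e_i^{(2)},\dots,e_i^{(t)}$, inserting between the head of $e_i^{(j)}$ and the tail of $e_i^{(j+1)}$ a \emph{bridge} --- two fresh vertices lying on no other $P_{i'}$ --- whose labels sit in the gap $(B_j,A_{j+1})$ and strictly increase along $P_i$. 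The composition outputs $(\gcal,\paraCompSize)$ with $\gcal=\bigcup_{i\in[m]}P_i$, which is an $m$-path graph, so already $\tpn(\gcal)\le m$.

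I would then verify that the whole parameter is $O(n+m)$, hence bounded by a polynomial in $\max_j\lvert x_j\rvert+\log t$ as \Cref{def:cross_composition} requires. For the temporal degree: a vertex of $\gcal_j$ acquires at most one bridge-edge for each incident temporal edge of $\gcal_j$ (once as a head, once as a tail), and each bridge vertex has temporal degree $2$, so $\tdegree(\gcal)\le 3\max_j\tdegree(\gcal_j)+2=O(n)$. For treewidth: writing $G_j$ for the footprint of $\gcal_j$, use the tree decomposition whose spine is the path of bags $V(G_1),\,V(G_1)\cup V(G_2),\,V(G_2),\,V(G_2)\cup V(G_3),\,V(G_3),\dots$, each of size at most $2n$, and hang off every spine bag $V(G_j)\cup V(G_{j+1})$ a constant-width chain of tiny bags covering the $m$ bridges between $G_j$ and $G_{j+1}$; this is valid because the two endpoints of each such bridge lie in $V(G_j)\cup V(G_{j+1})$ while its two internal vertices occur in no other bag, and its width is at most $2n-1$, so $\tw(\gcal)\le 2n-1$.

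Finally I would prove $(\gcal,\paraCompSize)$ is a yes-instance iff some $(\gcal_j,\paraCompSize)$ is. The one structural fact needed is that $\gcal$ creates \emph{no reachability running backward across windows}: since every label of $\gcal_j$ is below $A_{j+1}$ and a bridge between windows $j$ and $j+1$ can be traversed only in the direction $\gcal_j\to\gcal_{j+1}$ (its increasing labels force this, and a trip entering it from the $\gcal_{j+1}$ side dead-ends at the second bridge vertex), any temporal trip that ever reaches $V(\gcal_{j+1})$ never returns to $V(\gcal_j)$. Hence the compatibility relation restricted to each $V(\gcal_j)$ coincides with that of $\gcal_j$, no two vertices from different windows are compatible, and every bridge vertex is incompatible with all other vertices (for \closedTCC this is \Cref{lem:no-bridges-in-ctcc}; for \openTCC it follows since such a vertex reaches only forward and is reached only backward). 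Thus every open (resp.\ closed) tcc of $\gcal$ of size at least two lies inside a single $V(\gcal_j)$ and is an open (resp.\ closed) tcc of $\gcal_j$ of the same size, while conversely every such tcc of a $\gcal_j$ remains maximal in $\gcal$; so $\gcal$ has a tcc of size at least $\paraCompSize$ iff some $\gcal_j$ does. Invoking \Cref{thm:cross_composition_kernel_lowerbound} gives the theorem, and the directed and non-strict cases are identical (orient the bridge edges forward; the labeling stays proper, so strict and non-strict reachability agree). The only non-routine point is precisely this last one: because serial gluing genuinely adds edges (unlike a disjoint union), one must lean on the increasing windows together with the one-way behaviour of the bridges to rule out new connectivity, and on the fact that each window's footprint fits into a single bag to keep the treewidth from growing with $t$.
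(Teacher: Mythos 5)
Your overall strategy is the same as the paper's: an OR-cross-composition of the problem into itself in which the $t$ instances are chained \emph{in series} along $m$ temporal paths (one path per temporal edge of an instance, connected across consecutive instances by fresh degree-restricted connector vertices whose labels live in the gaps between the per-instance time windows), giving $\tpn\le m$, $\tdegree=O(n+m)$, and $\tw=O(n)$ via a path decomposition whose bags cover consecutive pairs $V(\gcal_j)\cup V(\gcal_{j+1})$, with correctness resting on the fact that labels only increase across windows and the connectors can only be traversed forward. The paper uses a single connector vertex $w_j^i$ per path and gap rather than your two-vertex bridge, and it composes from lifetime-$2$ instances rather than arbitrary ones re-labelled into private windows; both differences are immaterial.

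There is, however, a concrete gap in your correctness argument for the \emph{undirected} case. Your claim that a bridge vertex ``reaches only forward and is reached only backward'' and is therefore compatible with nothing is false when the edges are undirected: a single undirected temporal edge $(\{h,\beta_1\},t_1)$ makes $h$ and $\beta_1$ mutually reachable (each traverses that one edge at time $t_1$), so $\{h,\beta_1\}$, $\{\beta_1,\beta_2\}$ and $\{\beta_2,u\}$ are temporally connected, and one checks that e.g.\ $\{h,\beta_1\}$ is in fact a maximal set and internally connected, hence a size-two \otcc and \ctcc of $\gcal$ that straddles a window boundary. Consequently your key structural statement ``every open (resp.\ closed) tcc of $\gcal$ of size at least two lies inside a single $V(\gcal_j)$'' does not hold, and for target sizes $\paraCompSize\le 2$ the composed instance can be a yes-instance independently of the inputs. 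The repair is exactly what the paper does: dispose of the case $\paraCompSize\le 2$ by solving each input instance in polynomial time (outputting a trivial yes-/no-instance), assume $\paraCompSize\ge 3$, and then prove the weaker but sufficient fact that any component containing a connector/bridge vertex has size at most two (in the directed case at most one), so that every tcc of size at least three is confined to a single $V(\gcal_j)$; note also that for \closedTCC you cannot simply cite \Cref{lem:no-bridges-in-ctcc} here, since its ``no nontrivial \ctcc'' conclusion likewise relies on the forward-only traversal and needs the same size-two caveat once edges are undirected. With that adjustment, the rest of your argument (unchanged within-window compatibilities, no backward reachability between windows, witness paths for closedness staying inside the window, and the parameter bounds) goes through as in the paper.
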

    \iflong
    \begin{proof}
        The proof is very similar to the proof of Theorem~\ref{thm:no_kernel_lifetime}. We give a cross-composition from the same variant of the problem with lifetime $2$. Let $(\gcal_1=(V_1,E_1,\lambda_1), \paraCompSize), (\gcal_2=(V_2,E_2,\lambda_2), \paraCompSize),\ldots, (\gcal_t=(V_t,E_t,\lambda_t), \paraCompSize)$ be valid instances of the same equivalence class such that $|V_i| = n$ and $|\ecal_i|=m$ for all $i\in [t]$, that is all instances have the same number of vertices and the same number temporal edges. First, if $\paraCompSize \le 2$, we can solve each instance in polynomial time by trying all subsets of vertices of size at most $\paraCompSize$ and checking reachability inside these components and output either a trivial yes-instance or a trivial no-instance. So, from now on, we assume $s\ge 3$.
        Let us arbitrarily order the temporal edges in each instance. Moreover, if the edges are undirected, let us, for each edge, pick an arbitrary direction on that edge and refer to one endpoint as the head and the other as the tail. 

        We now construct the instance $(\gcal = (V,E,\lambda),s)$ as follows. We start by taking the disjoint union of the footprints of the input instances, and for an edge $e\in E_i$, we let $\lambda(e) = \{\ell+(2m+2)(i-1)\mid \ell\in \lambda_i(e)\}$. That is, we shift the labels of edges, such that edges from $\gcal_1$ have labels $1$ and $2$, edges from $\gcal_2$ have labels $2m+3$ and $2m+4$, and edges from $\gcal_i$ have labels $(2m+2)(i-1)+1$ and $(2m+2)(i-1)+2$. Now let $(x_j^iy_j^i, t_j)$ be the $j$-the edge in $\ecal_i$, where $x_j^i$ is the head of the edge and $y_j^i$ the tail, and $(x_j^{i+1}y_j^{i+1}, t_j)$ be the $j$-the edge in $\ecal_{i+1}$ with head $x_j^{i+1}$ and tail $y_j^{i+1}$. We add to $\gcal$ a vertex $w_j^i$, an edge $y_j^iw_j^i$ with label $\lambda(y_j^iw_j^i)=(2m+2)(i-1)+2+j$ and an edge $w_j^ix_j^{i+1}$ with label $\lambda(w_j^ix_j^{i+1})=(2m+2)(i-1)+2+m+j$. This finishes the construction. 
        
        It follows rather straightforwardly from the construction that $\gcal$ is an $m$-path graph. Indeed, we can define $j$-th path to start in the head of the $j$-th edge in $\ecal_1$, and finish in the tail of the $j$-th path in $\ecal_t$ such that it contains all $j$-th edges in all $\ecal_i$'s, it passes the $j$-th edge in $\ecal_i$ at time step either $(2m+2)(i-1)+1$ or $(2m+2)(i-1)+2$, at time step $(2m+2)(i-1)+2+j$ it passes from the tail of the $j$-th edge in $\ecal_i$ to $w_j^i$ and at time step $(2m+2)(i-1)+2+m+j$ it passes from $w_j^i$ to the head of the $j$-th edge in $\ecal_{i+1}$. The treewidth is at most $2n$. This can be easily seen by taking a tree decomposition, which is a path on $m\cdot (t-1)$ vertices and letting the $((i-1)\cdot m + j)$-th bag be $V_i\cup V_{i+1}\cup \{w_j^i\}$ if $0<j<m$ and $V_{i-1}\cup V_{i}\cup \{w_m^{i-1}\}$ if $j=0$. Finally, the temporal degree is at most $2n+2m$, since we added at most $2m$ edges to each vertex in the original instances, and the vertices $w_j^i$, for $j\in [m]$ and $i\in [t-1]$, have temporal degree two. 

        It remains to show that $\gcal$ has a tcc of size $s$ if and only if there exists $i\in [t]$ such that $\gcal_i$ has a tcc of size $s$. First, let $X$ be a tcc in $\gcal_i$, it is easy to see that $X$ is a tcc in $\gcal$, as all temporal edges in $\gcal$ are shifted by the same constant $(2m+2)(i-1)$, so the difference between time steps of two edges in $\gcal_i$ remains the same. 
        
        On the other hand, let $v_i\in V_i$ and $v_j\in V_j$ be two vertices with $i < j$. By our construction, the latest time step on an edge incident on $v_i$ can be $(2m+2)(i-1)+1+m = (2m+2)i-m-1$ and the earliest edge incident on $v_j$ can be $(2m+2)(j-2)+2+m+1=(2m+2)(j-1)-m+1$. Since $i\le j-1$, it follows that the latest edge incident on $v_i$ is earlier than the earliest edge incident on $v_j$, and so there cannot be a temporal path from $v_j$ to $v_i$, and they cannot be in the same temporal component. Similarly, $w_j^i$ can only be in connected components of size $2$ with either of its neighbors. Note that if $\gcal$ is directed, it is easy to see that each $w_j^i$ is a strong component in the footprint, so the maximum size of a tcc that contains $w_j^i$ is $1$. Let us now consider the case when $\gcal$ is undirected. First, let us see what vertices $w_j^i$ can reach by starting with the edge to its neighbor $y_j^i$ in $V_i$ at time $(2m+2)(i-1)+2+j$. All edges in $E_i$ have labels $(2m+2)(i-1)+1$ or $(2m+2)(i-1)+2$, so it cannot continue on any of those edges. Similarly, all edges from some $w_{j'}^{i-1}$ have label at most $(2m+2)(i-2)+2+2m=(2m+2)(i-1)$. It can, however, use an edge to some $w_{j'}^i$ if $j' > j$ and $y_{j'}^i=y_j^i$. Now, by starting with the edge to its neighbor $x_j^{i+1}$ in $V_{i+1}$ at time $(2m+2)(i-1)+2+m+j$, it can potentially reach any vertex $w_{j'}^{i'}$ or any vertex in $V_{i'}$, for $j'\in [m]$ and $i'\ge i+1$. However, to reach a vertex $w_{j'}^i$ for $j'\in [m]$, it is necessary that $x_{j'}^i=x_j^i$ and $j' > j$, as the only way to reach $w_{j'}^i$ from $V_{i+1}$ is at time step $(2m+2)(i-1)+2+m+j'$ by the edge from $x_{j'}^{i+1}$. Hence, $w_j^i$ can only reach $w_{j'}^{i'}$ if either $i < i'$ or $i'=i$ and $j < j'$. Since this holds for an arbitrary $j\in [m]$ and $i\in [t-1]$, it follows that $w_j^i$ cannot be in a tcc with another $w_{j'}^{i'}$. Moreover, the only vertex in $\bigcup_{1\le p \le  i}V_p$ that $w_{j}^i$ can reach is $y_j^i$. By an analogous argument, we get that the only vertex in $\bigcup_{i+1\le p \le  t}V_p$ that can reach $w_j^i$ is $x_{j}^{i+1}$. Since $x_{j}^{i+1}$ cannot reach $y_{j}^i$, the largest component that can contain $w_{j}^i$ has size $2$. Since $s \ge 3$, it follows that if $X$ is a tcc of size at least $s$ in $\gcal$, it is fully contained in $\gcal_i$ for some $i\in [t]$. 
        
        For the case of the \openTCC, it remains to show that all connections are also achieved by a path fully in $\gcal_i$. Since we already argued that $w_j^i$, for some $j\in [m]$, can reach only a single vertex in $V_{i}$ and only by a direct edge at time step $(2m+2)(i-1)+2+j$, and $w_j^i$ cannot be reached before this time step, $w_j^i$ cannot be on any temporal path between two vertices in $V_i$. Similarly, for $w_{j}^{i-1}$ for $j\in [m]$, only way to enter $w_{j}^{i-1}$ from $V_i$ is by temporal edge edge at $(2m+2)(i-2)+2+m+j$, but only way to leave $w_{j}^{i-1}$ after arriving at this time is if the we have non-strict temporal graph and we take the same edge, so $w_{j}^{i-1}$ also cannot be on a temporal path between two vertices in $V_i$. Hence all temporal paths between vertices in $X$ are fully contained in $\gcal_i$, just with labels shifted by $(2m+2)(i-1)$.
    \end{proof}
    \fi

 \section{Conclusion}
    In this work, we extended the understanding of the parameterized complexity of temporal connected components on structured classes of graphs: temporal path and tree-like graphs. We observed that for graphs of treewidth~9 (both open and closed tccs) as well as for graphs of path number~6 (closed tccs), local gadgets with constrained structure suffice to render the structural information, obtained by bounding the parameters, useless. For open tccs parameterized by the temporal path number \tpn, however, there are strong structural implications that yield an \XP algorithm.

    The central question that remains open is whether \openTCC parameterized by \tpn is \FPT or \Wone-hard. 
    Because we can construct \kpathgraphs{k} with exponentially many maximal open tccs, our \XP algorithm which enumerates all of them cannot be improved. Hence, if the problem is fixed parameter tractable a different approach is required.
    One hope for tractability stems from the fact that almost-transitivity was sufficient for open tccs under the transitivity modulator parameter.
    Although \tpn is incomparable against this parameter, it creates strong partial-transitivity relationships that could be helpful.    

Instead of trying to resolve \openTCC on general \kpathgraph{k}, on may study the intermediate class of {\em pairwise monotone} path graphs. In these graphs there is no global ordering that all paths follow, but instead the order of the crossing points between any pair of paths must be in the same or reversed order. 
This additional structure might suffice to prove \FPT by \tpn for \openTCC.
Note that for \closedTCC, it is not too difficult to prove that our construction which establishes \paraNP-hardness under \tpn creates a pairwise monotone path graph.

Beyond these two immediate questions, there are several other directions that deserve further study.
Firstly, the parameterized complexity landscape of tccs is far from complete. There exist parameters more powerful/restrictive than treewidth, vertex cover number for example, that are still to be studied.
A different direction would be to study \openTCC and \closedTCC on random  \kpathgraphs{k}. Can randomness allow us to bypass intractability barriers?

Finally, a completely different direction is to see whether the temporal path number parameter can be useful in other algorithmic problems on temporal graphs. More generally, are there any temporal-structure parameters that can lead to tractability?

    



\bibliography{tcc}

\begin{thebibliography}{10}

\bibitem{arnborg_Easyproblems_1991}
Stefan Arnborg, Jens Lagergren, and Detlef Seese.
\newblock Easy problems for tree-decomposable graphs.
\newblock {\em Journal of Algorithms}, 12(2):308--340, 1991.
\newblock \href {https://doi.org/10.1016/0196-6774(91)90006-K} {\path{doi:10.1016/0196-6774(91)90006-K}}.

\bibitem{atamanchuk_SizeTemporal_2025}
Caelan Atamanchuk, Luc Devroye, and Gabor Lugosi.
\newblock On the {{Size}} of {{Temporal Cliques}} in {{Subcritical Random Temporal Graphs}}, 2025.
\newblock \href {https://doi.org/10.48550/arXiv.2404.04462} {\path{doi:10.48550/arXiv.2404.04462}}.

\bibitem{balev_TemporallyConnected_2023}
Stefan Balev, Yoann Pign{\'e}, Eric Sanlaville, and Jason Schoeters.
\newblock Temporally {{Connected Components}}.
\newblock {\em Theoretical Computer Science}, 1013:114757, 2024.
\newblock \href {https://doi.org/10.1016/j.tcs.2024.114757} {\path{doi:10.1016/j.tcs.2024.114757}}.

\bibitem{becker_GiantComponents_2023}
Ruben Becker, Arnaud Casteigts, Pierluigi Crescenzi, Bojana Kodric, Malte Renken, Michael Raskin, and Viktor Zamaraev.
\newblock Giant {{Components}} in {{Random Temporal Graphs}}.
\newblock In {\em Approximation, {{Randomization}}, and {{Combinatorial Optimization}}}, pages 29:1--29:17. Leibniz International Proceedings in Informatics (LIPIcs), 2023.
\newblock \href {https://doi.org/10.4230/LIPIcs.APPROX/RANDOM.2023.29} {\path{doi:10.4230/LIPIcs.APPROX/RANDOM.2023.29}}.

\bibitem{bentert_Efficientcomputation_2020}
Matthias Bentert, Anne-Sophie Himmel, Andr{\'e} Nichterlein, and Rolf Niedermeier.
\newblock Efficient computation of optimal temporal walks under waiting-time constraints.
\newblock {\em Applied Network Science}, 5(1):73, 2020.
\newblock \href {https://doi.org/10.1007/s41109-020-00311-0} {\path{doi:10.1007/s41109-020-00311-0}}.

\bibitem{berman_Vulnerabilityscheduled_1996}
Kenneth~A. Berman.
\newblock Vulnerability of scheduled networks and a generalization of {{Menger}}'s {{Theorem}}.
\newblock {\em Networks: An International Journal}, 28(3):125--134, 1996.
\newblock \href {https://doi.org/10.1002/(SICI)1097-0037(199610)28:3<125::AID-NET1>3.0.CO;2-P} {\path{doi:10.1002/(SICI)1097-0037(199610)28:3<125::AID-NET1>3.0.CO;2-P}}.

\bibitem{bhadra_ComputingMulticast_2002}
Sandeep Bhadra and Afonso Ferreira.
\newblock Computing {{Multicast Trees}} in {{Dynamic Networks Using Evolving Graphs}}.
\newblock Research Report RR-4531, INRIA, 2002.
\newblock URL: \url{https://inria.hal.science/inria-00072057v1}.

\bibitem{bhadra_ComplexityConnected_2003}
Sandeep Bhadra and Afonso Ferreira.
\newblock Complexity of {{Connected Components}} in {{Evolving Graphs}} and the {{Computation}} of {{Multicast Trees}} in {{Dynamic Networks}}.
\newblock In Samuel Pierre, Michel Barbeau, and Evangelos Kranakis, editors, {\em Ad-{{Hoc}}, {{Mobile}}, and {{Wireless Networks}}}, pages 259--270, Berlin, Heidelberg, 2003. Springer.
\newblock \href {https://doi.org/10.1007/978-3-540-39611-6_23} {\path{doi:10.1007/978-3-540-39611-6_23}}.

\bibitem{bodlaender_KernelizationLower_2014}
Hans~L. Bodlaender, Bart M.~P. Jansen, and Stefan Kratsch.
\newblock Kernelization {{Lower Bounds}} by {{Cross-Composition}}.
\newblock {\em SIAM Journal on Discrete Mathematics}, 28(1):277--305, 2014.
\newblock \href {https://doi.org/10.1137/120880240} {\path{doi:10.1137/120880240}}.

\bibitem{calzado_ConnectivityModel_2015}
Carlos~Gomez Calzado, Arnaud Casteigts, Alberto Lafuente, and Mikel Larrea.
\newblock A {{Connectivity Model}} for {{Agreement}} in {{Dynamic Systems}}.
\newblock In {\em Euro-{{Par}} 2015: {{Parallel Processing}}, {{Lecture Notes}} in {{Computer Science}}}, volume 9233, pages 333--345. Springer, Berlin, Heidelberg, 2015.
\newblock \href {https://doi.org/10.1007/978-3-662-48096-0_26} {\path{doi:10.1007/978-3-662-48096-0_26}}.

\bibitem{casteigts_FindingStructure_2018}
Arnaud Casteigts.
\newblock Finding {{Structure}} in {{Dynamic Networks}}, 2018.
\newblock \href {https://arxiv.org/abs/1807.07801} {\path{arXiv:1807.07801}}.

\bibitem{casteigts_SimpleStrict_2024}
Arnaud Casteigts, Timoth{\'e}e Corsini, and Writika Sarkar.
\newblock Simple, {{Strict}}, {{Proper}}, {{Happy}}: {{A Study}} of {{Reachability}} in {{Temporal Graphs}}.
\newblock {\em Theoretical Computer Science}, 991:114434, 2024.
\newblock \href {https://doi.org/10.1016/j.tcs.2024.114434} {\path{doi:10.1016/j.tcs.2024.114434}}.

\bibitem{casteigts_EfficientlyTesting_2015}
Arnaud Casteigts, Ralf Klasing, Yessin~M. Neggaz, and Joseph~G. Peters.
\newblock Efficiently {{Testing T-Interval Connectivity}} in {{Dynamic Graphs}}.
\newblock In Vangelis~Th. Paschos and Peter Widmayer, editors, {\em Algorithms and {{Complexity}}}, pages 89--100, Cham, 2015. Springer International Publishing.
\newblock \href {https://doi.org/10.1007/978-3-319-18173-8_6} {\path{doi:10.1007/978-3-319-18173-8_6}}.

\bibitem{casteigts_DistanceTransitivity_2024}
Arnaud Casteigts, Nils Morawietz, and Petra Wolf.
\newblock Distance to {{Transitivity}}: {{New Parameters}} for {{Taming Reachability}} in {{Temporal Graphs}}.
\newblock In Rastislav Kr{\'a}lovi{\v c} and Anton{\'i}n Ku{\v c}era, editors, {\em 49th {{International Symposium}} on {{Mathematical Foundations}} of {{Computer Science}}}, volume 306, pages 36:1--36:17, Dagstuhl, Germany, 2024. Schloss Dagstuhl -- Leibniz-Zentrum f{\"u}r Informatik.
\newblock \href {https://doi.org/10.4230/LIPIcs.MFCS.2024.36} {\path{doi:10.4230/LIPIcs.MFCS.2024.36}}.

\bibitem{costa_ComputingLarge_2023}
Isnard~Lopes Costa, Raul Lopes, Andrea Marino, and Ana Silva.
\newblock On {{Computing Large Temporal}} (unilateral) {{Connected Components}}.
\newblock {\em Journal of Computer and System Sciences}, 144:103548, 2024.
\newblock \href {https://doi.org/10.1016/j.jcss.2024.103548} {\path{doi:10.1016/j.jcss.2024.103548}}.

\bibitem{courcelle_GraphStructure_2012}
Bruno Courcelle and Joost Engelfriet.
\newblock {\em Graph {{Structure}} and {{Monadic Second-Order Logic}}, a {{Language Theoretic Approach}}}.
\newblock Cambridge University Press, Cambridge, 2012.

\bibitem{cygan_ParameterizedAlgorithms_2015}
Marek Cygan, Fedor~V. Fomin, {\L}ukasz Kowalik, Daniel Lokshtanov, Daniel Marx, Marcin Pilipczuk, and Micha{\l} Pilipczuk.
\newblock {\em Parameterized {{Algorithms}}}, volume~5.
\newblock Springer, 2015.
\newblock URL: \url{https://dl.acm.org/doi/abs/10.5555/28156610}.

\bibitem{deligkas_Beinginfluencer_2024}
Argyrios Deligkas, Michelle D{\"o}ring, Eduard Eiben, Tiger-Lily Goldsmith, and George Skretas.
\newblock Being an influencer is hard: {{The}} complexity of influence maximization in temporal graphs with a fixed source.
\newblock {\em Information and Computation}, 299:105171, 2024.
\newblock \href {https://doi.org/10.1016/j.ic.2024.105171} {\path{doi:10.1016/j.ic.2024.105171}}.

\bibitem{deligkas_HowMany_2025}
Argyrios Deligkas, Michelle D{\"o}ring, Eduard Eiben, Tiger-Lily Goldsmith, George Skretas, and Georg Tennigkeit.
\newblock How {{Many Lines}} to {{Paint}} the {{City}}: {{Exact Edge-Cover}} in {{Temporal Graphs}}.
\newblock volume~39, pages 26498--26506, 2025.
\newblock \href {https://doi.org/10.1609/aaai.v39i25.34850} {\path{doi:10.1609/aaai.v39i25.34850}}.

\bibitem{doring_SimpleStrict_2025}
Michelle D{\"o}ring.
\newblock Simple, {{Strict}}, {{Proper}}, and {{Directed}}: {{Comparing Reachability}} in {{Directed}} and {{Undirected Temporal Graphs}}.
\newblock In {\em The 36th {{International Symposium}} on {{Algorithms}} and {{Computation}}, {{ISAAC}} 2025}, Tainan, Taiwan, 2025.
\newblock \href {https://doi.org/10.48550/arXiv.2501.11697} {\path{doi:10.48550/arXiv.2501.11697}}.

\bibitem{doring_TemporalConnected_2025}
Michelle D{\"o}ring.
\newblock Temporal {{Connected Components}}, 2025.
\newblock URL: \url{https://www.notion.so/Temporal-Connected-Components-057cae72f648434e96fbde5705a544b4}.

\bibitem{downey_FundamentalsParameterized_2013}
Rodney~G. Downey and Michael~R. Fellows.
\newblock {\em Fundamentals of {{Parameterized Complexity}}}.
\newblock Texts in {{Computer Science}}. Springer London, London, 2013.
\newblock \href {https://doi.org/10.1007/978-1-4471-5559-1} {\path{doi:10.1007/978-1-4471-5559-1}}.

\bibitem{enright_FamiliesTractable_2025}
Jessica Enright, Samuel~D. Hand, Laura {Larios-Jones}, and Kitty Meeks.
\newblock Families of {{Tractable Problems}} with {{Respect}} to {{Vertex-Interval-Membership Width}} and {{Its Generalisations}}.
\newblock URL: \url{https://arxiv.org/abs/2505.15699}, \href {https://arxiv.org/abs/2505.15699} {\path{arXiv:2505.15699}}.

\bibitem{enright_DeletingEdges_2021}
Jessica Enright, Kitty Meeks, George~B. Mertzios, and Viktor Zamaraev.
\newblock Deleting {{Edges}} to {{Restrict}} the {{Size}} of an {{Epidemic}} in {{Temporal Networks}}.
\newblock {\em Journal of Computer and System Sciences}, 119:60--77, 2021.
\newblock \href {https://doi.org/10.1016/j.jcss.2021.01.007} {\path{doi:10.1016/j.jcss.2021.01.007}}.

\bibitem{fluschnik_TimeGoes_2020}
Till Fluschnik, Hendrik Molter, Rolf Niedermeier, Malte Renken, and Philipp Zschoche.
\newblock As {{Time Goes By}}: {{Reflections}} on {{Treewidth}} for {{Temporal Graphs}}.
\newblock In Fedor~V. Fomin, Stefan Kratsch, and Erik~Jan Van~Leeuwen, editors, {\em Treewidth, {{Kernels}}, and {{Algorithms}}}, volume 12160, pages 49--77. Springer International Publishing, Cham, 2020.
\newblock \href {https://doi.org/10.1007/978-3-030-42071-0_6} {\path{doi:10.1007/978-3-030-42071-0_6}}.

\bibitem{Fomin_Lokshtanov_Saurabh_Zehavi_2019}
Fedor~V. Fomin, Daniel Lokshtanov, Saket Saurabh, and Meirav Zehavi.
\newblock {\em Kernelization: Theory of Parameterized Preprocessing}.
\newblock Cambridge University Press, 2019.
\newblock \href {https://doi.org/10.1017/97811074151570} {\path{doi:10.1017/97811074151570}}.

\bibitem{haag_FeedbackEdge_2022}
Roman Haag, Hendrik Molter, Rolf Niedermeier, and Malte Renken.
\newblock Feedback {{Edge Sets}} in {{Temporal Graphs}}.
\newblock {\em Discrete Applied Mathematics}, 307:65--78, 2022.
\newblock \href {https://doi.org/10.1016/j.dam.2021.09.029} {\path{doi:10.1016/j.dam.2021.09.029}}.

\bibitem{halpern_ShortestPath_1974}
J.~Halpern and I.~Priess.
\newblock Shortest {{Path}} with {{Time Constraints}} on {{Movement}} and {{Parking}}.
\newblock {\em Networks}, 4(3):241--253, 1974.
\newblock \href {https://doi.org/10.1002/net.3230040304} {\path{doi:10.1002/net.3230040304}}.

\bibitem{hand_MakingLife_2022}
Samuel~D. Hand, Jessica Enright, and Kitty Meeks.
\newblock Making {{Life More Confusing}} for {{Firefighters}}.
\newblock In {\em 11th {{International Conference}} on {{Fun}} with {{Algorithms}}}, volume 226 of {\em Leibniz {{International Proceedings}} in {{Informatics}} ({{LIPIcs}})}, pages 15:1--15:15, Dagstuhl, Germany, 2022. Schloss Dagstuhl -- Leibniz-Zentrum f{\"u}r Informatik.
\newblock \href {https://doi.org/10.4230/LIPIcs.FUN.2022.15} {\path{doi:10.4230/LIPIcs.FUN.2022.15}}.

\bibitem{huyghues-despointes_Forteconnexite_2016}
Charles {Huyghues-Despointes}, Binh-Minh {Bui-Xuan}, and Cl{\'e}mence Magnien.
\newblock Forte -connexit{\'e} dans les flots de liens.
\newblock In {\em 18{\`e}mes {{Rencontres Francophones}} Sur Les {{Aspects Algorithmiques}} Des {{T{\'e}l{\'e}communications}}}, Bayonne, France, 2016.
\newblock URL: \url{https://hal.science/hal-01305128v1}.

\bibitem{jarry_ConnectivityEvolving_2004}
Aubin Jarry and Zvi Lotker.
\newblock Connectivity in {{Evolving Graph}} with {{Geometric Properties}}.
\newblock In {\em Proceedings of the 2004 Joint Workshop on {{Foundations}} of Mobile Computing}, {{DIALM-POMC}} '04, pages 24--30, New York, NY, USA, 2004. Association for Computing Machinery.
\newblock \href {https://doi.org/10.1145/1022630.1022635} {\path{doi:10.1145/1022630.1022635}}.

\bibitem{nicosia_GraphMetrics_2013}
Vincenzo Nicosia, John Tang, Cecilia Mascolo, Mirco Musolesi, Giovanni Russo, and Vito Latora.
\newblock Graph {{Metrics}} for {{Temporal Networks}}.
\newblock In Petter Holme and Jari Saram{\"a}ki, editors, {\em Temporal {{Networks}}}, pages 15--40. Springer Berlin Heidelberg, Berlin, Heidelberg, 2013.
\newblock \href {https://doi.org/10.1007/978-3-642-36461-7_2} {\path{doi:10.1007/978-3-642-36461-7_2}}.

\bibitem{nicosia_ComponentsTimeVarying_2012}
Vincenzo Nicosia, John Tang, Mirco Musolesi, Giovanni Russo, Cecilia Mascolo, and Vito Latora.
\newblock Components in {{Time-Varying Graphs}}.
\newblock {\em Chaos: An Interdisciplinary Journal of Nonlinear Science}, 22(2):023101, 2012.
\newblock \href {https://doi.org/10.1063/1.3697996} {\path{doi:10.1063/1.3697996}}.

\bibitem{rannou_StronglyConnected_2021}
L{\'e}o Rannou, Cl{\'e}mence Magnien, and Matthieu Latapy.
\newblock Strongly {{Connected Components}} in {{Stream Graphs}}: {{Computation}} and {{Experimentations}}.
\newblock In Rosa~M. Benito, Chantal Cherifi, Hocine Cherifi, Esteban Moro, Luis~Mateus Rocha, and Marta {Sales-Pardo}, editors, {\em Complex {{Networks}} \& {{Their Applications IX}}}, pages 568--580. Springer International Publishing, Cham, 2021.
\newblock \href {https://doi.org/10.1007/978-3-030-65347-7_47} {\path{doi:10.1007/978-3-030-65347-7_47}}.

\bibitem{sauer_DensityFamilies_1972}
N~Sauer.
\newblock On the {{Density}} of {{Families}} of {{Sets}}.
\newblock {\em Journal of Combinatorial Theory, Series A}, 13(1):145--147, 1972.
\newblock \href {https://doi.org/10.1016/0097-3165(72)90019-2} {\path{doi:10.1016/0097-3165(72)90019-2}}.

\bibitem{shelah_CombinatorialProblem_1972}
Saharon Shelah.
\newblock A {{Combinatorial Problem}}; {{Stability}} and {{Order}} for {{Models}} and {{Theories}} in {{Infinitary Languages}}.
\newblock {\em Pacific Journal of Mathematics}, 41(1):247--261, 1972.
\newblock \href {https://doi.org/10.2140/pjm.1972.41.247} {\path{doi:10.2140/pjm.1972.41.247}}.

\bibitem{tang_CharacterisingTemporal_2010}
John Tang, Mirco Musolesi, Cecilia Mascolo, and Vito Latora.
\newblock Characterising {{Temporal Distance}} and {{Reachability}} in {{Mobile}} and {{Online Social Networks}}.
\newblock {\em ACM SIGCOMM Computer Communication Review}, 40(1):118--124, 2010.
\newblock \href {https://doi.org/10.1145/1672308.1672329} {\path{doi:10.1145/1672308.1672329}}.

\bibitem{vapnik_UniformConvergence_2015}
V.~N. Vapnik and A.~{\relax Ya}. Chervonenkis.
\newblock On the {{Uniform Convergence}} of {{Relative Frequencies}} of {{Events}} to {{Their Probabilities}}.
\newblock In Vladimir Vovk, Harris Papadopoulos, and Alexander Gammerman, editors, {\em Measures of {{Complexity}}: {{Festschrift}} for {{Alexey Chervonenkis}}}, pages 11--30. Springer International Publishing, Cham, 2015.
\newblock \href {https://doi.org/10.1007/978-3-319-21852-6_3} {\path{doi:10.1007/978-3-319-21852-6_3}}.

\bibitem{vernet_studyconnectivity_2023}
Mathilde Vernet, Yoann Pign{\'e}, and {\'E}ric Sanlaville.
\newblock A study of connectivity on dynamic graphs: Computing persistent connected components.
\newblock {\em 4OR}, 21(2):205--233, 2023.
\newblock \href {https://doi.org/10.1007/s10288-022-00507-3} {\path{doi:10.1007/s10288-022-00507-3}}.

\bibitem{wu_Pathproblems_2014}
Huanhuan Wu, James Cheng, Silu Huang, Yiping Ke, Yi~Lu, and Yanyan Xu.
\newblock Path problems in temporal graphs.
\newblock {\em Proceedings of the VLDB Endowment}, 7(9):721--732, 2014.
\newblock \href {https://doi.org/10.14778/2732939.2732945} {\path{doi:10.14778/2732939.2732945}}.

\bibitem{xie_QueryingConnected_2023}
Haoxuan Xie, Yixiang Fang, Yuyang Xia, Wensheng Luo, and Chenhao Ma.
\newblock On {{Querying Connected Components}} in {{Large Temporal Graphs}}.
\newblock {\em Proceedings of the ACM on Management of Data}, 1(2):1--27, 2023.
\newblock \href {https://doi.org/10.1145/3589315} {\path{doi:10.1145/3589315}}.

\bibitem{xuan_Computingshortest_2002}
B.~Xuan, Afonso Ferreira, and Aubin Jarry.
\newblock Computing shortest, fastest, and foremost journeys in dynamic networks.
\newblock {\em International Journal of Foundations of Computer Science}, 14(02):267--285, 2003.
\newblock \href {https://doi.org/10.1142/S01290541030017280} {\path{doi:10.1142/S01290541030017280}}.

\bibitem{zschoche_ComplexityFinding_2020}
Philipp Zschoche, Till Fluschnik, Hendrik Molter, and Rolf Niedermeier.
\newblock The {{Complexity}} of {{Finding Small Separators}} in {{Temporal Graphs}}.
\newblock {\em Journal of Computer and System Sciences}, 107:72--92, 2020.
\newblock \href {https://doi.org/10.1016/j.jcss.2019.07.006} {\path{doi:10.1016/j.jcss.2019.07.006}}.

\end{thebibliography}

\end{document}